\title{Optimal Taxes in Atomic Congestion Games}
\author{Dario Paccagnan}
   \affiliation{%
   \institution{Imperial College London}   
   \department{Department of Computing}
   \city{London}
   \country{UK}
   }
\email{d.paccagnan@imperial.ac.uk}
\author{Rahul Chandan}
\affiliation{%
   \institution{University of California at Santa Barbara}   
   \department{Center for Control, Dynamical Systems and Computation}
   \city{Santa Barbara}
   \state{CA}
   \postcode{93105}
   \country{USA}
   }
\email{rchandan@ucsb.edu}
\author{Bryce L. Ferguson}
\affiliation{%
   \institution{University of California at Santa Barbara}   
   \department{Center for Control, Dynamical Systems and Computation}
   \city{Santa Barbara}
   \state{CA}
   \postcode{93105}
   \country{USA}
}
\email{blf@ucsb.edu}
\author{Jason R. Marden}
\affiliation{%
   \institution{University of California at Santa Barbara}   
   \department{Center for Control, Dynamical Systems and Computation}
   \city{Santa Barbara}
   \state{CA}
   \postcode{93105}
   \country{USA}
}
\email{jrmarden@ucsb.edu}
\keywords{
Congestion games, 
optimal taxation mechanisms,
price of anarchy, 
Nash equilibrium,
coarse correlated equilibrium,
selfish routing, 
approximation algorithms.
}
\thanks{This manuscript has been deposited to the arXiv on 22 Nov 2019. A preliminary version appeared in ``Proceedings of the 14th Workshop on the Economics of Networks, Systems and Computation'' (NetEcon '19), see \cite{DBLP:conf/netecon/ChandanPFM19}. This research was carried out when the first author was affiliated with the University of California at Santa Barbara. This work was supported by the Swiss National Science Foundation Grant P2EZP2\_181618, ONR Grant \#N00014-20-1-2359, AFOSR Grant \#FA9550-20-1-0054.
}
\newtheorem{example}{Example}
\newtheorem{lemma}{Lemma}
\newtheorem{theorem}{Theorem}
\newtheorem{corollary}{Corollary}
\newcommand{\poa}{\mathrm{PoA}}
\newcommand{\bu}{\bar{u}}
\newcommand{\cg}{\mathcal{G}}
\newcommand{\be}{\begin{equation}}
\newcommand{\ee}{\end{equation}}	
\newcommand{\actionset}{\mathcal{A}}
\newcommand{\taxmechanism}{T}
\newcommand{\mc}[1]{\mathcal{#1}}
\newcommand{\mb}[1]{\mathbb{#1}}
\newcommand{\resset}{\mc{E}}
\newcommand{\PNEcost}{{\tt{NECost}}}
\newcommand{\mincost}{{\tt{MinCost}}}
\newcommand{\SC}{{\rm{SC}}}
\renewcommand{\C}{{\rm{C}}}
\newcommand{\opt}[1]{{#1}^{\rm opt}}
\newcommand{\NE}[1]{{#1}^{\rm ne}}
\newcommand{\pig}[1]{{#1}^{\rm p}}
\pgfplotsset{compat=1.15}
\newcommand{\myf}{f^{\infty}}
\newcommand{\Thathat}{\bar{T}}
\renewcommand{\u}{u}
\renewcommand{\v}{v}
\newlength{\mynegspace}
\newcommand{\mybeta}{c_2}
\newcommand{\mygamma}{c_1}
\newcommand{\mynu}{\lambda}
\newcommand{\nquad}{\mkern-15mu}
\newcommand{\jRC}{u}
\newcommand{\ellRC}{v}
\renewcommand{\star}{\ast}
\newcommand{\q}{q}
\tikzset{cross/.style={cross out, draw, 
         minimum size=2*(#1-\pgflinewidth), 
         inner sep=0pt, outer sep=0pt}}
\definecolor{pnasblueback}{RGB}{205,217,235}
\begin{document}

\begin{abstract}
\vspace*{1mm}
\begin{center}
First version:  {\it November 2019}.\quad This version:  {\it March 2021.}
\end{center}
\vspace*{3.5mm}
{\bf Abstract.} How can we design mechanisms to promote efficient use of shared resources? Here, we answer this question in relation to the well-studied class of atomic congestion games, used to model a variety of problems, including traffic routing.
Within this context, a methodology for designing tolling mechanisms that minimize the system inefficiency (price of anarchy) exploiting solely local information is so far missing in spite of the scientific interest. 
In this manuscript we resolve this problem through a tractable linear programming formulation that applies to and beyond polynomial congestion games. When specializing our approach to the polynomial case, we obtain tight values for the optimal price of anarchy and corresponding tolls, uncovering an unexpected link with load balancing games.
We also derive optimal tolling mechanisms that are constant with the congestion level, generalizing the results of \cite{caragiannis2010taxes} to polynomial congestion games and beyond.
Finally, we apply our techniques to compute the efficiency of the marginal cost mechanism. Surprisingly, optimal tolling mechanism using only local information perform closely to existing mechanism that utilize global information \cite{BiloV19}, while the marginal cost mechanism, known to be optimal in the continuous-flow model, has lower efficiency than that encountered levying \emph{no} toll.
All results are tight for pure \mbox{Nash equilibria, and extend to coarse correlated equilibria.}
\end{abstract}

\maketitle

\section{Introduction}
Modern society is based on large-scale systems often at the service of end-users, e.g., transportation and communication networks. As the performance of such systems heavily depends on the interaction between the users' individual behaviour and the underlying infrastructure (e.g., drivers on a road-traffic network), the operation of such systems requires interdisciplinary considerations at the confluence between economics, engineering, and computer science.

A common issue arising in these settings is the performance degradation incurred when the users' individual objectives are not aligned to the ``greater good''. A prime example of how users' behaviour degrades the performance is provided by road-traffic routing: when drivers choose routes that minimize their individual travel time, the aggregate congestion could be much higher compared to that of a centrally-imposed routing. A fruitful paradigm to tackle this issue is to employ appropriately designed tolling mechanisms, as widely acknowledged in the economic and computer science literature \cite{morrison1986survey,bergendorff1997congestion,laffont2009theory}. 

Pursuing a similar line of research, this paper focuses on the design of tolling mechanisms that maximize the system efficiency associated with self-interested decision making in atomic \emph{congestion games}, i.e., that minimize the resulting \emph{price of anarchy} \cite{koutsoupias1999worst}. Within this context, we develop a methodology to compute the most efficient \emph{local} tolling mechanism, i.e., the most efficient mechanism whose tolls levied on each resource depend only on the local properties of that resource. We do so for both the congestion-aware and congestion-independent settings, whereby tolls have or do not have access to the current congestion levels. A summary of our results, including a comparison with the literature, can be found in \cref{tab:naware_vs_nagnostic}. Perhaps surprisingly, optimal local tolls deliver a price of anarchy close to that of existing tolls designed using global information \cite{BiloV19}.

\subsection{Congestion games, local and global mechanisms}
Congestion games were introduced in 1973 by Rosenthal  \cite{rosenthal1973class}, and since then have found applications in diverse fields, such as energy markets \cite{paccagnan2018nash}, machine scheduling \cite{suri2007selfish}, wireless data networks \cite{tekin2012atomic}, sensor allocation \cite{marden2013distributed}, network design \cite{anshelevich2008price}, and many more. While our results can be applied to a variety of problems, we consider traffic routing as our prime application.

In a congestion game, we are given a set of users $N=\{1,\dots,n\}$, and a set of resources $\mc{E}$. Each user can choose a subset of the set of resources which she intends to use. We list all feasible choices for user $i\in N$ in the set $\mc{A}_i\subseteq 2^{\mc{E}}$. The cost for using each resource $e\in\mc{E}$ depends only on the total number of users concurrently selecting that resource, and is denoted with $\ell_e :\mb{N}\rightarrow \mb{R}_{\ge0}$. Once all users have made a choice $a_i\in\mc{A}_i$, each user incurs a cost obtained by summing the costs of all resources she selected.  Finally, the system cost represents the cost incurred by \emph{all} users
\begin{equation}
\SC(a)=\sum_{i\in N}\sum_{e\in a_i} \ell_e(|a|_e),
\label{eq:systemcost}
\end{equation}
where $|a|_e$ is the number of users selecting resource $e$ in allocation $a=(a_1,\dots,a_n)$. We denote with $\mc{G}$ the set containing all congestion games with a maximum of $n$ agents, and where all resource costs $\{\ell_e\}_{e\in\mc{E}}$ belong to a common set of functions $\mc{L}$.

\paragraph{Local and global tolling mechanisms}
We assume users to be self-interested, and observe that self-interested decision making often results in a highly suboptimal system cost \cite{pigou}. Consequently,  there has been considerable interest in the application of tolling mechanisms to influence the resulting outcome \cite{morrison1986survey,bergendorff1997congestion,hearn1998solving,cole2006much,caragiannis2010taxes,BiloV19}. Formally, a tolling mechanism $T:G\times e\mapsto \tau_e$ is a map that associates an instance $G\in\mc{G}$ and a resource $e\in\mc{E}$ to the corresponding toll $\tau_e$. Here $\tau_e:\{1,\dots,n\}\rightarrow \mb{R}$ is a congestion-dependent toll, i.e., $\tau_e$ maps the number of users in resource $e$ to the corresponding toll. As a result, user $i\in N$ incurs a cost factoring both the cost of the resources and the tolls, i.e.,
\be
\C_i(a)=\sum_{e\in a_i} [\ell_e(|a|_e)+\tau_e(|a|_e)].
\label{eq:Ci}
\ee
Designing tolling mechanisms that utilize global information (such as knowledge of all resource costs, or knowledge of the feasible sets $\{\mc{A}_i\}_{i=1}^n$) is often difficult, as that might require the central planner to access private users information, in addition to the associated computational burden. Within the context of traffic routing (see \cref{ex:trafficroute} below), a global mechanism might produce a toll on edge $e$ that depends on the structure of the network, on the travel time over all edges, as well as on the exact origin and destination of every user. On the contrary, local tolling mechanisms require much less information, are scalable, accommodate resources that are dynamically added or removed, and are robust against a number of variations, e.g., the common scenario where drivers modify their destination. Therefore, a significant portion of the literature has focused on designing tolls that exploit only local information. Formally, we say that a tolling mechanism $T$ is \emph{local} if the toll on each resource only uses information on the cost function $\ell_e$ of the same resource $e$, and no other information.  If this is the case, we write $\tau_e = T(\ell_e)$ with slight abuse of notation. On the contrary, if the mechanism utilizes additional information on the instance, we say it is \emph{global}.

\begin{example}[traffic routing]
\label{ex:trafficroute}
Within the context of traffic routing, $\mc{E}$ represents the set of edges defining the underlying road network over which users wish to travel.
For this purpose, each user $i\in N$ can select any path connecting her origin to her destination node, thus producing a list of feasible paths $\mc{A}_i$. 
The travel time incurred by a user transiting on edge $e\in\mc{E}$ is captured by the function $\ell_e$, and depends only on the number of users traveling through that very edge. In this context, the function $\ell_e$ takes into consideration geometric properties of the edge, such as its length, the number of lanes and speed limit \cite{wardrop1952road}. The system cost in \eqref{eq:systemcost} represents the time spent on the network by \emph{all} users, whereas tolls are imposed on the edges to incentivize users in selecting paths \mbox{that minimize the total travel time \eqref{eq:systemcost}.}\end{example}

\subsection{Performance metrics}
The performance of a tolling mechanism is typically measured by the ratio between the system cost incurred at the worst-performing emergent allocation and the minimum system cost. As users are assumed to be self-interested, the emergent allocation is described by any of the following equilibrium notions: pure or mixed Nash equilibrium, correlated or coarse correlated equilibrium -- each being a superset of the previous \cite{roughgarden2015intrinsic}.\footnote{For a congestion game, existence of pure Nash equilibria (and thus of all other equilibrium notions mentioned above) is guaranteed even in the presence of tolls, due to the fact that the resulting game is potential.} When considering pure Nash equilibria, the performance of a mechanism $T$, referred to as the \emph{price of anarchy} \cite{koutsoupias1999worst}, is defined as
\begin{equation}
\poa(T) = \sup_{G\in\mc{G}} \frac{\PNEcost(G,T)}{\mincost(G)},
\label{eq:poadef}
\end{equation}
where $\mincost(G)$ is the minimum social cost for instance $G$ as defined in \eqref{eq:systemcost}, and $\PNEcost(G,T)$ denotes the highest social cost at a Nash equilibrium obtained when employing the mechanism $T$ on the game $G$. Similarly, it is possible to define the price of anarchy for mixed Nash, correlated and coarse correlated equilibria. While these different metrics need not be equal in general, they do \emph{coincide} within the setting of interest to this manuscript, as we will later clarify. Therefore, in the following, we will use $\poa(T)$ to refer to the efficiency values of any and all \mbox{these equilibrium classes.}

\subsection{Related work}
The interest in the design of tolls dates back to the early 1900s \cite{pigou}. Since then, a large body of literature in the areas of transportation, economics, and computer science has investigated this approach \cite{morrison1986survey,cole2006much,bergendorff1997congestion,hearn1998solving,sandholm2002evolutionary}. Designing tolling mechanisms that optimize the efficiency is particularly challenging in the context of (atomic) congestion games, as observed, e.g., by \cite{harks2019pricing}, in part due to the multiplicity of the equilibria. While most of the research \cite{awerbuch2005price,christodoulou2005price,aland2006exact, roughgarden2015intrinsic} has focused on providing efficiency bounds for given schemes or in the un-tolled case, much less is known regarding the \emph{design} question. Owing to the technical difficulties, only partial results are available for global tolling mechanisms \cite{caragiannis2010taxes, fotakis2008cost, BiloV19}, while results for local mechanisms are limited to affine \mbox{congestion games \cite{caragiannis2010taxes}.}

Reference \cite{caragiannis2010taxes} initiated the study of tolling mechanisms in the context of congestion games, restricting their attention to affine resource costs. Relative to this setting, they show how to compute a congestion-independent global toll yielding a (tight) price of anarchy of $2$ for mixed Nash equilibria, in addition to a congestion-independent local toll yielding a (tight) price of anarchy of $1+2/\sqrt{3}\approx 2.155$ for pure Nash equilibria. Our result pertaining to the design of optimal \emph{congestion-independent local} tolls generalizes this finding to any polynomial (and non polynomial) congestion game and holds tightly for both pure Nash and coarse correlated equilibria.

More recently \cite{BiloV19} studies congestion-dependent global tolls for pure Nash and coarse correlated equilibria, in addition to one-round walks from the empty state. Relative to unweighted congestion games, they derive tolling mechanisms yielding a price of anarchy for coarse correlated equilibria equal to $2$ for affine resource costs; and similarly for higher order polynomials. While the latter work provides a number of interesting insights (e.g., some closed form price of anarchy expressions), all the derived tolling schemes require global information such as network and user knowledge - an often impractical scenario. In contrast, our results on \emph{optimal local tolls} focuses on the design of optimal mechanisms that exploit local information only, and thus are more widely applicable. Even if utilizing much less information, optimal local mechanisms are still competitive. For example, congestion-dependent optimal local tolls yield a price of anarchy of $2.012$ for coarse correlated equilibria and affine congestion games (to be compared with a value of $2$ mentioned above).

Related works have also explored modifications of the setup considered here:  \cite{caragiannis2010taxes} also studies singleton congestion games, \cite{fotakis2008cost} focuses on symmetric network congestion games, \cite{milchtaich2020internalization} on altruistic congestion games, while \cite{HarksKKM15, HoeferOS08, JelinekKS14} consider tolling a subset of the resources. Preprint \cite{skopalik2020improving} focuses on the computation of approximate Nash equilibria in atomic congestion games. Therein, the authors design modified resource costs leveraging a methodology \mbox{similar to that developed in~\cite{chandan2019when, DBLP:conf/netecon/ChandanPFM19,paccagnan2019incentivizing}.}

Finally, we note that the design of tolling mechanisms is far better understood when the original congestion game is replaced by its continuous-flow approximation, as uniqueness of the Nash equilibrium is guaranteed. Limited to this setting, marginal cost tolls produce an equilibrium which is always optimal~\cite{Beckmann1956}. Within the atomic setting, our work demonstrates that marginal cost tolls do not improve - and instead significantly deteriorate - the resulting system efficiency.

\subsection{Preview of our contributions}
The core of our work is centred on designing optimal local tolling mechanisms and on deriving their corresponding prices of anarchy for various classes of congestion games. Our work develops on parallel directions, and the contributions include the following:
\begin{itemize}
	\item[i)] The design of optimal local tolling mechanisms;
	\item[ii)] The design of optimal congestion-independent local tolling mechanisms;
	\item[iii)] The study of marginal cost tolls and their inefficiency.
\end{itemize}
\cref{tab:naware_vs_nagnostic} highlights the impact of our results on congestion games with polynomial cost functions of varying degree, though our methodology extends further. The following paragraphs describe our contributions in further details, while supporting Python and Matlab code can be found in~\cite{Chandancode19}. 

\begin{table}[t!]%
\centering%
    \begin{tabular}{S[table-format=1]||S[table-format=5.3]|S[table-format=3]|S[table-format=3.3]|S[table-format=4.3]|S[table-format=2.2]}%
    {$d$}&{No toll}&{Global toll}&{Optimal local toll}&{Optimal constant local toll}&{Marginal cost toll}\\
        &{\cite{aland2006exact}}&{from \cite{caragiannis2010taxes,BiloV19}}&  {(this work)}&{(this work)}&{(this work)}\\
        \hline%
        1 & 2.50 & 2 & 2.012 &  2.15
        &3.00\\
        2 & 9.58 &  5 & 5.101 & 5.33 &13.00\\
        3 & 41.54 & 15 & 15.551 &  18.36 &57.36\\
        4 & 267.64 & 52 & 55.452 &  89.41 &391.00\\
        5 & 1513.57 & 203 & 220.401 &  469.74 &2124.21\\
        6 & 12345.20 & 877 & 967.533 &  3325.58 &21337.00\\
    \end{tabular}%
\vspace*{0.5mm}
    \caption{Price of anarchy values for congestion games with resource costs of degree at most $d$. All results are tight for pure Nash and also hold for coarse correlated equilibria. 
    The columns feature the price of anarchy with no tolls, with global tolls from \cite{caragiannis2010taxes,BiloV19}, with optimal local tolls, with optimal constant (i.e. congestion-independent) local tolls, and with marginal cost tolls, respectively. Columns four, five, and six, are composed of entirely novel results, except for the case of constant tolls with $d=1$, which recovers \cite{caragiannis2010taxes}. Note that i) optimal tolls relying only on local information perform closely to optimal tolls designed using global information, with a difference in performance below $1\%$ for $d=1$; ii) congestion-independent tolls result in a price of anarchy that is comparable to that obtained using congestion-aware local tolls for polynomials of low degree. The code used to generate this table can be downloaded from~\cite{Chandancode19}.}%
\label{tab:naware_vs_nagnostic}%
\vspace*{-7.5mm}
\end{table}

\paragraph{Optimal local tolls}
In \cref{thm:main-thm,thm:simplifiedLP-explicit-solution} we provide a methodology for computing optimal local tolling mechanisms for congestion games. The resulting price of anarchy values for the case of polynomial congestion games are presented in \cref{tab:naware_vs_nagnostic} (fourth column), where we provide a comparison with those derived in \cite{caragiannis2010taxes,BiloV19} (third column), which instead make use of global information, for example, by letting the tolling function on resource $e$ depend on \emph{all} the other resource costs. Perhaps unexpectedly, the efficiency of optimal tolls designed using only local information is almost identical to that of existing tolls designed using global information \cite{caragiannis2010taxes,BiloV19}. In addition to providing similar performances by means of less information, local tolls are robust against uncertain scenarios (e.g., modifications in the origin/destination pairs) and can be computed efficiently.

Extensive work has focused on quantifying the price of anarchy for load balancing games, that is congestion games where all action sets are singletons, i.e., $\mc{A}_i\subseteq\mc{E}$. Within this setting, and when all resource costs are affine and identical, the price of anarchy is $\approx2.012$ \cite{suri2007selfish, CaragiannisFKKM11}. Our results connect with this line of work, demonstrating that optimally tolled affine congestion games have a price of anarchy  matching this value, and are tight already within this class. Stated differently, the price of anarchy of un-tolled and optimally tolled affine load balancing games with identical resources is the same. We believe such statement holds true more generally.

\paragraph{Optimal congestion-independent local tolls}
Our methodology can also be exploited to derive optimal local mechanisms under more stringent structural constraints. One such constraint, studied in numerous settings, consists in the use of congestion-independent mechanisms, which are attractive because of their simplicity. A linear program to compute optimal congestion-independent local mechanisms is presented in \cref{thm:congestion-indep}, while the corresponding optimal prices of anarchy for the case of polynomial congestion games are displayed in \cref{tab:naware_vs_nagnostic} (fifth column) and derived in \cref{cor:constanttollsd=2} as well as \cref{sec:congestion-indep}. All the results are novel, except for the case of $d=1$, which recovers \cite{caragiannis2010taxes}. We observe that the performance of congestion-independent mechanisms is comparable with that of congestion-aware mechanisms for polynomials of low degree ($d\le 3$), and still a good improvement over the un-tolled setup for high degree polynomials. In these cases, congestion-independent mechanisms are not only robust and simple to implement, but also relatively efficient.

\paragraph{Marginal cost tolls are worse than no tolls}
In non-atomic congestion games, any Nash equilibrium resulting from the application of the marginal contribution mechanism is optimal, i.e., it has a price of anarchy equal to one. \cref{cor:marginalcost} shows how to utilize our approach to compute the efficiency of the marginal cost mechanism in the \emph{atomic} setup. The resulting values of the price of anarchy are presented in the last column of Table~\ref{tab:naware_vs_nagnostic} for polynomial congestion games. While the marginal cost mechanism ensures that a Nash equilibrium is optimal (i.e., its price of stability is one), utilizing the marginal cost mechanism on the atomic model yields a price of anarchy that is \emph{worse} than that experienced levying no toll at all (compare the second and last column in Table~\ref{tab:naware_vs_nagnostic}). In other words, the design principle derived from the continuous-flow model does not carry over to the original atomic setup. This phenomenon manifests itself already in very simple settings, as we demonstrate in \cref{example:tolledpigouvian}.
We conclude by observing that our result differs significantly from that in \cite{meir2016marginal}, where the authors show that, as the number of agents grow large, marginal cost tolls become optimal. This difference stems from the fact that, in \cite{meir2016marginal}, the network structure is \emph{fixed} as the number of agents grows, whereas the worst case instance in our setting is a function of the number of agents.

\subsection{Techniques and high-level ideas}
Underlying the developments presented above are a number of technical results, which stem from the observation that, in the majority of the existing literature, the set $\mc{L}$ contains all resource costs of the form $\ell(x)=\sum_{j=1}^m\alpha_jb_j(x)$ with $\alpha_j\ge0$, and given basis functions $\{b_1, \dots,b_m\}$. This describes the fact that each resource cost featured in the corresponding game belongs to a known class of functions. For example, in polynomial congestion games of maximum degree $d$, each resource is associated to a cost of the form $\alpha_1+\alpha_2 x+\dots +\alpha_{d+1} x^d$, corresponding to the choice of basis functions $\{1,\dots,x^d\}$. More generally, the decomposition $\ell(x)=\sum_{j=1}^m\alpha_jb_j(x)$ allows us to leverage a common framework to study different classes of problems not limited to polynomial congestion~games.
  
  In this context, we first show in \cref{thm:main-thm} that an optimal local tolling mechanisms is a linear map from the set of resource costs to the set of tolls. More precisely, we show that, for every resource cost $\ell(x)=\sum_{j=1}^m \alpha_j b_j(x)$, there exists an optimal local mechanism satisfying $\opt{T}(\ell)=\opt{T}(\sum_{j=1}^m \alpha_jb_j)=\sum_{j=1}^m \alpha_j \opt{T}(b_j)$, where the mechanism is obtained as a linear combination of $\opt{T}(b_j)$, with the \emph{same} coefficients $\alpha_j$ used to define $\ell$.\footnote{As word of caution, we  remark that linearity of the optimal mechanism in the sense clarified above does \emph{not} mean that the corresponding tolls are linear in the congestion level, i.e., does \emph{not} mean that $\tau_e(x)=a_ex+b_e$ for some $a_e,b_e$.} It is worth noting that this first result allows for a decoupling argument, whereby an optimal tolling function $\opt{T}(b_j)$ can be  separately computed for each of the basis $b_j$. 
The key idea underpinning the result on linearity of optimal tolls lies in observing that any congestion game utilizing resource cost functions with coefficients $\alpha_j\in\mb{R}_{>0}$ and a possibly non-linear tolling mechanism $T$, can be mapped to a corresponding congestion game where i) all coefficients $\alpha_j$ are identical to one, ii) only the linear part of the tolling mechanism $T$ is used, and iii) the price of anarchy is identical to that of the original game (as the number of resources grows). 
Complementary to this, our second result in \cref{thm:main-thm} reduces the problem of designing optimal basis tolls $\{\opt{T}(b_j)\}_{j=1}^m$ to a polynomially solvable linear program that also returns the tight value of the optimal price of anarchy. We do so by building upon the results in \cite{chandan19optimaljournal, paccagnan2018distributed}, which allow us to determine the performance of a tolling mechanism through the solution of a linear program (see \cref{linprog:characterize_poa}), in a similar spirit to \cite{BiloV19}, although with a provably tight characterization for \emph{any} number of agents and tolling function. We exploit this result and construct a polynomially-sized linear program that, for a given basis $b_j$, \mbox{searches over all linear tolls to find~$\opt{T}(b_j)$.}

When the basis functions are convex and increasing (e.g., in the well-studied polynomial case), we are able to explicitly solve the linear program and provide an analytic expression for the optimal tolling function, as well as a semi-analytic expression for optimal price of anarchy (\cref{thm:simplifiedLP-explicit-solution}). The fundamental idea consists in showing that the set of active constraints at the solution gives rise to a telescopic recursion, whereby the optimal toll to be levied when $u+1$ agents are selecting a resource can be written as a function of the optimal toll to be levied when only $u$ agents are present. This is the most technical part of the manuscript, and the expression of the optimal price of anarchy reveals an unexpected connection with that for un-tolled load balancing games on identical machines \cite{suri2007selfish, CaragiannisFKKM11}.
While the sizes of the linear programs appearing in \cref{thm:main-thm,thm:simplifiedLP-explicit-solution} grow (polynomially) with the number of agents $n$, in \cref{sec:largen} we show how to design optimal tolling mechanisms that apply to any $n$ (possibly infinite). Our approach consists in two steps: we first solve a linear program of finite dimension, and then extend its solution to arbitrary $n$.

Congestion-independent optimal local mechanisms as well as the efficiency of the marginal cost mechanism can also be  computed through linear programs (\cref{{thm:congestion-indep}} and \cref{cor:marginalcost}). In the first case, admissible tolls basis $\{\opt{T}(b_j)\}_{j=1}^m$ are constrained to be constant with the congestion, while in the second case the expression for marginal cost tolls is substituted in the program. We provide analytical solutions to these two programs for the case of polynomial congestion games.

\subsection{Organization}
In \cref{sec:optimal-tolls} we derive linear programs to compute optimal local tolling mechanisms. We also provide optimal price of anarchy values for polynomial congestion games. In \cref{sec:explicit-expression} we obtain an explicit solution to these programs that applies when resource costs are convex and increasing.
\cref{sec:largen} generalizes the previous results to arbitrarily large number of agents. In \cref{sec:congestion-indep} and \cref{sec:pigouvtolls} we derive congestion independent tolling mechanism and evaluate the efficiency of the marginal cost mechanism. In these sections we also specialize \mbox{the results to the polynomial case.}


\section{Optimal Tolling Mechanisms}
\label{sec:optimal-tolls}
In this section we develop a methodology to compute optimal local tolling mechanisms through the solution of tractable linear programs. To ease the notation, we introduce the set of integer triplets $\mc{I}=\{(x,y,z)\in\mb{Z}_{\ge0}^3~\text{s.t.}~1\le x+y+z\le n~\text{and either}~xyz=0~\text{or}~x+y+z=n\}$, for given~$n\in\mb{N}$.
\begin{theorem}
\label{thm:main-thm}
A local mechanism minimizing the price of anarchy over congestion games with $n$ agents, resource costs $\ell(x)=\sum_{j=1}^m\alpha_j b_j(x)$, $\alpha_j\ge0$, and basis functions $\{b_1,\dots,b_m\}$ is given by
	\be
	\label{eq:optimaltolls_expression}
    \opt{T}(\ell) = \sum_{j = 1}^{m} \alpha_j\cdot \opt{\tau}_j, 
    \quad\text{where}~~\opt{\tau}_j:\{1,\dots,n\}\rightarrow \mb{R}, 
    \quad\opt{\tau}_j(x) = \opt{f}_{j}(x) - b_j(x)~~
    \ee
and $\opt{\rho}_j\in\mb{R}$, $\opt{f}_{j}:\{1,\dots,n\}\rightarrow\mb{R}$ solve the following linear programs (one per each $b_j$)
	\begin{maxi}
	{{\scriptstyle f \in \mathbb{R}^n,\,\rho \in \mathbb{R}}}
	{\nquad\rho} {\label{eq:mainLPopt}}{}
	\addConstraint{\!\!\!\!\!\!\!\!\!\!\!b_j(x+z)(x+z)-\rho b_j(x+y)(x+y)+f(x+y)y-f(x+y+1)z}{\ge 0\quad}{\forall\,(x,y,z) \in \mc{I},}
	\end{maxi}
where we define $b_j(0)=f(0)=f(n+1)=0$. Correspondingly, $\poa(\opt{T})=\max_{j}\{1/\opt{\rho}_j\}$.\footnote{If we require tolls to be non-negative, an optimal mechanism is as in \eqref{eq:optimaltolls_expression}, where we set $\opt{\tau}_j(x) = \opt{f}_{j}(x)\cdot \opt{\poa} - b_j(x)$.} These results are tight for pure Nash equilibria, and extend to coarse correlated equilibria.
\end{theorem}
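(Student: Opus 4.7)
My plan is to reduce the design problem to a per-basis linear program by combining a smoothness-style PoA characterization with a decoupling argument that exploits the linearity of resource costs in the basis $\{b_1,\dots,b_m\}$. I would proceed in three stages: (i) write down a tight primal smoothness LP that upper-bounds the PoA of any given local mechanism in terms of the effective per-agent cost $f_e(x) := \ell_e(x)+\tau_e(x)$; (ii) specialize this LP to $\ell_e = b_j$ to obtain \eqref{eq:mainLPopt} and the basis-level optima $(\opt{f}_j,\opt{\rho}_j)$; (iii) prove both directions of a linearity lemma stating that the PoA of an optimally designed local mechanism equals $\max_j 1/\opt{\rho}_j$ and is attained by the linear combination $\opt{T}(\ell) = \sum_j \alpha_j \opt{\tau}_j$ with $\opt{\tau}_j = \opt{f}_j - b_j$.

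For step (i), I would invoke the primal-dual methodology of \cite{chandan19optimaljournal, paccagnan2018distributed}, which shows that the PoA of a tolled congestion game is bounded by $1/\rho$ whenever the per-resource smoothness inequality $\ell_e(x+z)(x+z) - \rho\,\ell_e(x+y)(x+y) + f_e(x+y)\,y - f_e(x+y+1)\,z \ge 0$ holds for every $(x,y,z)\in\mc{I}$. Here $(x,y,z)$ enumerates, at a generic resource, the counts of agents common to NE and OPT, present only at NE, and present only at OPT; the restriction in $\mc{I}$ (either $xyz=0$ or $x+y+z=n$) captures exactly the admissible overlap patterns on a single resource. The inequality arises from rearranging the Nash best-response condition summed over agents against the PoA identity, with $f_e$ capturing what each agent actually minimizes and $\ell_e$ what enters the social cost. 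Substituting $\ell_e = b_j$ and $f_e = f$ recovers \eqref{eq:mainLPopt} verbatim. For the upper-bound half of (iii), I then observe that when $\ell_e = \sum_j \alpha_{e,j} b_j$ with $\alpha_{e,j}\ge 0$ and the mechanism applies $\tau_e = \sum_j \alpha_{e,j} \opt{\tau}_j$, the per-resource inequality becomes a nonnegative linear combination of the single-basis inequalities at $\rho = \opt{\rho}_j$, so it holds jointly at $\rho = \min_j \opt{\rho}_j$; summing over resources delivers $\poa(\opt{T}) \le \max_j 1/\opt{\rho}_j$.

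The main obstacle I foresee is the matching lower bound: ruling out that some nonlinear local mechanism $T$ could beat this value. My strategy, echoing the reduction sketched in the introduction, is to show that any local mechanism --- even one depending nonlinearly on the basis coefficients --- can be replaced, without changing the worst-case PoA, by a mechanism that is linear in $\{\alpha_{e,j}\}$. Concretely, I would replace each resource $e$ with cost $\sum_j \alpha_{e,j} b_j$ by $m$ parallel shadow resources, one per basis with cost $\alpha_{e,j} b_j$, constrained to be used jointly; on the shadow resources only the linear restriction of $T$ in $\alpha$ matters, and homogeneity of \eqref{eq:mainLPopt} in $\alpha_{e,j}$ permits rescaling these coefficients to $1$ while preserving the PoA. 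Combined with the worst-case instance attaining the binding basis LP $j^\star$ (read off from the dual of \eqref{eq:mainLPopt} by the standard dual-instance construction), this produces a game family on which no local $T$ can achieve a ratio below $\max_j 1/\opt{\rho}_j$, matching the upper bound. The CCE extension is then essentially free: the smoothness derivation only uses the Nash inequality in expectation over any correlated deviation distribution, so the LP upper bound transfers verbatim, and pure-NE tightness forces equality across the whole equilibrium hierarchy stated in the theorem.
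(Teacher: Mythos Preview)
Your overall architecture matches the paper's proof: a smoothness-style LP from \cite{chandan19optimaljournal,paccagnan2018distributed} for the upper bound, a linearity reduction plus dual-instance construction for the matching lower bound, and the smoothness argument carrying the bound up to coarse correlated equilibria. Parts (i), (ii), the upper-bound half of (iii), and the CCE extension are fine and essentially identical to what the paper does.

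The gap is in your linearity reduction. Splitting a resource with cost $\sum_j \alpha_{e,j} b_j$ into parallel shadow resources with costs $\alpha_{e,j} b_j$ does \emph{not} force a nonlinear local mechanism to behave linearly: on a shadow resource with cost $\alpha_{e,j} b_j$ the mechanism returns $T(\alpha_{e,j} b_j)$, which for nonlinear $T$ is not $\alpha_{e,j}\, T(b_j)$. Your appeal to ``homogeneity of \eqref{eq:mainLPopt} in $\alpha_{e,j}$'' does not repair this, since that LP has no $\alpha$-dependence, and rescaling different $\alpha_{e,j}$ to $1$ independently is not a PoA-preserving operation on the game. As written, the argument never produces a game on which $T$ and its linearization coincide.

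The paper runs the reduction in the opposite direction. First restrict to the subclass $\mathcal{G}_b\subset\mathcal{G}$ in which every resource cost is a single basis function $b_j$; on $\mathcal{G}_b$ any mechanism $T$ and its linearization $\bar T(\sum_j\alpha_j b_j):=\sum_j\alpha_j T(b_j)$ coincide trivially, since both return $T(b_j)$ on a pure-basis resource. The nontrivial step (the paper's \cref{lem:resource_decomposition}) is then to show that for the \emph{linear} mechanism $\bar T$ the worst case over $\mathcal{G}_b$ already equals the worst case over all of $\mathcal{G}$: integer coefficients are handled by replacing a resource with cost $k\,b_j$ by $k$ parallel copies of $b_j$, rationals by clearing denominators, and reals by approximation. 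Chaining these gives $\poa(T)\ge \sup_{\mathcal{G}_b}\mathrm{ratio}(\cdot,T)=\sup_{\mathcal{G}_b}\mathrm{ratio}(\cdot,\bar T)=\poa(\bar T)$, so no nonlinear $T$ can beat the best linear one. Your shadow-resource idea is the right intuition for the copy-decomposition step, but it applies to $\bar T$, not to $T$; the order of ``restrict so that $T=\bar T$'' and ``extend back via decomposition'' is what makes the argument go through.
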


\begin{figure}[h!]
\centering
\includegraphics[width=0.85\linewidth]{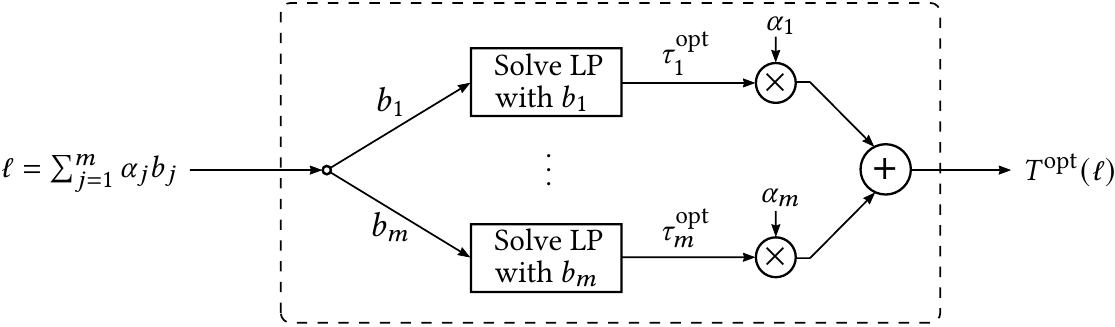}
\caption{Graphical representation of the main result on the design of optimal tolls. The input consists of a resource cost $\ell(x)$ expressed as a combination of basis $b_j(x)$ with coefficients $\alpha_j$. For each basis, we compute the associated optimal toll $\opt{\tau}_j(x) = \opt{f}_j(x)-b_j(x)$ by solving the linear program (LP) appearing in \eqref{eq:mainLPopt}. The resulting optimal toll is obtained as the linear combination of $\opt{\tau}_j(x)$ with the same coefficients $\alpha_j$. The quantities $\opt{\tau}_j(x)$ can be precomputed and stored in a library, offloading the solution of the linear programs.}
\label{fig:structureoptimaltolls}
\end{figure}

The above statement contains two fundamental results. The first part of the statement shows that an optimal tolling mechanism applied to the function $\ell(x)=\sum_{j=1}^m \alpha_j b_j(x)$ can be obtained as the linear combination of $\opt{\tau}_j(x)$, with the \emph{same} coefficients $\alpha_j$ used to define $\ell$. Complementary to this, the second part of the statement provides a practical technique to compute $\opt{\tau}_j(x)$ for each of the basis $b_j(x)$ as the solution of a tractable linear program. A graphical representation of this process is included in \cref{fig:structureoptimaltolls}, while Python/Matlab code to design optimal tolls can be found in~\cite{Chandancode19}.

We solved the latter linear programs for $n=100$ and polynomials of maximum degree $1\le d\le6$. The corresponding results are displayed in \cref{tab:naware_vs_nagnostic}, while \cref{sec:largen} shows that these results hold identically for arbitrarily large $n$.
In the case of $d=1$, the optimal price of anarchy is approximately $2.012$, matching that of un-tolled load balancing games on identical machines \cite{suri2007selfish, CaragiannisFKKM11}. We observe that, in this restricted setting, the price of anarchy cannot be improved \emph{at all} through local tolling mechanisms. In fact, no matter what non-negative tolling mechanism we are given, we can always construct a load balancing game on identical machines with a price of anarchy no lower than~$2.012$.\footnote{To do so, it is sufficient to utilize the instance in \cite[Thm 3.4]{suri2007selfish}, where the the resource cost $x$ used therein is replaced with $x+\tau(x)$. The Nash equilibrium and the optimal allocation will remain unchanged, yielding the same price of anarchy value.}

We conclude observing that the decomposition of resource costs as linear combination of basis functions is, strictly speaking, not required for \cref{thm:main-thm} to hold. Nevertheless, pursuing this approach would require to solve a linear program for each function in $\mc{L}$, a task that becomes daunting when $\mc{L}$ contains infinitely many functions, e.g., in the case of polynomial congestion games. In this case, \cref{thm:main-thm} allows to compute optimal tolls by solving \mbox{\emph{finitely} many linear programs.}

\begin{proof}
We divide the proof in two parts to ease the exposition.
\paragraph{Part 1.} We show that any local mechanism minimizing the price of anarchy over all linear local mechanisms, does so also over all linear and non-linear local mechanisms. We let $\opt\taxmechanism$ be a mechanism that minimizes the price of anarchy over all linear local mechanisms, i.e., \mbox{over all $T$ satisfying}
\[
T\left(\sum_{j=1}^m\alpha_j b_j\right)=\sum_{j=1}^m\alpha_j T(b_j),
\]
for all $\alpha_j\ge 0$. We intend to show that $\poa(\opt\taxmechanism) \le \poa(\taxmechanism)$ for any possible $T$ (linear or non-linear). Towards this goal, assume, for a contradiction, that there exists a tolling mechanism $\hat\taxmechanism$ such that 
\be
\poa(\opt\taxmechanism) > \poa(\hat\taxmechanism).
\label{eq:assumptioncontrad}
\ee
Let $\cg_b$ be the class of games in which any resource $e$ can only utilize a resource cost $\ell_e \in \{b_1, \dots, b_m\}$. Since $\cg_b \subset \cg$, we have
\be
    \poa(\hat\taxmechanism) \geq\> \sup_{G \in \cg_b} \frac{\PNEcost(G,\hat\taxmechanism)}{\mincost(G)}.
    \label{eq:poa_on_a_subset}
\ee
Additionally, let $\cg(\mathbb{Z}_{\ge0}) \subset \cg$ be the class of games with $\alpha_j \in \mathbb{Z}_{\ge0}$ for all $j \in \{1, \dots, m\}$, for all resources in $\resset$. Construct the mechanism $\Thathat$ by ``linearizing'' the mechanism $\hat{T}$, i.e., as 
\[
\Thathat(\ell)=\Thathat\left(\sum_{j=1}^{m} \alpha_jb_j\right)=\sum_{j=1}^{m} \alpha_j\hat\taxmechanism(b_j).
\]
We observe that the efficiency of any instance $G\in\mc{G}_b$ to which the tolling mechanism $\hat{T}$ is applied, coincides with that of an instance $G\in\mc{G}(\mathbb{Z}_{\ge0})$ to which $\Thathat$ is applied, and vice-versa. Thus, 
\be
\sup_{G \in \cg_b} \frac{\PNEcost(G,\hat\taxmechanism)}{\mincost(G)}=\> \sup_{G \in \cg(\mathbb{Z}_{\ge0})} 
   \frac{\PNEcost\left(G,\Thathat\right)}{\mincost(G)} = \poa(\Thathat),\label{eq:natural_numbers}
\ee
where the last equality holds due to \cref{lem:resource_decomposition} in \cref{subsec:appendix-optimal-toll}. Putting together \cref{eq:assumptioncontrad,eq:poa_on_a_subset,eq:natural_numbers} gives
\be
\poa(\opt\taxmechanism)>\poa(\Thathat).
\label{eq:togethercontrad}
\ee
Since $\opt\taxmechanism$ minimizes the price of anarchy over all linear mechanisms, and since $\Thathat$ is linear by construction, it must be $\poa(\opt\taxmechanism)\le\poa(\Thathat)$, a contradiction of \eqref{eq:togethercontrad}. Thus, $\opt\taxmechanism$ minimizes the price of anarchy over any mechanism.

\paragraph{Part 2.} We will derive a linear program to design optimal linear mechanisms. Putting this together with the claim in Part 1 will conclude the proof. Towards this goal, we will prove that any mechanism of the form 
\be
\label{eq:moregeneral}
T(\ell)=\sum_{j=1}^m \alpha_j\opt{\tau}_j\quad\text{with}\quad\opt{\tau}_j(x) = \mynu\cdot\opt{f}_j(x)-b_j(x)
\ee
is optimal, regardless of the value of $\mynu\in\mathbb{R}_{>0}$. While this is slightly more general than needed, setting $\mynu=1$ will give the first claim. Additionally, setting $\mynu=\opt\poa$ will give the second claim as this choice will ensure non-negativity of the tolls.

Before turning to the proof, we recall a result from \cite{chandan19optimaljournal} that allows us to compute the price of anarchy for \emph{given} linear tolling mechanism $\taxmechanism(\ell)=\sum_{j=1}^m\alpha_j\tau_j$. Upon defining $f_j(x) = b_j(x) + \tau_j(x)$ for all $1 \leq x \leq n$ and $j \in \{1, \dots, m\}$, the authors show that the price of anarchy of $T$ computed over congestion games $\mc{G}$ is identical for pure Nash and coarse correlated equilibria and is given by $\poa(\taxmechanism) = 1/\opt{\rho}$, where $\opt{\rho}$ is the value of the following program
	\begin{maxi}
	{{\scriptstyle \rho \in \mathbb{R}, \nu \in \mathbb{R}_{\geq 0}}}
	{\nquad\rho} {\label{linprog:characterize_poa}}{}
	\addConstraint{\!\!\!\!\!\!\!\!\!\!\!\!b_j(x+z)(x+z) - \rho b_j(x+y)(x+y) + \nu [f_j(x+y)y - f_j(x+y+1)z]}{\ge 0\quad}{\forall\,(x,y,z) \in \mc{I},}
	\addConstraint{}{}{\forall j\in\{1,\dots,m\},}
	\end{maxi}
We also remark that, when all functions $\{f_j\}_{j=1}^m$ are non-decreasing, it is sufficient to only consider a reduced set of constraints, following a similar argument to that in \cite[Cor. 1]{paccagnan2018distributed}. In this case, the linear program simplifies to
	\begin{maxi}
	{{\scriptstyle \rho \in \mathbb{R}, \nu \in \mathbb{R}_{\geq 0}}}
	{\rho} {\label{eq:simplifiedLP-compute}}{}
	\addConstraint{b_j(\v)\v - \rho b_j(\u)\u + \nu [f_j(\u)\u - f_j(\u+1)\v]}{\ge 0}
	\addConstraint{}{}{\hspace*{-40mm}\forall u,v\in\{0,\dots,n\}\quad u+\v \leq n,\quad  \forall j \in \{1, \dots, m\},}
	\addConstraint{b_j(\v)\v - \rho b_j(\u)\u + \nu [f_j(\u)(n-\v) - f_j(\u+1)(n-\u)]}{\ge 0}
	\addConstraint{}{}{\hspace*{-40mm}\forall u,v\in\{0,\dots,n\}\quad u+\v > n,\quad  \forall j \in \{1, \dots, m\}.}
	\end{maxi}
We now leverage \eqref{linprog:characterize_poa} to prove that any mechanism in \eqref{eq:moregeneral} is optimal, as required. Towards this goal, we begin by observing that the optimal price of anarchy obtained when the resource costs are generated using all the basis functions $\{b_1,\dots,b_m\}$ is no smaller than the optimal price of anarchy obtained when the resource costs are generated using a single basis function $\{b_j\}$ at a time (and therefore is no smaller than the highest of these optimal price of anarchy values). This follows readily since the former class of games is a superset of the latter. Additionally, observe that a set of tolls minimizing the price of anarchy over the games generated using a single basis function $\{b_j\}$ is precisely that in \eqref{eq:moregeneral}. This is because minimizing the price of anarchy amounts to designing $f_j$ to maximize $\rho$ in \eqref{linprog:characterize_poa}, i.e., to solving the following program
\begin{maxi*}
	{{\scriptstyle f\in\mb{R}^n}}
	{\nquad\max_{\rho \in \mathbb{R}, \nu \in \mathbb{R}_{\geq 0}} \rho} {}{}
	\addConstraint{\!\!\!\!\!\!b_j(x+z)(x+z) - \rho b_j(x+y)(x+y) + \nu [f(x+y)y - f(x+y+1)z]}{\ge 0\quad}{\forall\,(x,y,z) \in \mc{I},}
	\end{maxi*}
which can be equivalently written as
\begin{maxi*}
	{{\scriptstyle \tilde{f}\in\mb{R}^n,\,\rho\in\mb{R}}}
	{\nquad \rho}{}{}
	\addConstraint{\nquad b_j(x+z)(x+z) - \rho b_j(x+y)(x+y) + \tilde{f}(x+y)y - \tilde{f}(x+y+1)z}{\ge 0\quad}{\forall\,(x,y,z) \in \mc{I},}
\end{maxi*}
where we defined $\tilde f = \nu \cdot f$. While $\opt{f}_j$ is defined in \eqref{eq:mainLPopt} precisely as the solution of this last program, resulting in a price of anarchy of $1/\opt\rho_j$, note that $\mynu\cdot \opt{f}_j$ is also a solution since its price of anarchy matches $1/\opt\rho_j$ (in fact, it can be computed using \eqref{linprog:characterize_poa} for which $(\rho,\nu)=(\opt\rho_j,1/\mynu)$ are feasible). 

The above reasoning shows that the optimal price of anarchy for a game with resource costs generated by $\{b_1,\dots,b_m\}$ must be no smaller than $\max_{j} \{1/\opt{\rho}_j\}$. We now show that this holds with equality. Towards this goal, we note, thanks to \eqref{linprog:characterize_poa}, that utilizing tolls as in \eqref{eq:assumptioncontrad} for a game generated by $\{b_1,\dots,b_m\}$ results in a price of anarchy of precisely $\max_j\{1/\opt\rho_j\}$. This follows as $(\min_j\{\opt\rho_j\},1/\mynu)$ is feasible for this program for any choice of  $\mynu>0$. This proves, as requested, that any tolling mechanism defined in \eqref{eq:moregeneral} is optimal.

We now verify that the choice $\mynu = \opt\poa = \max_j\{1/\opt\rho_j\}$ ensures positivity of the tolls, which is equivalent to $\opt f_j(x)-b_j(x)/\mynu \ge0$ for all $x\in\{1,\dots,n\}$. This follows readily, as setting $x=z=0$ in \eqref{eq:mainLPopt} results in the constraint $f(y)- \rho b_j(y)\ge0$ for all $y\in\{1,\dots,n\}$. Since $\opt f_j$ and $\opt \rho_j$ must be feasible for this constraint, we have $\opt f_j(y)- \opt\rho_j b_j(y)\ge0$. One concludes observing that $\opt f_j(y)- b_j(y)/\mynu \ge \opt f_j(y)- \opt\rho_j b_j(y)\ge0$, since $\mynu\ge 1/\opt\rho_j$. We conclude remarking that all results hold for both Nash and coarse correlated equilibria, as they were derived from \eqref{linprog:characterize_poa}.
\end{proof}

\section{Explicit solution and simplified Linear program}
\label{sec:explicit-expression}
In this section we derive a simplified linear program as well as an analytical solution to the problem of designing optimal tolling mechanisms. We do so under the assumption that all basis functions are positive, increasing, and convex in the discrete sense.\footnote{We say that a function $f:\{1,\dots,n\}\rightarrow\mb{R}$ is convex if $f(x+1)-f(x)$ is non-decreasing in its domain.\label{foot:cvx}}
\begin{theorem}
\label{thm:simplifiedLP-explicit-solution}
Consider congestion games with $n$ agents, where resource costs take the form $\ell(x)=\sum_{j=1}^m\alpha_jb_j(x)$, $\alpha_j\ge0$, and basis $b_j:\{1,\dots,n\}\rightarrow\mb{R}$ are positive, convex, strictly increasing.\footnote{The result also holds if convexity and strict increasingness of $b_j(x)$ are weakened to strict convexity of $b_j(x)x$ and $b_j(n)>b_j(n-1)$. One such example is that of $b_j(x)=\sqrt{x}$.}

\begin{itemize}[leftmargin=*]
\item[i)]
A tolling mechanism minimizing the price of anarchy is as in \eqref{eq:optimaltolls_expression}, where each $\opt{f}_j:\{1,\dots,n\}\rightarrow\mb{R}$ solves the following simplified linear program
\begin{maxi}
	{{\scriptstyle f\in\mb{R}^n,\,\rho\in\mb{R}}}
	{\nquad \rho}{\label{eq:simplifiedLP-cvx}}{\opt{\rho}_j=}
	\addConstraint{\!\!\!\!\!\!\!\!\!\!\!\!\!\! b_j(v)v - \rho b_j(u)u + f(u)u - f(u+1)v}{\ge 0\quad}{\forall u,v\in\{0,\dots,n\}\quad u+v \leq n,}
	\addConstraint{\!\!\!\!\!\!\!\!\!\!\!\!\!\! b_j(v)v - \rho b_j(u)u + f(u)(n-u) - f(u+1)(n-u)}{\ge0\quad}{\forall u,v\in\{0,\dots,n\}\quad u+v > n,}
\end{maxi}
with $f(0)=f(n+1)=0$. The corresponding optimal price of anarchy is $\max_j\{1/\opt{\rho}_j\}$.
\item[ii)] An explicit expression for each $\opt{f}_j$ is given by the following recursion, where $\opt{f}_j(1)=b_j(1)$, 
\be
\label{eq:fopt-cvx}
\begin{split}
\opt{f}_j(u+1)&=\min_{v\in\{1,\dots,n\}}
\beta(u,v) \opt{f}_j(u) + \gamma(u,v) - 
\delta(u,v) \opt{\rho}_j,\\
\beta(u,v)=
\frac{\min\{u,n-v\}}{\min\{v,n-u\}}&,\quad
\gamma(u,v)=
\frac{b(v)v}{\min\{v,n-u\}},\quad
\delta(u,v)=
\frac{b(u)u}{\min\{v,n-u\}},
\end{split}
\ee
\be
\label{eq:defrhoopt}
\opt{\rho}_j=\hspace*{-1mm} 
\min_{(v_1,\dots,v_n)\in\{1,\dots,n\}^{n-1}\times \{0,\dots,n\}}
\hspace*{-2mm} 
\frac
{
(n-v_n)\left(\prod_{u=1}^{n-1} \beta_u{b_j(1)}  + \sum_{u=1}^{n-2} \left(\prod_{i=u+1}^{n-1} \beta_i \right) \gamma_u+ \gamma_{n-1}\right)+b(v_n)v_n}
{
(n-v_n)\left(
\sum_{u=1}^{n-2}\left(\prod_{i=u+1}^{n-1}\beta_i\right)\delta_u+ \delta_{n-1}
\right)+b(n)n},
\ee
where we use the short-hand notation $\beta_u$ instead of $\beta(u,v_u)$, and similarly for $\gamma_u$ and $\delta_u$.
\end{itemize}
\end{theorem}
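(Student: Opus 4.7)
My plan is to proceed in two stages that mirror the statement. For (i), the strategy is to reduce \eqref{eq:mainLPopt} to \eqref{eq:simplifiedLP-cvx} by eliminating the constraints associated with triples $(x,y,z)\in\mc{I}$ having $x>0$ and $yz>0$. This elimination hinges on showing that an optimal $\opt{f}_j$ can be taken to be non-decreasing, a fact I would establish by a monotone-rearrangement argument: if some optimal solution violated monotonicity at an index $u^\star$, one could swap or average the values $\opt{f}_j(u^\star)$ and $\opt{f}_j(u^\star+1)$ while leaving $\opt{\rho}_j$ unchanged; convexity of $b_j$ (in the discrete sense noted in the footnote) is what guarantees that the remaining constraints continue to hold after the rearrangement. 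With monotonicity in hand, I would then verify by direct substitution that every constraint of \eqref{eq:mainLPopt} with $x>0$ and $yz>0$ is implied by the $x=0$ constraint (yielding the first family of \eqref{eq:simplifiedLP-cvx}) combined with the $x+y+z=n$ constraint (yielding the second family), using precisely the inequality $\opt{f}_j(u+1)\ge \opt{f}_j(u)$. This is the same type of reduction employed in \cite{paccagnan2018distributed} and \cite{chandan19optimaljournal}.

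For (ii), I would recast the simplified LP as a recursion. Each constraint of \eqref{eq:simplifiedLP-cvx} can be rewritten as an upper bound of the form $f(u+1)\le \beta(u,v)f(u)+\gamma(u,v)-\delta(u,v)\rho$, where the two regimes $u+v\le n$ and $u+v>n$ select the two branches of the $\min$ that appear in the definitions of $\beta,\gamma,\delta$. Because the LP maximises $\rho$ and $\rho$ is coupled monotonically to $f$ through the constraints, at the optimum $\opt{f}_j(u+1)$ equals the tightest of these bounds, giving exactly \eqref{eq:fopt-cvx}. The base case $\opt{f}_j(1)=b_j(1)$ is read off by setting $u=0$ and observing that strict monotonicity of $b_j$ makes $v=1$ the binding choice. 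Unrolling the recursion for $u=1,\dots,n-1$ yields an affine expression in $\opt{\rho}_j$ for $\opt{f}_j(n)$ with coefficients determined by a tuple $(v_1,\dots,v_{n-1})$ realising each minimum. The boundary condition $f(n+1)=0$, together with the constraints at $u=n$, produces a family of inequalities of the form $\rho\le\text{(ratio)}$ indexed by $v_n\in\{0,\dots,n\}$; substituting the unrolled expression for $\opt{f}_j(n)$ and solving for $\rho$ yields the ratio appearing in \eqref{eq:defrhoopt}, and $\opt{\rho}_j$ is the minimum of this ratio over all admissible tuples.

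The main obstacle is the monotonicity of $\opt{f}_j$ in Part (i): convexity of $b_j$ has to be used essentially, and the rearrangement must be verified against \emph{every} constraint of \eqref{eq:mainLPopt}, not only those involving the two indices being perturbed. A secondary difficulty in Part (ii) is that the recursion \eqref{eq:fopt-cvx} expresses $\opt{f}_j(u+1)$ in terms of $\opt{\rho}_j$, which in turn is defined via a global minimisation in \eqref{eq:defrhoopt}; showing that the stage-wise minimisers can be assembled into a globally optimal tuple requires verifying that the recursion propagates $f$ and $\rho$ linearly and without cross-stage coupling, which should be made explicit when back-substituting to extract $\opt{\rho}_j$.
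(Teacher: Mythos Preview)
Your high-level plan for Part~(ii) is close to the paper's, but the overall architecture and, in particular, the monotonicity step in Part~(i) diverge from the paper in a way that creates a real gap.

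\textbf{The order is reversed, and this matters.} The paper does \emph{not} first prove that some optimal solution of \eqref{eq:mainLPopt} is non-decreasing and then simplify. Instead, it works backwards: it defines a \emph{further-relaxed} program \eqref{eq:oversimplifiedLP-cvx} (dropping in addition the $v=0$ constraints for $u<n$), shows that the explicit recursion \eqref{eq:fopt-cvx}--\eqref{eq:defrhoopt} solves this relaxed program (feasibility is immediate from the recursion; optimality is by a telescoping contradiction: any putative $\hat f$ with $\hat\rho>\opt\rho$ is forced to satisfy $\hat f(u)\le \opt f(u)$ for all $u$, which collides with the $u=n$ constraint). Only \emph{after} having this specific $\opt f$ in hand does the paper prove monotonicity of $\opt f$, and it does so by contradiction using a dedicated structural lemma (\cref{lem:f_nondecreasing}): if $\opt f$ ever decreases, it keeps decreasing, which together with strict convexity of $b(x)x$ forces $\opt\rho<\opt\rho$. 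Monotonicity then certifies feasibility of $(\opt\rho,\opt f)$ for the full program \eqref{eq:mainLPopt} (the tightest $x$ is $x=\min\{0,u+v-n\}$, and the omitted $v=0$ constraints are recovered from the $v=1$ constraints plus $\opt f(u+1)\ge \opt f(1)=b(1)$).

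\textbf{The rearrangement argument does not go through as stated.} Your proposed swap/average at an index $u^\star$ with $\opt f(u^\star)>\opt f(u^\star+1)$ does not preserve feasibility for the constraints with $x+y=u^\star$. Concretely, for those constraints the relevant term is $f(u^\star)y-f(u^\star+1)z$; swapping replaces it by $f(u^\star+1)y-f(u^\star)z$, a change of $(f(u^\star+1)-f(u^\star))(y+z)<0$, and averaging changes it by $\tfrac12(f(u^\star+1)-f(u^\star))(y+z)<0$. Either way the constraint becomes strictly harder whenever $y+z>0$, and convexity of $b_j$ gives you no slack there because $b_j$ does not appear in the term you are perturbing. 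So a generic rearrangement does not establish monotonicity of an optimal solution, which is why the paper instead proves monotonicity of the \emph{specific} recursive $\opt f$ via the mechanism above.

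For Part~(ii), your idea that the recursion arises because the optimum saturates the tightest bound is correct in spirit, but note two points. First, the paper justifies $\opt f(1)=b(1)$ and $\opt f(u+1)=\min_v[\cdots]$ not by arguing ``larger $f$ permits larger $\rho$'' in one stroke, but by the contradiction argument sketched above (any $\hat f$ with strictly larger $\hat\rho$ is squeezed below $\opt f$ level by level). Second, the combinatorial minimisation in \eqref{eq:defrhoopt} is obtained by observing that the nested minima in the recursion can be extracted jointly because each $\beta_u\ge 0$, so the composed map is monotone in the inner argument; this is the ``cross-stage coupling'' you worry about, and it should be stated explicitly rather than assumed.
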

Before delving into the proof, we observe that the key difficulty in designing optimal tolls resides in the expressions of $\opt\rho_j$ arising from \eqref{eq:defrhoopt}. Nevertheless, for any possible choice of $\bar{\rho}_j$ that approximates $\opt\rho_j$ from below, i.e., $\bar{\rho}_j \le \opt\rho_j$, one can directly utilize the recursion in \eqref{eq:fopt-cvx} to design a valid tolling mechanism. The resulting price of anarchy would then amount to $\max_j\{1/\bar{\rho}_j\} > \max_j\{1/\opt{\rho}_j\}$. This follows from the ensuing proof.

\begin{proof}
As shown in \cref{thm:main-thm}, computing an optimal tolling mechanism amounts to utilizing \eqref{eq:optimaltolls_expression}, where each $\opt\tau_j$ has been designed through the solution of the program in \eqref{eq:mainLPopt}. In light of this, we prove the theorem as follows: first, we consider a simplified linear program, where only a subset of the constraints enforced in \eqref{eq:mainLPopt} are considered. Second, we show that a solution of this simplified program is given by $(\opt\rho_j,\opt{f}_j)$ as defined above. Third, we show that $\opt{f}_j$ is non-decreasing, thus ensuring that $(\opt\rho_j,\opt{f}_j)$ is also feasible for the original over constrained program in \eqref{eq:mainLPopt}. From this we conclude that $(\opt\rho_j,\opt{f}_j)$ must also be a solution of \eqref{eq:mainLPopt}, i.e., the second claim in the Theorem. We conclude with some cosmetics, and transform the simplified linear program whose solution is given by $(\opt\rho_j,\opt{f}_j)$ in \eqref{eq:simplifiedLP-cvx}, thus obtaining the first claim. Throughout the proof, we drop the index $j$ from $b_j$ as the proof can be repeated for each basis separately.

\subsubsection*{Simplified linear program}
We begin rewriting the program in \eqref{eq:mainLPopt}, where instead of the indices $(x,y,z)$, we use the corresponding indices $(u,v,x)$ defined as $u=x+y$, $v=x+z$. The constraint indexed by $(u,v,x)$ reads as $b(v)v-\rho b(u)u+f(u)(u-x)-f(u+1)(v-x)\ge0$. We now consider only the constraints where $x$ is set to $x=\min\{0,u+v-n\}$, and $u,v\in\{0,\dots,n\}$, Such constraints read as $b(v)v-\rho b(u)u+\min\{u,n-v\}f(u)-\min\{v,n-u\}f(u+1)\ge0$.\footnote{Note that considering all these constraints with $u,v\in\{0,\dots,n\}$ results precisely in \eqref{eq:simplifiedLP-cvx}. To see this, simply distinguish the cases based on whether $u+v\le n$ or $u+v>n$.} Finally, we exclude the constraints with $v=0$, $u\in\{1,\dots,n-1\}$ and obtain the following \emph{simplified linear program}
\begin{maxi}
	{{\scriptstyle f\in\mb{R}^n,\,\rho\in\mb{R}}}
	{\rho}{\label{eq:oversimplifiedLP-cvx}}{}
	\addConstraint{b(v)v-\rho b(u)u+\min\{u,n-v\}f(u)-\min\{v,n-u\}f(u+1)}{\ge 0}{}
	\addConstraint{}{}{\hspace*{-30mm}\forall (u,v)\in \{0,\dots,n\}\times\{1,\dots,n\}\cup (n,0)}
\end{maxi}
\subsubsection*{Proof that $(\opt{\rho},\opt{f})$ solve \eqref{eq:oversimplifiedLP-cvx}} 
Towards the stated goal, we begin by observing that $(\opt{\rho},\opt{f})$ is feasible by construction. For $u=0$ this follows as the tightest constraints in \eqref{eq:oversimplifiedLP-cvx} read as $\opt{f}(1)\ge b(1)$ and we selected $\opt{f}(1)= b(1)$. Feasibility is immediate to verify for $u\in\{1,\dots,n-1\}$,  $v\in\{1,\dots,n\}$ as applying its definition gives $\opt{f}(u+1)\le \beta(u,v) \opt{f}(u) + \gamma(u,v) - \delta(u,v) \opt{\rho}$. Using the expressions of $\beta,\gamma,\delta$, and rearranging gives exactly the constraint $(u,v)$ in \eqref{eq:oversimplifiedLP-cvx}. The only element of difficulty consists in showing that also the constraints with $u=n$, $v\in\{0,\dots,n\}$ are satisfied. Towards this goal, we observe that utilizing the recursive definition of $\opt{f}$ we obtain an expression for $\opt{f}(n)$ as a function of $\opt{\rho}$ with a nested succession of minimizations, which can be jointly extracted as follows
\[
\opt{f}(n)=\min_{v_{n-1}}\left\{\dots+\min_{v_{n-2}}\left\{\dots+\min_{v_{1}}\left\{\dots\right\}\right\}\right\}
=
\min_{v_{n-1}}\min_{v_{n-2}}\dots\min_{v_{1}}\left\{\dots\right\}.
\]
This holds as $\opt{f}(u+1)=\min_{v_u}\left[\beta_u\min_{v_{u-1}}(\beta_{u-1}\opt{f}(u-1)-\delta_{u-1}\opt{\rho}+\gamma_{u-1})-\delta_u\opt{\rho}+\gamma_u\right]$, and since $\beta_u\geq0$, the latter simplifies to $\opt{f}(u+1)=\min_{v_u}\min_{v_{u-1}}\beta_u\beta_{u-1}\opt{f}(u-1)-(\beta_u\delta_{u-1}+\delta_u)\opt{\rho}+\beta_u\gamma_{u-1}+\gamma_u$. Repeating the argument recursively gives the desired expression. Hence,
\[\begin{split}
\opt{f}(n)&=\min_{(v_1,\dots,v_{n-1})^{}\in\{1,\dots,n\}^{n-1}}
\prod_{u=1}^{n-1} \beta_u {b_j(1)} + \sum_{u=1}^{n-2} \left(\prod_{i=u+1}^{n-1} \beta_i \right) (\gamma_u-\delta_u\opt{\rho})+ (\gamma_{n-1}-\delta_{n-1}\opt{\rho})\\
	 &\doteq \!\! \min_{(v_1,\dots,v_{n-1})\in\{1,\dots,n\}^{n-1}} \q(v_1,\dots,v_{n-1};\opt{\rho}),
\end{split}\]
where we implicitly defined $q(v_1,\dots,v_{n-1};\opt\rho)$. The constraints we intend to verify read as $b(v)v-\rho b(n)n+(n-v)\opt{f}(n)\ge0$ for all $v\in\{0,\dots,n\}$, and can be equivalently written as $\min_{v_n\in\{0,\dots,n\}}[b(v_n)v_n-\rho b(n)n+(n-v_n)\opt{f}(n)]\ge0$. We substitute the resulting expression of $\opt{f}(n)$, extract the minimization over $v_n$ as in the above, and are therefore left with $\min_{(v_1,\dots,v_{n-1},v_n)\in\{1,\dots,n\}^{n-1}\times\{0,\dots,n\}} [b(v_n)v_n-\rho b(n)n+(n-v)\q(v_1,\dots,v_{n-1};\opt{\rho})]\ge0,$ which holds if and only if $b(v_n)v_n-\rho b(n)n+(n-v_n)\q(v_1,\dots,v_{n-1};\opt{\rho})\ge0$ for all possible tuples $(v_1,\dots,v_n)$. Rearranging these constraints and solving for $\opt{\rho}$ will result in a set of inequalities on $\opt\rho$ (one inequality for each tuple). Our choice of $\opt{\rho}$ in $\eqref{eq:defrhoopt}$ is precisely obtained by turning the most binding of these into an equality. This ensures that $(\opt{\rho},\opt{f})$ are feasible also when $u=n$.

We now prove, by contradiction, that $(\opt{\rho},\opt{f})$ is optimal. To do so, we assume that there exists $\hat{f}$, that is feasible and achieves a higher value $\hat{\rho}>\opt{\rho}$. Since $(\hat{f},\hat{\rho})$ is feasible, using the constraint with $u=0$, $v=1$, we have $\hat{f}(1)\le b(1)=\opt{f}(1)$. Observing that $\min\{v,n-u\}>0$ due to $v>0$, $u<n$ and leveraging the constraints with $u=1$ as well as the corresponding specific choice of $v=v_1^\ast$ (for given $u\in\{1,\dots,n-1\}$, we let $v_u^\ast$ be an index $v\in\{1,\dots,n\}$ where the minimum in \eqref{eq:fopt-cvx} is attained), it must be that $\hat{f}(2)$ satisfies
\[
\hat{f}(2)
\!\le\!\frac{b(v_1^\ast)v_1^\ast\!-\!\hat{\rho} b(1) + \min\{1,n-v_1^\ast\}\hat{f}(1)}{\min\{v_1^\ast,n-1\}}\!<\! \frac{b(v_1^\ast)v_1^\ast\!-\!\opt{\rho} b(1) + \min\{1,n-v_1^\ast\}\opt{f}(1)}{\min\{v_1^\ast,n-1\}}\!=\!\opt{f}(2).
\]
Here the first inequality follows by feasibility of $\hat{f}$, the second is due to $\hat{\rho}>\opt{\rho}$ and $\hat{f}(1)\le \opt{f}(1)$. The final equality follows due to the definition of $\opt{f}(2)$. Hence we have shown that $\hat{f}(2)<\opt{f}(2)$. Noting that the only information we used to move from level $u$ to $u+1$ is that $\hat{\rho}>\opt{\rho}$ and $\hat{f}(u)\le \opt{f}(u)$, one can apply this argument recursively up until $u=n-1$, and thus obtain $\hat{f}(n)<\opt{f}(n)$. Nevertheless, leveraging the constraints with $u=n$ and $v=v_n^\ast$ gives $b(v_n^\ast)v_n^\ast-\hat{\rho}b(n)n+(n-v_n^\ast)\hat{f}(n)\ge0$, or equivalently $\hat{\rho}\le (b(v_n^\ast)v_n^\ast+(n-v_n^\ast)\hat{f}(n))/(b(n)n)$. Thus
\[
\hat{\rho} \le
\frac{b(v_n^\ast)v_n^\ast+(n-v_n^\ast)\hat{f}(n)}{b(n)n}
\le \frac{b(v_n^\ast)v_n^\ast+(n-v_n^\ast)\opt{f}(n)}{b(n)n}=\opt{\rho},
\]
where we used the fact that $n-v_n^\ast\ge0$ and $\hat{f}(n)<\opt{f}(n)$. Note that $\hat{\rho}\le \opt{\rho}$ contradicts the assumption $\hat{\rho}>\opt{\rho}$, thus concluding this part of the proof.

\subsubsection*{Proof that $\opt{f}$ is non-decreasing}
By contradiction, let us assume $\opt{f}$ is decreasing at some index. \cref{lem:f_nondecreasing} in the Appendix shows that, if this is the case, then $\opt{f}$ continues to decrease, so that $\opt{f}(n)\le \opt{f}(n-1)$. Note that it must be $\opt{f}(n)>0$, as if it were $\opt{f}(n)\le 0$, then by definition of $\opt{\rho}$ we would have 
\[
\opt{\rho} = \min_{v \in\{0,\dots,n\}}\frac{b(v)v+(n-v)\opt{f}(n)}{nb(n)} =\frac{0+\opt{f}(n)}{b(n)}\le0,
\]
since the minimum is attained at the lowest feasible $v$ due to $b(v)v$ and $-v\opt{f}(n)$ non-decreasing and increasing, respectively. This is a contradiction as the price of anarchy is bounded already in the un-tolled setup.\footnote{To see this, consider the linear program used to determine the price of anarchy in the un-tolled case, i.e., \eqref{eq:simplifiedLP-compute} where we set $f_j(x)=b_j(x)$. When $\nu=1$, it is always possible to find $\rho>0$, so that the corresponding price of anarchy is bounded.}
It must therefore be that the price of anarchy is bounded also when optimal tolls are used. Additionally, as we have removed a number of constraints from the linear program, the corresponding price of anarchy will be even lower. Therefore it must be that $1/\opt{\rho}$ is non-negative and bounded, so that $\opt{\rho} >0$ contradicting the last equation.

Thus, in the following we proceed with the case of $\opt{f}(n)>0$.  It must be that
\[
\opt{\rho}=
	\!\!\!\min_{v \in\{0,\dots,n\}}\!\!\!\frac{b(v)v\!+\!(n\!-\!v)\opt{f}(n)}{nb(n)}
	\le \!\!\!
	\min_{v \in\{1,\dots,n\}}\!\!\!\frac{b(v)v\!+\!(n\!-\!v)\opt{f}(n)}{nb(n)}\le 
	 \!\!\!\min_{v \in\{1,\dots,n\}}\!\!\!\frac{b(v)v\!+\!(n\!-\!v)\opt{f}(n\!-\!1)}{nb(n)},
\] 
where the first inequality holds as we are restricting the domain of minimization, the second because $\opt{f}(n)\le \opt{f}(n-1)$ and $n-v\ge 0$. Let us observe that $\opt{f}(n)$ is defined as $f(n)=\min_{v\in\{1,\dots,n\}} [b(v)v+(n-v)\opt{f}(n-1)] - \opt{\rho} (n-1)b(n-1).$ Substituting $\min_{v\in\{1,\dots,n\}} [b(v)v+(n-v)\opt{f}(n-1)]=\opt{f}(n)+\opt{\rho} (n-1)b(n-1)$ in the former bound on $\opt{\rho}$, we get
\[
\opt{\rho}\le \frac{\opt{f}(n)+\opt{\rho} (n-1)b(n-1)}{nb(n)}
\quad\implies\quad
\opt{\rho}\le \frac{\opt{f}(n)}{nb(n)-(n-1)b(n-1)}.
\]

\noindent We want to prove that this gives rise to a contradiction. To do so, we will show that
\be
\label{eq:contradiction}
\frac{\opt{f}(n)}{nb(n)-(n-1)b(n-1)} <
 \min_{v \in\{0,\dots,n\}}\frac{b(v)v+(n-v)\opt{f}(n)}{nb(n)}.
\ee
As a matter of fact, if the latter inequality holds true, the proof is immediately concluded as
\[
\opt{\rho} \le \frac{\opt{f}(n)}{nb(n)-(n-1)b(n-1)} <
 \min_{v \in\{0,\dots,n\}}\frac{b(v)v+(n-v)\opt{f}(n)}{nb(n)}=\opt{\rho}\quad \implies\quad \opt{\rho}<\opt{\rho},
\]
where the first inequality has been shown above, the second is what remains to be proved, and the latter equality is by definition. Therefore, we are left to show \eqref{eq:contradiction}, which holds if we can show that $\forall v\in\{0,\dots,n\}$ it is
\[
g(v)\doteq\frac{h(v)+(n-v)\opt{f}(n)}{h(n)}-\frac{\opt{f}(n)}{h(n)-h(n-1)} >0,
\]
where $h:\mb{R}\to\mb{R}_{\geq0}$ is a function such that $h(v)=b(v)v$ for $v\in\{0,\dots,n\}$. We choose $h$ to be continuously differentiable, strictly increasing, and strictly convex; one such function always exists.\footnote{Observe that the function $b(v)v$ is positive, strictly increasing, and strictly convex in the discrete sense in its domain due to the assumptions.} We first consider the point $v=0$. Observe that $g(0)>0$ when $n>1$ as 
\[
g(0)=\frac{\opt{f}(n)}{b(n)}-\frac{\opt{f}(n)}{nb(n)-(n-1)b(n-1)}>0
\quad\iff\quad
\opt{f}(n)[(n-1)b(n)-(n-1)b(n-1)]>0,
\]
which holds as $\opt{f}(n)>0$, $n>1$, and $b(n)>b(n-1)$ strictly.

If $g'(v)\ge 0$ at $v=0$, the proof is complete as $g$ is convex and due to $g'(0)\ge 0$ it is non-decreasing for any $v\ge 0$ so that the constraint will be satisfied for all $v\ge 0$. 

If this is not the case, then $g'(0)< 0$, which we consider now.
Note that, at the point $v=n-1$, the derivative $g'(n-1)=[h'(n-1)-\opt{f}(n)]/h(n)$ satisfies 
\[
h(n) g'(n-1)=h'(n-1)-\opt{f}(n)
\ge h'(n-1) -(h(n-1)-h(n-2))\ge0
\]
where the last inequality is due to convexity, while the first inequality holds as $\opt{f}(n)\le h(n-1)-h(n-2)$ thanks  to \cref{lem:f_nondecreasing} and $n\ge 2$.\footnote{In fact, either $n$ is the first index starting from which $\opt{f}$ decreases (i.e. $\opt{f}(n)<\opt{f}(n-1)$) in which case $\opt{f}(n)\le\opt{\rho}[b(n-1)(n-1)-b(n-2)(n-2)]\le b(n-1)(n-1)-b(n-2)(n-2)$ due to $\opt{\rho}\le 1$, or the function starts decreasing at a $u+1<n$ in which case \cref{lem:f_nondecreasing} also shows that
\[
\opt{f}(n)\le\dots\le \opt{f}(u+1)\le \opt{\rho}[b(u)u-b(u-1)(u-1)]\le b(u)u-b(u-1)(u-1)\le b(n-1)(n-1)-b(n-2)(n-2),
\]
where the inequalities hold due to $\opt{\rho}\le 1$ and the convexity of $b(u)u$.} Therefore since $g'(0)<0$, $g'(n-1)\ge0$ and $g$ convex, there must exist an unconstrained minimizer $v^\star \in (0,n-1]$. We will guarantee that $g(v^\star)>0$ so that for any (real and thus integer) $v\in[0,n]$ it is $g(v)>0$. The unconstrained minimizer satisfies $\opt{f}(n)=h'(v^\star)$, which we substitute, and are thus left with proving the final inequality 
\[
\frac{h(v^\star)+(n-v)h'(v^\star)}{h(n)}-\frac{h'(v^\star)}{h(n)-h(n-1)} > 0,
\]
which is equivalent to 
\[
[h(n)-h(n-1)]h(v^\star)>h'(v^\star)[(n-v^\star)(h(n-1)-h(n))+h(n)],
\]
where we recall $0<v^\star\le n-1$. As the left hand side is positive due to $h$ increasing and $v^\star>0$, the inequality holds trivially if the right hand side is less or equal to zero, i.e., if $h(n)\le(n-v^\star)(h(n)-h(n-1))$. In the other case, when  $(n-v^\star)(h(n-1)-h(n))+h(n)>0$, we leverage the fact that $h'(v^\star)< (h(n)-h(v^\star))/(n-v^\star)$ by strict convexity of $h(x)$ in $x=v^\star>0$, so that
\[
\begin{split}
h'(v^\star)[(n-v^\star)(h(n-1)-h(n))+h(n)]&\!< \frac{h(n)-h(v^\star)}{n-v^\star}[(n-v^\star)(h(n-1)-h(n))+h(n)]\\
&\!=\!\frac{h(n)}{n-v^\star}[h(n)-h(v^\star)]\!+\![h(n)-h(n-1)][h(v^\star)-h(n)]\\
&\!\le[h(n)-h(n-1)]h(v^\star),
\end{split}
\]
where the last inequality follows since $[h(n)-h(n-1)][h(v^\star)-h(n)]\le0$ and from $\frac{h(n)-h(v^\star)}{n-v^\star}\le h(n)-h(n-1)$, which holds for $0<v^\star \le n-1$ by convexity. This concludes this part of the proof.

\subsubsection*{Proof that $(\opt{\rho},\opt{f})$ is feasible also for \eqref{eq:mainLPopt} and final cosmetics.} Recall from the first part of the proof that the constraints in \eqref{eq:mainLPopt} can be equivalently written as $b(v)v-\rho b(u)u+f(u)(u-x)-f(u+1)(v-x)\ge0$. Since $\opt{f}$ is non-decreasing,  following the argument in \cite[Cor. 1]{paccagnan2018distributed} one verifies that the tightest constraints are obtained when $x=\min\{0,u+v-n\}$. These constraints are already included in our simplified program of \eqref{eq:oversimplifiedLP-cvx}, with the exception of those with $v=0$ and $u\in\{0,\dots,n-1\}$ which we have removed. To show that also these hold, we note that the constraint with $v=0$ reads as $u\opt{f}(u)\ge\opt{\rho} ub(u)$, and is trivially satisfied for $u=0$. We now show that also the constraints with $v=0$, $u>0$ hold. To do so, we consider the constraint corresponding to $v=1$ 
\[
b(1)1-\rho b(u)u + u\opt{f}(u)-\opt{f}(u+1)\ge0.
\]
Since $\opt{f}$ is non-decreasing as shown in previous point then $\opt{f}(u+1)\ge \opt{f}(1)=b(1)$. Hence,
\[
0\le b(1)1-\rho b(u)u + u\opt{f}(u)-\opt{f}(u+1)\le  b(1)1-\opt{\rho} b(u)u + u\opt{f}(u)-b(1).
\]
Thus, from the left and right hand sides we obtain the desired result $u\opt{f}(u)\ge \opt{\rho} ub(u)$.  

We conclude with some cosmetics: the simplified linear program in \eqref{eq:oversimplifiedLP-cvx} is almost identical to that in \eqref{eq:simplifiedLP-cvx}, except for the constraints with $v=0$ and $u\in\{0,\dots,n-1\}$, which we have removed in \eqref{eq:oversimplifiedLP-cvx}. Nevertheless, we have just verified that an optimal solution does satisfy these constraints too. Hence, we simply add them back to obtain \eqref{eq:simplifiedLP-cvx}.
\end{proof}

\section{Optimal Tolling Mechanisms for arbitrary number of agents}
\label{sec:largen}
While the linear programming formulations introduced in \eqref{eq:mainLPopt} and \eqref{eq:simplifiedLP-cvx} provide an optimal tolling mechanism and the corresponding optimal price of anarchy when the number of agents is upper-bounded by $n$ (finite), in this section we show how to design optimal tolling mechanisms for polynomial congestion games that apply to any $n$ (possibly infinite), by solving a linear program of fixed size. The resulting values of the price of anarchy are those already displayed in \cref{tab:naware_vs_nagnostic}.

For ease of exposition, we consider congestion games where the set of resource costs is produced by non-negative combinations of a \emph{single} monomial $x^d$, $d\ge 1$ at a time. This is without loss of generality, as one can derive optimal tolling mechanisms for polynomial congestion games with \emph{maximum degree} $d$, i.e., generated by $\{1,x,\dots,x^d\}$,  simply repeating the ensuing reasoning separately for all polynomials of degree higher than one and lower-equal to~$d$. No toll need to be applied to polynomials of order zero as the corresponding price of anarchy is one. 

The idea we leverage is as follows: first, we solve a linear program of \emph{fixed} size $\bar n$, from which we obtain a set of tolls that are then extended analytically to any number of agents. This produces a mechanism for which we are able to quantify the corresponding price of anarchy over games with possibly infinitely many agents. Such price of anarchy value is an upper bound on the true optimal price of anarchy over games with possibly infinitely many agents, as the mechanism we design is not necessarily optimal. At the same time, we solve the linear program in \eqref{eq:simplifiedLP-cvx}, and thus obtain the optimal price of anarchy for games with a maximum of $\bar n$ agents. The latter is a lower bound for the optimal price of anarchy over games with possibly infinitely many agents. Letting $\bar n$ grow, the upper bound matches the lower bound already for small values of $\bar n$, as showcased in \cref{tab:ub-lb}.

While the construction of the lower bound follows readily by solving the linear program in \eqref{eq:simplifiedLP-cvx} with $\bar{n}$ agents, in the following we describe the procedure to derive the upper bound. More specifically, we clarify i) what program of dimension $\bar n$ we solve; ii) how we extend its solution from $\bar n$ to infinity; and iii) how we compute the resulting price of anarchy over games with possibly infinitely many agents. In the remainder of this section, we will always select $\bar n$ finite and even.

As for the first point, we consider the following linear program
\begin{maxi}
	{{\scriptstyle f\in\mb{R}^{\bar n},\,\rho\in\mb{R}}}
	{\!\!\!\!\!\rho}{\label{eq:overconstraintedsimplifiedlp}}{}
	\addConstraint{\!\!\!\!\!\v^{d+1} - \rho \u^{d+1} + f(\u)\u - f(\u+1)\v}{\ge 0\quad}{\forall u,v\in\{0,\dots,{\bar n}\}\quad u+v \leq {\bar n},}
	\addConstraint{\!\!\!\!\!\v^{d+1} - \rho \u^{d+1} + f(\u)({\bar n}-\v) - f(\u+1)({\bar n}-\u)}{\ge0\quad}{\forall u,v\in\{0,\dots,{\bar n}\}\quad u+v > {\bar n},}
	\addConstraint{\!\!\!\!\!f(u)}{\le u^d\quad}{\hspace*{2.8mm}\forall u\in\{1,\dots,\bar n\}}
	\addConstraint{\!\!\!\!\!f(u)}{\ge f(u-1)\quad}{\hspace*{2.8mm}\forall u\in\{2,\dots,\bar n\}}
\end{maxi}
with the usual convention that $f(0)=f(\bar{n}+1)=0$. Note that the previous program is identical to that in \eqref{eq:simplifiedLP-cvx}  with $b(x)=x^d$, except that we have included two additional sets of constraints. We let $(f^{\rm opt}, \rho^{\rm opt})$ be a solution of this program and utilize it to define $\myf:\mb{N}\rightarrow\mb{R}$ as follows
\be
\myf(x)=
\begin{cases}
f^{\rm opt}(x)\quad&\text{for}\quad x\le \bar{n}/2\\
\beta \cdot x^d \quad&\text{for}\quad x> \bar{n}/2\\
\end{cases},\quad \text{where}\quad \beta = \frac{f^{\rm opt}(\bar{n}/2)}{(\bar{n}/2)^d}.
\label{eq:definefext}
\ee
Informally, the idea is to extend $\tau^\infty(x)=\myf(x)-b(x)$ from $\bar{n}/2$ to infinity with a polynomial of the same order of the original $x^d$. Note that $\beta\ge0$ is chosen so that the two expressions defining $\myf$ match for $x=\bar{n}/{2}$.\footnote{Observe that $\myf(1)\ge0$ since having $\myf(1)<0$ would always result in a lower performance, as shown in \cite{paccagnan2018distributed}. Therefore $\myf(\bar{n}/2)\ge0$ as it is feasible for \eqref{eq:overconstraintedsimplifiedlp}, which includes the constraint $f(x+1)\ge f(x)$. Hence, $\beta\ge0$.} While the expression of $\myf$ and all forthcoming quantities depends on the choice of $\bar{n}$, we do not make this explicit to ease the notation. \cref{lem:largen} in the Appendix ensures that the price of anarchy of $\myf$ is identical for pure Nash and coarse correlated equilibria, and is upper bounded over games with possibly infinitely many agents by $1/\rho^\infty$, where $\rho^\infty$ is given by
\[
\rho^{\infty}=\min\left\{{\opt\rho},~~{\beta - d\left(1+\frac{2}{\bar n}\right)^{d+1} \left(\frac{\beta}{d+1}\right)^{1+\frac{1}{d}}} \right\}.
\]
As clarified above this represents an upper bound on the optimal price of anarchy. The upper and lower bounds displayed in \cref{tab:ub-lb} have been computed according to the procedure just described, and demonstrate that, for a relatively small $\bar{n}=40$, the mechanism obtained from $\myf$ is approximately optimal up to the fifth decimal digit for polynomial congestion games with $d=1,2,3$. 

Finally, we observe that the tolling mechanism $T^\infty(\alpha \ell)=\alpha T^\infty(\ell)=\alpha\tau^{\infty} $, where $\tau^{\infty}(x)=\myf(x)-b(x)$  might not satisfy $\tau^{\infty}(x)\ge0$ for all $x\in\mb{N}$ (i.e., they might be monetary incentives and not tolls). Nevertheless, multiplying $\myf$ with a factor $\gamma>0$ produces tolls $\gamma\myf(x)-b(x)$ with identical price of anarchy (the proof of \cref{lem:largen} will hold with $\nu=1/\gamma$ in place of $\nu=1$). Therefore, one simply needs to consider tolls of the form $\gamma\myf(x)-b(x)$, where $\gamma$ is chosen sufficiently large to ensure that $\gamma\myf(x)-b(x)\ge0$ for all $x\in\mb{N}$; one such $\gamma$ always exists. When multiple basis are present, we select $\gamma$ as a common scaling factor to ensure non-negativity of all tolls basis.

\begin{table}[t!]%
\centering%
    \begin{tabular}{S[table-format=2]||S[table-format=2.6]|S[table-format=1.7]|S[table-format=2.6]|S[table-format=1.7]|S[table-format=3.6]|S[table-format=2.7]}%
    \multicolumn{1}{c||}{$\bar n$} & \multicolumn{2}{c|}{$d=1$} & \multicolumn{2}{c|}{$d=2$} & \multicolumn{2}{c}{$d=3$}\\
    &{LB}&{UB}& {LB}&{UB}& {LB}&{UB} \\
        \hline
        10 & 2.011825 & 2.038237 & 5.097187 & 5.316382 &15.530175 & 17.138429\\
        20 & 2.012067 & 2.019844 & 5.100974 & 5.147543 &15.550847 & 15.751993\\
        30 & 2.012067 & 2.014335 & 5.100974 & 5.119149 &15.550852 & 15.684195 \\
        40 & 2.012067 & 2.012067 & 5.100974 & 5.100974 &15.550852 & 15.550859\\
    \end{tabular}

\vspace*{1mm}
    \caption{Lower and upper bounds (LB and UB) on the values of the optimal price of anarchy for polynomial congestion games with arbitrarily large number of agents and $d=1,2,3$. The LB vs UB shows how the tolls derived from $\myf$ defined in  \eqref{eq:definefext} are approximately optimal up the fifth decimal digit when we select $\bar n=40$.}%
\label{tab:ub-lb}%
\vspace*{-8mm}
    \end{table}

\section{Congestion-Independent Tolling Mechanisms}
\label{sec:congestion-indep}
In this section we provide a general methodology to compute optimal congestion-independent local tolling mechanisms for games generated by $\{b_1,\dots,b_m\}$. We also specialize the result to polynomial congestion games providing explicit expressions for the tolls and the corresponding price of anarchy. In this section we consider basis functions that are \mbox{convex in the discrete sense (see \cref{foot:cvx}).}
\begin{theorem}
\label{thm:congestion-indep}
{A local congestion-independent mechanism minimizing the price of anarchy over congestion games with $n$ agents, and resource costs $\ell(x)=\sum_{j=1}^m\alpha_j b_j(x)$, $\alpha_j\ge0$, with convex positive non-decreasing basis functions $\{b_1,\dots,b_m\}$ is given by
	\be
	\label{eq:optconsttoll}
    \opt{T}(\ell) = \sum_{j = 1}^{m} \alpha_j\cdot \opt{\tau}_j, 
    \quad\text{where}~~\opt{\tau}_j \in\mb{R}, 
    \quad\opt{\tau}_j = \left(\frac{1}{\opt\nu}-1\right)b_j(1)
    \ee 
  and $\opt{\rho}\in\mb{R}$, $\opt{\nu}\in\mb{R}_{\ge0}$ solve the linear program 
\begin{maxi}
	{{\scriptstyle \rho \in \mathbb{R}, \nu \in [0,1]}}
	{\nquad\rho} {\label{eq:fixedLPopt}}{}
	\addConstraint{\nquad b_j(\v)\v - \rho b_j(\u)\u + \nu [b_j(\u)\u - b_j(\u+1)\v] + b_j(1)(1-\nu)(\u-\v)}{\ge 0}
	\addConstraint{}{}{\hspace*{-65mm}\forall u,v\in\{0,\dots,n\}\quad u+\v \leq n\quad u\ge v,\quad  \forall j \in \{1, \dots, m\},}
	\addConstraint{\nquad b_j(\v)\v - \rho b_j(\u)\u + \nu [b_j(\u)(n-\v) - b_j(\u+1)(n-\u)] + b_j(1)(1-\nu)(\u-\v)}{\ge 0}
	\addConstraint{}{}{\hspace*{-65mm}\forall u,v\in\{0,\dots,n\}\quad u+\v > n \quad u\ge v,\quad  \forall j \in \{1, \dots, m\}.}
\end{maxi}
where we define $b_j(0)=b_j(n+1)=0$. Correspondingly, $\poa(\opt{T})=1/\opt\rho$, and the optimal tolls are non-negative.\footnote{The result also holds if convexity of $b_j(x)$ is weakened to convexity of $b_j(x)x$. One example is that of $b_j(x)=\sqrt{x}$.} The result is tight for pure Nash equilibria and extends to coarse correlated equilibria.}
\end{theorem}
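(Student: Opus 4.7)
The plan mirrors the proof structure of \cref{thm:main-thm}. First, I would repeat verbatim the Part~1 argument from that proof to reduce to mechanisms that are linear in the basis decomposition: given any congestion-independent local mechanism $\hat T$, its linearization $\bar T(\sum_j \alpha_j b_j) = \sum_j \alpha_j \hat T(b_j)$ is again congestion-independent (each $\hat T(b_j)$ is a real number) and achieves the same worst-case price of anarchy on $\mc{G}(\mb{Z}_{\ge 0})$. Consequently, without loss of generality, $T(\ell) = \sum_j \alpha_j \tau_j$ for constants $\tau_j \in \mb{R}$.

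Next, I would invoke the PoA characterization \eqref{linprog:characterize_poa} with $f_j(x) = b_j(x) + \tau_j$. Because $b_j$ is non-decreasing and $\tau_j$ is constant, $f_j$ is non-decreasing, so the reduced LP \eqref{eq:simplifiedLP-compute} applies. Substituting yields constraints of the form
\[
b_j(v)v - \rho b_j(u)u + \nu\bigl[b_j(u)u - b_j(u+1)v\bigr] + \nu\tau_j(u-v) \ge 0,
\]
and I would introduce $\sigma_j := \nu \tau_j$ to make the joint optimization over $(\rho, \nu, \tau_j)$ linear. The constraint indexed by $(u,v) = (0,1)$ immediately yields the upper bound $\sigma_j \le b_j(1)(1-\nu)$, and a direct comparison (using $b_j(v) \ge b_j(1)$ by monotonicity) shows that all other $u=0$, $v>1$ constraints deliver strictly weaker upper bounds on $\sigma_j$. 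Since $\sigma_j$ enters each constraint linearly through the factor $(u-v)$, pushing $\sigma_j$ up to this bound relaxes every $u > v$ constraint and is therefore (weakly) optimal, provided the remaining $u < v$ constraints are not thereby violated. This motivates fixing $\sigma_j^{\star} = b_j(1)(1-\nu)$, equivalently $\tau_j^{\star} = b_j(1)(1/\nu - 1)$, and retaining only the $u \ge v$ constraints, which produces exactly \eqref{eq:fixedLPopt}.

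The main obstacle is closing the argument by verifying the dropped $u < v$ constraints. Once $\sigma_j = b_j(1)(1-\nu)$ is substituted, these inequalities should be implied by the $u \ge v$ ones together with convexity and monotonicity of $b_j$. The natural route is a swap-type reduction in the spirit of \cite[Cor.~1]{paccagnan2018distributed} (already exploited to pass from \eqref{linprog:characterize_poa} to \eqref{eq:simplifiedLP-compute}), combined with elementary discrete-convexity estimates such as $b_j(u+1)(u+1) - b_j(u)u \ge b_j(1)$. This is precisely where the convexity assumption on $b_j$ enters essentially, and I expect the case split between $u+v \le n$ and $u+v > n$ to require separate, but analogous, bookkeeping.

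Finally, I would handle non-negativity of the tolls and justify the range $\nu \in [0,1]$. Restricting to $\nu \le 1$ is equivalent to $\tau_j^{\star} \ge 0$; conversely, any $\nu > 1$ would correspond to subsidies, and since $\nu = 1$ with $\tau_j = 0$ already recovers the un-tolled benchmark feasible in \eqref{linprog:characterize_poa}, allowing $\nu > 1$ cannot improve $\rho$ among non-negatively tolled mechanisms. Tightness for pure Nash equilibria and the extension to coarse correlated equilibria are inherited, as in \cref{thm:main-thm}, from the tight two-sided characterization embodied in \eqref{linprog:characterize_poa}.
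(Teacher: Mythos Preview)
Your proposal follows the paper's proof essentially step for step: linearize via the Part~1 argument of \cref{thm:main-thm}, plug $f_j=b_j+\tau_j$ into the reduced LP \eqref{eq:simplifiedLP-compute}, substitute $\sigma_j=\nu\tau_j$, saturate $\sigma_j=(1-\nu)b_j(1)$ using the $(0,1)$ constraint, and then argue that the $u<v$ constraints become redundant under convexity and monotonicity of $b_j$ (this last step is exactly the content of \cref{lem:v>u_does_not_matter}, which carries out the case split you anticipate).

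Two small points deserve tightening. First, you omit the verification that $\opt\nu>0$, without which the formula $\opt\tau_j=(1/\opt\nu-1)b_j(1)$ is ill-defined; the paper dispatches this by noting that $\nu=0$ forces $\rho\le b_j(1)/b_j(n)$ via the $(u,v)=(n,0)$ constraint, whereas any small $\nu>0$ strictly improves. Second, your justification for $\nu\le 1$ is slightly roundabout: the cleanest route, used in the paper, is simply that non-negative tolls require $\sigma_j\ge 0$, and once $\sigma_j=(1-\nu)b_j(1)$ this reads $\nu\le 1$ directly.
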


The optimal price of anarchy arising from the solution of \eqref{eq:fixedLPopt} for polynomials of order at most $d = 1,2,\ldots,6$ and $n=100$ are shown in the fifth column of \cref{tab:naware_vs_nagnostic}. Before proceeding with proving the theorem, we specialize its result to polynomial congestion games with $d\ge2$ and arbitrarily large $n$. This allows us to derive explicit expressions matching the values featured in \cref{tab:naware_vs_nagnostic} and holding for arbitrarily large $n$. We do not study the case of $d=1$ as this has been analyzed in \cite{caragiannis2010taxes}, resulting in an optimal price of anarchy of $1+2/\sqrt{3}\approx 2.15$, which we also recover through the solution of the linear program above.

\begin{corollary}
\label{cor:constanttollsd=2}
	Consider polynomial congestion games of maximum degree $d=2$ and arbitrarily large number of agents, i.e., congestion game where the cost on resource  $e$ is $\ell_e(x) = \alpha_e x^2+\beta_e x + \gamma_e$, with non-negative $\alpha_e,\beta_e,\gamma_e$.
An optimal congestion-independent mechanism satisfies
\be
T^{\rm opt}(\ell_e) = 3\alpha_e,\qquad\quad \poa(\opt T)=\frac{16}{3}\approx 5.33.
\label{eq:optimalconstant_d=2}
\ee
The result is tight for pure Nash equilibria and extends to coarse correlated equilibria.
\end{corollary}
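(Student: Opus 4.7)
The plan is to specialize \cref{thm:congestion-indep} to the basis $\{b_1, b_2, b_3\} = \{1, x, x^2\}$ and to show that the optimum of the linear program \eqref{eq:fixedLPopt} is attained at $(\opt{\rho}, \opt{\nu}) = (3/16, 1/4)$. The formula of \cref{thm:congestion-indep} would give $\opt{\tau}_j = (1/\opt{\nu} - 1) b_j(1) = 3$ on every basis; however, since the linear and constant bases turn out to be slack at the optimum, I expect the simpler alternative toll $(\tau_1, \tau_2, \tau_3) = (0, 0, 3)$, i.e., $\opt{T}(\ell_e) = 3\alpha_e$, to also achieve the optimal price of anarchy, which I would verify directly through \eqref{linprog:characterize_poa}.

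For the matching lower bound $\poa \ge 16/3$, I would invoke the tightness part of \cref{thm:congestion-indep}, reducing the task to $\opt{\rho} \le 3/16$ in \eqref{eq:fixedLPopt}. The key step is to pair two quadratic-basis constraints of opposite monotonicity in $\nu$: the constraint at $(u, v) = (2, 1)$ yields $\rho \le (1 - \nu)/4$ (decreasing in $\nu$), while the one at $(u, v) = (6, 2)$ yields $\rho \le (12 + 114\nu)/216$ (increasing in $\nu$). Their unique intersection is precisely at $(\nu, \rho) = (1/4, 3/16)$, so for any admissible $\nu$ at least one of the two forces $\rho \le 3/16$. Both constraints are present as soon as $n \ge 8$, so the bound transfers uniformly to any sufficiently large $n$.

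For the upper bound $\poa \le 16/3$, I would use the tolls $(\tau_1, \tau_2, \tau_3) = (0, 0, 3)$, equivalently $f_1 = 1$, $f_2 = x$, $f_3 = x^2 + 3$, and verify that $(\rho, \nu) = (3/16, 1/4)$ is feasible for \eqref{linprog:characterize_poa}. The constant basis produces $3v/4 + u/16 \ge 0$, which is trivial. The linear basis produces $16 v^2 + u^2 - 4(u+1) v \ge 0$, whose continuous minimum over $v$ equals $(3u+1)(u-1)/4 \ge 0$ for $u \ge 1$. The quadratic basis produces
\begin{equation*}
16 v^3 + u^3 - 4(u+1)^2 v + 12(u - v) \ge 0,
\end{equation*}
which at $v = u/2$ equals $u(u-2)^2 \ge 0$; the remaining integer pairs are handled by a short enumeration.

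The main obstacle will be certifying that no other integer pair $(u, v)$ tightens the quadratic-basis constraint below $3/16$ at $\nu = 1/4$. I would combine a finite case check for small $u$ (which identifies $(2, 1)$ and $(6, 2)$ as the only equality cases) with an asymptotic argument for $u \to \infty$ with $v = \alpha u$: the constraint ratio tends to $\alpha^3 + (1-\alpha)/4$, whose continuous minimum $1/4 - 1/(12\sqrt{3}) \approx 0.202$ stays strictly above $3/16$, ruling out any asymptotic violation. Extension to arbitrary $n$ is then immediate because the tolls are congestion-independent and the binding pairs are of constant size, mirroring the philosophy of \cref{sec:largen}.
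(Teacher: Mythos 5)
Your proposal is correct --- I verified the key computations: in \eqref{eq:fixedLPopt} with $b(x)=x^2$ the constraints at $(u,v)=(2,1)$ and $(6,2)$ do read $\rho\le(1-\nu)/4$ and $\rho\le(12+114\nu)/216$, intersecting at $(\nu,\rho)=(1/4,3/16)$, and $(\rho,\nu)=(3/16,1/4)$ is feasible for \eqref{linprog:characterize_poa} with $f_3(x)=x^2+3$ and no toll on the lower-order bases. However, your route is precisely the one the paper announces it will \emph{not} take (``following a different approach other than directly applying \cref{thm:congestion-indep}''). The paper's lower bound is built from two explicit ring instances --- an $8$-agent game covering every constant toll $\tau\ge3$ and a $3$-agent game covering $\tau<3$ --- which are exactly the primal witnesses of your two dual constraints $(6,2)$ and $(2,1)$; its upper bound is a direct smoothness argument from the Nash inequality yielding $\SC(\NE{a})\le 4\,\SC(\opt{a})+\tfrac14\SC(\NE{a})$ via the inequality $4v^3+\tfrac14 u^3-v(u+2)^2\ge0$, itself settled by a continuous minimization plus finite enumeration, i.e., the same burden as your cubic check. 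Your approach buys brevity and transparency: it mechanically identifies $\opt{\nu}=1/4$ and the two binding constraints, and it cleanly explains why the toll can be supported on the quadratic coefficient alone (the linear and constant constraints are slack, a point the corollary's statement otherwise leaves implicit relative to formula \eqref{eq:optconsttoll}). The paper's approach buys self-containedness: its lower bound does not rest on the tightness of the LP characterization inherited from \cite{chandan19optimaljournal}, and it exhibits concrete worst-case games, valid for \emph{all} constant tolls without invoking \cref{thm:congestion-indep}. Two small points to tighten on your side: the evaluation at $v=u/2$ only certifies the equality cases $(2,1)$ and $(6,2)$, so the finite enumeration together with the asymptotic bound $\alpha^3+(1-\alpha)/4\ge \tfrac14-\tfrac{1}{12\sqrt{3}}>\tfrac{3}{16}$ is genuinely needed and should be carried out; and the constraints with $u+v>n$ should be discharged explicitly via the monotonicity of the $f_j$ as in part b) of \cref{lemma1}, rather than by appeal to the ``philosophy'' of \cref{sec:largen}.
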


\noindent Following a similar line of reasoning to that of \cref{cor:constanttollsd=2} (see the next page for its proof), it is possible to derive an expression for the optimal price of anarchy with constant tolls also in the case of $3\le d\le 6$, i.e.,
\be
\poa(\opt T)=
\frac
{\bu(\bu+1)^{d+1}-\bu^{d+1}[\bu+(\bu+2)^d]+(\bu+1)^{2d+1}-(\bu+1)^{d+1}}
{\bu(\bu+1)((\bu+1)^d-\bu^d)+(\bu+1)^{d+1}-\bu(\bu+2)^d-1},
\label{eq:analytical_expression}
\ee
where $\bar u$ is the floor of the unique real positive solution to $u^{d+1}+1=(u+1)^d + u$. For example,
\[
d=3 \quad\implies \quad\bu=2,\quad \poa(\opt T)=\frac{2\cdot 3^4-2^4\cdot(2+4^3)+3^7-3^4}
{2\cdot 3\cdot(3^3-2^3)+3^4-2\cdot4^3-1}=\frac{1212}{66}\approx 18.36.
\]
Similarly, with $d=4,\dots,6$, it is, respectively, $\poa(\opt T)=111588/1248\approx89.41$, $\poa(\opt T)=1922184/4092\approx 469.74$, $\poa(\opt T)=32963196/9912\approx 3325.58$, matching the values in \cref{tab:naware_vs_nagnostic}. 
While we do not formally prove the expression \eqref{eq:analytical_expression} in the interest of conciseness, the key idea consists in observing that the two most binding constraints appearing in the linear program of \cref{thm:congestion-indep} are those obtained with $(u,v)=(\bu,1)$ and $(u,v)=(\bu+1,1)$. Turning the corresponding inequalities into equalities and solving for $\rho$ and $\nu$ gives the result in \eqref{eq:analytical_expression}.

We now turn focus on proving \cref{thm:congestion-indep}, followed by \cref{cor:constanttollsd=2}.

\begin{proof}[Proof of \cref{thm:congestion-indep}]
The fact that optimal local congestion-independent tolls are linear in the sense that $\opt T(\ell) = \opt T(\sum_{j=1}^m \alpha_jb_j)=\sum_{j=1}^m \alpha_j \opt T(b_j)$ can be proven following the same steps of \cref{thm:main-thm}. Therefore it suffices to determine the best linear local congestion-independent toll. Towards this goal, we observe that the price of anarchy of a \emph{given} linear local constant toll $T(\sum_{j=1}^m \alpha_jb_j)=\sum_{j=1}^m \alpha_j  \tau_j$, $\tau_j\in\mb{R}_{\ge0}$ can be determined as the solution of the following program, which applies, thanks to \eqref{eq:simplifiedLP-compute}, since $f_j(x)=b_j(x)+\tau_j$ is non-decreasing
	\begin{maxi}
	{{\scriptstyle \rho \in \mathbb{R}, \nu \in \mathbb{R}_{\geq 0}}}
	{\nquad\rho} {\label{linprog:characterize_poa_nondecreasing}}{}
	\addConstraint{\nquad b_j(\v)\v - \rho b_j(\u)\u + \nu [(b_j(u)+\tau_j)\u - (b_j(u+1)+\tau_j)\v]}{\ge 0}
	\addConstraint{}{}{\hspace*{-50mm}\forall u,v\in\{0,\dots,n\}\quad u+\v \leq n,\quad  \forall j \in \{1, \dots, m\},}
	\addConstraint{\nquad b_j(\v)\v - \rho b_j(\u)\u + \nu [(b_j(u)+\tau_j)(n-\v) - (b_j(u+1)+\tau_j)(n-\u)]}{\ge 0}
	\addConstraint{}{}{\hspace*{-50mm}\forall u,v\in\{0,\dots,n\}\quad u+\v > n,\quad  \forall j \in \{1, \dots, m\}.}
	\end{maxi}
We also recall that \eqref{linprog:characterize_poa_nondecreasing} tightly characterizes the price of anarchy for pure Nash equilibria, and the corresponding bound extends to coarse correlated equilibria.
Determining the best non-negative toll amounts to letting $(\tau_1,\dots,\tau_m)\in\mb{R}^m_{\ge0}$ be decision variables, over which we need to maximize. While this would result in a bi-linear program, we define $\sigma=(\sigma_1,\dots,\sigma_m)\in\mb{R}^m_{\ge0}$ with $\sigma_j=\nu\tau_j$, and consider the following linear program
\begin{maxi}
	{{\scriptstyle \rho \in \mathbb{R},~\nu \in \mathbb{R}_{\geq 0},~\sigma \in\mathbb{R}^m_{\geq 0}}}
	{\nquad\rho} {\label{eq:LP_constanttoll_nusigma}}{}
	\addConstraint{\nquad \quad b_j(\v)\v - \rho b_j(\u)\u + \nu [b_j(\u)\u - b_j(\u+1)\v] + \sigma_j(\u - \v)}{\ge 0}
	\addConstraint{}{}{\hspace*{-50mm}\forall u,v\in\{0,\dots,n\}\quad u+\v \leq n,\quad  \forall j \in \{1, \dots, m\},}
	\addConstraint{\nquad b_j(\v)\v - \rho b_j(\u)\u + \nu [b_j(\u)(n-\v) - b_j(\u +1)(n-\u)] + \sigma_j(\u-\v)}{\ge 0}
	\addConstraint{}{}{\hspace*{-50mm}\forall u,v\in\{0,\dots,n\}\quad u+\v > n,\quad  \forall j \in \{1, \dots, m\}.}
\end{maxi}
which is an exact reformulation of \eqref{linprog:characterize_poa_nondecreasing}, except for the fact that we are not including the (non-linear) constraint requiring $\sigma_j=0$ whenever $\nu=0$. We will rectify this at the end by showing that $\opt\nu>0$.

\cref{lem:v>u_does_not_matter} in the Appendix leverages the fact that the basis functions are convex, positive, non-decreasing by assumption, so that only the constraints with $u\ge v, u\ge1$ and $(u,v)=(0,1)$ need to be accounted for in \eqref{eq:LP_constanttoll_nusigma}. Due to the fact that $u-v\ge0$ for $u\ge 1$, in order to maximize $\rho$, we choose $\sigma_j$ as large as possible.
 Observing that the only upper bound on $\sigma_j$ arises from the choice of $(u,v)=(0,1)$ and reads as $\sigma_j \le (1-\nu)b_j(1)$, we set $\sigma_j = (1-\nu)b_j(1)$, and translate the constraint $\sigma_j\ge0$ into $\nu\le 1$, thus obtaining \eqref{eq:fixedLPopt}. To conclude we are left to show that $\opt\nu$ solving \eqref{eq:fixedLPopt} is non-zero. To do so, note that solving the program for fixed $\nu=0$ results in $\rho=b(1)/b(n)$ (the tightest constraint is $(u,v)=(n,0)$), while an arbitrarily small but positive \mbox{$\nu$ would give a strictly higher $\rho$.}
  
 Once $\opt\nu$ is determined, the optimal tolls can be derived from $\opt\nu \opt\tau_j=(1-\opt\nu)b_j(1)$, recalling that $\opt\nu>0$, thus yielding \eqref{eq:optconsttoll}. Non-negativity of the tolls follow from the fact that we impose $\nu\le1$ so that $\opt{\tau}_j = \left(1/{\opt\nu}-1\right)b_j(1)\ge 0$
\end{proof}

We now focus on \cref{cor:constanttollsd=2}, and prove the result following a different approach other than directly applying \cref{thm:congestion-indep}, with the hope of providing the reader with an independent perspective.

\begin{proof}[Proof of \cref{cor:constanttollsd=2}]
We prove the claim in two steps. First, we show that the price of anarchy for any constant toll and pure Nash equilibria is lower-bounded by $16/3$. Second, we show that the price of anarchy of $T^{\rm opt}$ is upper-bounded by $16/3$ \mbox{for both Nash and coarse correlated equilibria.}

For the lower bound, it suffices to consider resource costs of the form $\ell_e(x) = \alpha_e x^2$, whereby any constant linear tolling mechanisms takes the form $T(\ell_e)=\alpha_e \tau$, for some scalar $\tau\ge 0$. For any $\tau \ge 3$ we consider the following problem instance: there are $8$ agents each with two actions $\NE{a}_i$ and $a_i^{\rm opt}$. In action $\NE{a}_i$, user $i$ selects $6$ of the available $8$ resources, which are associated to $\mygamma x^2/8 $; in $a_i^{\rm opt}$ user $i$ selects the remaining two resources with costs $\mygamma x^2/8 $, as well as one resource with cost $\mybeta x^2/8 $ (we will fix $\mygamma$ and $\mybeta$ at a later stage). Each player has a similar pair of actions, but each subsequent agent is offset by one from the prior user, as depicted in \cref{fig:fixedgame_high}.
\newcommand\radius{2.4}
\begin{figure}[b!]
\begin{center}
\hspace*{-5mm}
\raisebox{-0.55\height}{\begin{tikzpicture}[scale=0.45, transform shape]%
\begin{scope}
\clip (0,0)-- +(90:\radius + 2) arc (90:225-360:\radius + 2) -- cycle;
\draw[red,thick,fill=red,fill opacity=0.2] (0,0) circle (\radius + 0.75);
\draw[red,thick,fill=white] (0,0) circle (\radius - 0.75);
\end{scope}%
\begin{scope}
\clip (0,0)-- +(90:\radius + 2) arc (90:135:\radius + 2) -- cycle;
\draw[red,thick,fill=red,fill opacity=0.2] (\fpeval{\radius*cos(90*pi/180)},\fpeval{\radius*sin(90*pi/180)}) circle(0.75);
\end{scope}%
\begin{scope}
\clip (0,0)-- +(225:\radius + 2) arc (225:180:\radius + 2) -- cycle;
\draw[red,thick,fill=red,fill opacity=0.2] (\fpeval{\radius*cos(225*pi/180)},\fpeval{\radius*sin(225*pi/180)}) circle(0.75);
\end{scope}%
\begin{scope}
\clip (0,0)-- +(180:\radius + 2) arc (180:135:\radius + 2) -- cycle;
\draw[blue,thick,fill=blue,fill opacity=0.2] (0,0) circle (\radius + 0.75);
\draw[blue,thick,fill=white] (0,0) circle (\radius - 0.75);
\end{scope}%
\begin{scope}
\clip (0,0)-- +(180:\radius + 2) arc (180:225:\radius + 2) -- cycle;
\draw[blue,thick,fill=blue,fill opacity=0.2] (\fpeval{\radius*cos(180*pi/180)},\fpeval{\radius*sin(180*pi/180)}) circle(0.75);
\end{scope}%
\begin{scope}
\clip (0,0)-- +(135:\radius + 2) arc (135:90:\radius + 2) -- cycle;
\draw[blue,thick,fill=blue,fill opacity=0.2] (\fpeval{\radius*cos(135*pi/180)},\fpeval{\radius*sin(135*pi/180)}) circle(0.75);
\end{scope}%
\begin{scope}
\clip (0,0)-- +(45:\radius + 2) arc (45:180-360:\radius + 2) -- cycle;
\draw[red,thick,dashed,fill=red,fill opacity=0.2] (0,0) circle (\radius + 0.9);
\draw[red,thick,dashed,fill=white] (0,0) circle (\radius - 0.9);
\end{scope}%
\begin{scope}
\clip (0,0)-- +(45:\radius + 2) arc (45:90:\radius + 2) -- cycle;
\draw[red,thick,dashed,fill=red,fill opacity=0.2] (\fpeval{\radius*cos(45*pi/180)},\fpeval{\radius*sin(45*pi/180)}) circle(0.9);
\end{scope}%
\begin{scope}
\clip (0,0)-- +(180:\radius + 2) arc (180:135:\radius + 2) -- cycle;
\draw[red,thick,dashed,fill=red,fill opacity=0.2] (\fpeval{\radius*cos(180*pi/180)},\fpeval{\radius*sin(180*pi/180)}) circle(0.9);
\end{scope}%
\begin{scope}
\clip (0,0)-- +(135:\radius + 2) arc (135:90:\radius + 2) -- cycle;
\draw[blue,thick,dashed,fill=blue,fill opacity=0.2] (0,0) circle (\radius + 0.9);
\draw[blue,thick,dashed,fill=white] (0,0) circle (\radius - 0.9);
\end{scope}%
\begin{scope}
\clip (0,0)-- +(135:\radius + 2) arc (135:180:\radius + 2) -- cycle;
\draw[blue,thick,dashed,fill=blue,fill opacity=0.2] (\fpeval{\radius*cos(135*pi/180)},\fpeval{\radius*sin(135*pi/180)}) circle(0.9);
\end{scope}%
\begin{scope}
\clip (0,0)-- +(90:\radius + 2) arc (90:45:\radius + 2) -- cycle;
\draw[blue,thick,dashed,fill=blue,fill opacity=0.2] (\fpeval{\radius*cos(90*pi/180)},\fpeval{\radius*sin(90*pi/180)}) circle(0.9);
\end{scope}%
\fill[fill = white] (0,0) circle (\radius - 0.91);%
\foreach \n in {0,1,2,3,4,5,6,7}{
	\draw (\fpeval{\radius*cos(\n*pi/4)},\fpeval{\radius*sin(\n*pi/4)}) node[circle,draw]{$\mygamma/8$};}%
\node[red] at (\fpeval{(\radius+1.25)*cos(3*pi/8)},\fpeval{(\radius+1.25)*sin(3*pi/8)}) {\Huge$\NE{a}_1$};%
\node[blue] at (\fpeval{(\radius+1.25)*cos(7*pi/8)},\fpeval{(\radius+1.25)*sin(7*pi/8)}) {\Huge$\opt{a}_1~$};%
\node[red] at (\fpeval{(\radius+1.5)*cos(1*pi/8)},\fpeval{(\radius+1.5)*sin(1*pi/8)}) {\Huge$\NE{a}_2$};%
\node[blue] at (\fpeval{(\radius+1.5)*cos(5*pi/8)},\fpeval{(\radius+1.5)*sin(5*pi/8)}) {\Huge$\opt{a}_2$};%
\end{tikzpicture}}
\hspace*{-5mm}
\raisebox{-0.5\height}{
\begin{tikzpicture}[scale=0.45, transform shape]
\draw[blue,thick,fill=blue,fill opacity=0.2] (\fpeval{\radius*cos(180*pi/180)},\fpeval{\radius*sin(180*pi/180)}) circle(0.75);
\draw[blue,thick,dashed,fill=blue,fill opacity=0.2] (\fpeval{\radius*cos(135*pi/180)},\fpeval{\radius*sin(135*pi/180)}) circle(0.9);
\foreach \n in {0,1,2,3,4,5,6,7}{
	\draw (\fpeval{\radius*cos(\n*pi/4)},\fpeval{\radius*sin(\n*pi/4)}) node[circle,draw]{$\mybeta/8$};}
\node[blue] at (\fpeval{(\radius+1.25)*cos(8*pi/8)},\fpeval{(\radius+1.25)*sin(8*pi/8)}) {\Huge$\opt{a}_1~$};
\node[blue] at (\fpeval{(\radius+1.5)*cos(6*pi/8)},\fpeval{(\radius+1.5)*sin(6*pi/8)}) {\Huge$\opt{a}_2$};
\end{tikzpicture}}
\hspace*{2mm}
\raisebox{-0.5\height}{
\begin{tikzpicture}
\draw (0,-2.2) -- (0,1.7);	
\end{tikzpicture}}
\hspace*{-7mm}
\raisebox{-0.5\height}{
\begin{tikzpicture}[scale=0.45, transform shape]
\begin{scope}
\clip (0,0)-- +(120:\radius + 2) arc (120:0:\radius + 2) -- cycle;
\draw[red,thick,fill=red,fill opacity=0.2] (0,0) circle (\radius + 0.75);
\draw[red,thick,fill=white] (0,0) circle (\radius - 0.75);
\end{scope}

\begin{scope}
\clip (0,0)-- +(120:\radius + 2) arc (120:180:\radius + 2) -- cycle;
\draw[red,thick,fill=red,fill opacity=0.2] (\fpeval{\radius*cos(120*pi/180)},\fpeval{\radius*sin(120*pi/180)}) circle(0.75);
\end{scope}

\begin{scope}
\clip (0,0)-- +(0:\radius + 2) arc (0:-90:\radius + 2) -- cycle;
\draw[red,thick,fill=red,fill opacity=0.2] (\fpeval{\radius*cos(0*pi/180)},\fpeval{\radius*sin(0*pi/180)}) circle(0.75);
\end{scope}
\draw[blue,thick,fill=blue,fill opacity=0.2] (\fpeval{\radius*cos(240*pi/180)},\fpeval{\radius*sin(240*pi/180)}) circle(0.75);
\begin{scope}
\clip (0,0)-- +(0:\radius + 2) arc (0:-120:\radius + 2) -- cycle;
\draw[red,thick,dashed,fill=red,fill opacity=0.2] (0,0) circle (\radius + 0.9);
\draw[red,thick,dashed,fill=white] (0,0) circle (\radius - 0.9);
\end{scope}

\begin{scope}
\clip (0,0)-- +(0:\radius + 2) arc (0:45:\radius + 2) -- cycle;
\draw[red,thick,dashed,fill=red,fill opacity=0.2] (\fpeval{\radius*cos(0*pi/180)},\fpeval{\radius*sin(0*pi/180)}) circle(0.9);
\end{scope}

\begin{scope}
\clip (0,0)-- +(-120:\radius + 2) arc (-120:-180:\radius + 2) -- cycle;
\draw[red,thick,dashed,fill=red,fill opacity=0.2] (\fpeval{\radius*cos(-120*pi/180)},\fpeval{\radius*sin(-120*pi/180)}) circle(0.9);
\end{scope}

\draw[blue,thick,dashed,fill=blue,fill opacity=0.2] (\fpeval{\radius*cos(120*pi/180)},\fpeval{\radius*sin(120*pi/180)}) circle(0.9);

\fill[fill = white] (0,0) circle (\radius - 0.91);
\foreach \n in {0,1,2}{
	\draw (\fpeval{\radius*cos(\n*2*pi/3)},\fpeval{\radius*sin(\n*2*pi/3)}) node[circle,draw]{$\mygamma/3$};
}

\node[red] at (\fpeval{(\radius+1.25)*cos(2*pi/8)},\fpeval{(\radius+1.25)*sin(2*pi/8)}) {\Huge$\NE{a}_1$};

\node[blue] at (\fpeval{(\radius+1.5)*cos(4*pi/3)},\fpeval{(\radius+1.5)*sin(4*pi/3)}) {\Huge$\opt{a}_1$};

\node[red] at (\fpeval{(\radius+1.5)*cos(-2*pi/8)},\fpeval{(\radius+1.5)*sin(-2*pi/8)}) {\Huge$\NE{a}_2$};

\node[blue] at (\fpeval{(\radius+1.5)*cos(2*pi/3)},\fpeval{(\radius+1.5)*sin(2*pi/3)}) {\Huge$\opt{a}_2$};
\end{tikzpicture}
}
\raisebox{-0.55\height}{
\begin{tikzpicture}[scale=0.45, transform shape]
\draw[blue,thick,fill=blue,fill opacity=0.2] (\fpeval{\radius*cos(240*pi/180)},\fpeval{\radius*sin(240*pi/180)}) circle(0.75);
\draw[blue,thick,dashed,fill=blue,fill opacity=0.2] (\fpeval{\radius*cos(120*pi/180)},\fpeval{\radius*sin(120*pi/180)}) circle(0.9);

\fill[fill = white] (0,0) circle (\radius - 0.91);

\foreach \n in {0,1,2}{
	\draw (\fpeval{\radius*cos(\n*2*pi/3)},\fpeval{\radius*sin(\n*2*pi/3)}) node[circle,draw]{$\mybeta/3$};
}

\node[blue] at (\fpeval{(\radius+1.5)*cos(4*pi/3)},\fpeval{(\radius+1.5)*sin(4*pi/3)}) {\Huge$\opt{a}_1$};

\node[blue] at (\fpeval{(\radius+1.5)*cos(2*pi/3)},\fpeval{(\radius+1.5)*sin(2*pi/3)}) {\Huge$\opt{a}_2$};
\end{tikzpicture}
}
\end{center}
\vspace*{-7mm}
\caption{Game construction used to lower bound the price of anarchy for quadratic congestion games with fixed tolls $\tau\ge 3$ (left) and $\tau< 3$ (right). On the left (resp. right), the available actions of two of the eight (resp. three) agents are shown.  The solid red shape contains the resources utilized by the first user in the action $\NE{a}_1$, while the solid blue shape contains the resources utilized by the first user in the action $\opt{a}_1$. User 2 has similar actions but rotated clockwise on each circle by one resource. Each of the remaining agents' actions are defined similarly by rotating about the apparent ``ring''.}
\label{fig:fixedgame_high}
\end{figure}
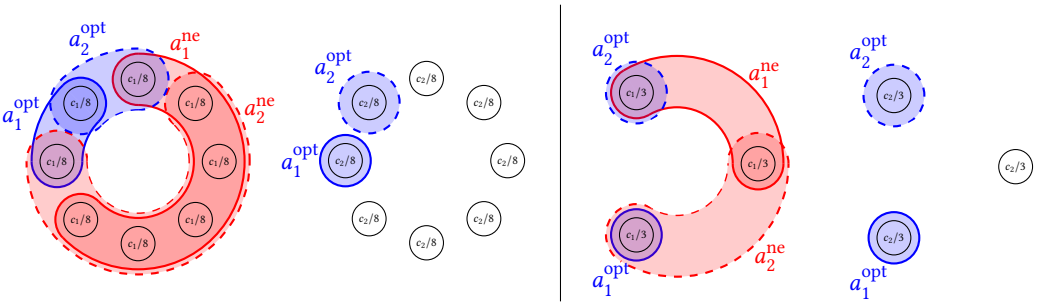
In this game, the system and user costs can be computed as in the following $\SC(\NE{a})=\sum_{e}|a|_e \ell_e(|a|_e)=8\cdot 6 \ell_e(6)=8\cdot 6 c_1 b(6)/8$, $C_i(\NE{a})=\sum_{e\in a_i} [\ell_e(|a|_e)+\alpha_e\tau]=6\cdot (c_1 b(6)/8+c_1\tau/8)=6c_1(b(6)+\tau)/8$, and similarly
\be
\label{eq:game1_NE_SC}
\begin{split}
&\SC(\NE{a}) = 6(\mygamma b(6)) = 216 \mygamma,
\hspace*{20mm}\qquad\SC(a^{\rm opt}) = 2 \mygamma b(2) + \mybeta b(1) = 8\mygamma + \mybeta,
\\
&C_i(\NE{a}) = \frac{6}{8}\mygamma(b(6)+\tau) = \frac{1}{8}(216 \mygamma + 6\mygamma \tau),\qquad 
C_i(\opt{a}_i,\NE{a}_{-i}) = \frac{2}{8}\mygamma(49+\tau) + \frac{1}{8}\mybeta(1+\tau).
\end{split}
\ee
We normalize the costs in the game setting $\SC(\NE{a}) = 1$, which results in $\mygamma = 1/216$ from \eqref{eq:game1_NE_SC}. To ensure that the joint action $\NE{a}$ is a Nash equilibrium (at least weakly), we impose that \mbox{$C_i(\NE{a}) = C_i(\opt{a}_i,\NE{a}_{-i})$} for any player $i$. This condition is satisfied when $\mybeta = ({2\tau+59})/{(108(1+\tau))}$. 
Hence, the price of anarchy in this game is lower-bounded by $\SC(\NE{a})/\SC(a^{\rm opt}) = 1/\SC(a^{\rm opt})=1/({8\mygamma+\mybeta})$. This expression is no smaller than $16/3$ for any choice of $\tau\ge 3$ (in particular, equal when we set $\tau=3$), where we have utilized the values of $\mybeta$ from above and $\mygamma=1/216$.

For $\tau <3$ we construct a game with similar features. There are $3$ users each with actions $\NE{a}_i$ and $a_i^{\rm opt}$. In action $\NE{a}_i$, user $i$ selects $2$ of the $3$ available resources featuring a cost $\mygamma x^2/3$; in $a_i^{\rm opt}$ they select the remaining resource with cost $\mygamma x^2/3 $, as well as one resource with cost $\mybeta x^3/3 $. Each user has a similar pair of actions, but each subsequent agent is offset by one from the prior (see~\cref{fig:fixedgame_high}). In this game, we obtain the following system and user costs:
\[
\begin{split}
&\SC(\NE{a}) = 2(\mygamma b(2)) = 8 \mygamma,\qquad\hspace*{20mm}
\SC(a^{\rm opt}) =  \mygamma b(1) + \mybeta b(1) = \mygamma + \mybeta 
\\
&C_i(\NE{a}) = \frac{2}{3}\mygamma(b(2)+\tau) = \frac{1}{3}(8 \mygamma + 2\mygamma \tau),
\qquad C_i(\opt{a}_i,\NE{a}_{-i}) = \frac{\mygamma}{3}(9+\tau) + \frac{\mybeta}{3}(1+\tau).
\end{split} 
\]
As in the previous example, we set $\SC(\NE{a}) = 1$ implying $\mygamma = 1/8$. To ensure that the joint action $\NE{a}$ is a Nash equilibrium, we impose that \mbox{$C_i(\NE{a}) = C_i(\opt{a}_i,\NE{a}_{-i})$} for any player $i$, resulting in $\mybeta = ({\tau - 1})/{(8(1+\tau))}$. The resulting price of anarchy is lower-bounded by $1/\SC(a^{\rm opt})=1/(\mygamma+\mybeta)$. This quantity is no smaller than $16/3$ for any choice of $\tau<3$.

Finally, for the fixed toll $\tau = 3$, we upper-bound the price of anarchy at $16/3$. For ease of presentation, we first consider the case where the cost on resource $e$ is $\ell_e(x)=\alpha_e x^2$ for some $\alpha_e\ge0$. We will show at the end how to extend this result to the case of $\ell_e(x) = \alpha_e x^2+\beta_e x +\gamma_e$. 
Towards this goal, let $\NE{a}$ (resp. $\opt{a}$) be an equilibrium (resp. optimum) allocation in a congestion game $G$, with $n$ users, resources in $e\in\mc{E}$. The cost at equilibrium satisfies
\begingroup
\allowdisplaybreaks
\begin{align}
 \SC(\NE{a}) &\leq \sum_{i=1}^n C_i(a_i^{\rm opt},\NE{a}_{-i}) - \sum_{i=1}^n C_i(\NE{a}) + \SC(\NE{a}) \label{eq:ub_1}\\
& = \sum_{e \in \mc{E}} \alpha_e \big[ z_e((x_e +y_e+1)^2+\tau) - y_e ( (x_e+y_e)^2 + \tau ) + (x_e+y_e)^3 \big] \label{eq:ub_2} \\
&\leq \sum_{e \in \mc{E}} \alpha_e [ (x_e+z_e)((x_e+y_e+1)^2+3)- (x_e+y_e)((x_e+y_e)^2+3) + (x_e+y_e)^3 ] \label{eq:ub_4}\\
& = \sum_{e \in \mc{E}} \alpha_e \left[ 3((x_e+z_e)-(x_e+y_e))+(x_e+z_e)(x_e+y_e+1)^2 \right] \nonumber\\
& \leq \sum_{e \in \mc{E}} \alpha_e \left[4(x_e+z_e)^3+\frac{1}{4}(x_e+y_e)^3 \right]= 4 \SC(a^{\rm opt}) + \frac{1}{4}\SC(\NE{a}),\label{eq:ub_7} 
\end{align}
\endgroup
\noindent where $y_e = |\NE{a}|_e - x_e$, $z_e = |\opt a|_e - x_e$, and $x_e = |\{ i \in N \text{ s.t. } e \in \NE{a}_i \cap \opt a_i \}|$. Observe that \eqref{eq:ub_1} holds from the definition of Nash equilibrium, while \eqref{eq:ub_2} follows from the parameterization introduced in \cite{paccagnan2018distributed}, and substituting $b_e(x) = x^2$. Equation \eqref{eq:ub_4} follows by replacing $\tau = 3$ and by $x_e \geq 0$. To see that \eqref{eq:ub_7} holds for all integers $x_e,y_e \geq 0$, we define $\u=x_e+y_e\ge 0$, $\v=x_e+z_e\ge 0$, and divide the argument in two parts depending on whether the integer tuple $(\u,\v) \in \{ \u \geq 22 ~ {\rm or} ~ \v \geq 8 \}$ or not. For the case of $(\u,\v) \in \{ \u \geq 22 ~ {\rm or}~ \v \geq 8 \}$, we observe that 
$
	4\v^3  + \frac{1}{4}\u^3 - 3\v +3\u-\v(\u+1)^2
	\ge
	4\v^3 + \frac{1}{4}\u^3 - \v(\u^2+2\u+4)
	\ge
	4\v^3 + \frac{1}{4}\u^3 - \v(\u+2)^2,
$
and therefore we are left to prove
\be
4\v^3 + \frac{1}{4}\u^3 - \v(\u+2)^2\ge0.
\label{eq:ub_6_stricter}
\ee
For every fixed $\u\ge0$, differentiating with respect to $\v$ shows that the left hand side of \eqref{eq:ub_6_stricter} has a unique global minimum in the positive orthant at $\v = (\u+2)/{\sqrt{12}}$. For any $\u > 22$, this minimum satisfies \eqref{eq:ub_6_stricter}, thus for any $\v\geq 0$ and $\u > 22$ \eqref{eq:ub_6_stricter} is satisfied.
Additionally observe that, when $\v= 8$,  \eqref{eq:ub_6_stricter} is satisfied for each $\u \in \{0, \ldots , 22\}$. Further, for fixed $0\le\u\le 22$, the left hand side of \eqref{eq:ub_6_stricter} is increasing in $\v$ for $\v\ge8$. This implies that \eqref{eq:ub_6_stricter} holds for every $\v\geq 8$ as well. Therefore \eqref{eq:ub_6_stricter} (and consequently \eqref{eq:ub_7}) is satisfied for all $(\u,\v) \in \{ \u \geq 22 ~ {\rm or} ~ \v \geq 8 \}$. One can enumerate the finitely-many non-negative integers $(\u,\v)$ with $\u<22$, $\v<8$ and verify that \eqref{eq:ub_7} holds.

The inequality in \eqref{eq:ub_7} implies that the price of anarchy is upper bounded by $\frac{4}{1-1/4} = 16/3$ when resource costs take the form $\ell_e(x)=\alpha_e x^2$ for some $\alpha_e\ge0$. Observe that this bound holds for arbitrarily large $n$ and matches the solution of the linear program, stated in \cref{tab:naware_vs_nagnostic}. We now generalize this result to $\ell_e(x) = \alpha_e x^2+\beta_e x +\gamma_e$, where $\alpha_e$, $\beta_e$ and $\gamma_e$ are non-negative. To do so, we start from \eqref{eq:ub_1}, and note that \eqref{eq:ub_2} now contains the sum of three contributions: contributions relative to $\alpha_e$, contributions relative to $\beta_e$ and contributions relative to $\gamma_e$. Hence, it suffices to prove the following two additional inequalities to complete the reasoning, that is
\begin{align}
\sum_{e \in \mc{E}} \beta_e \left[ z_e(x_e+y_e+1) - y_e (x_e+y_e) + (x_e+y_e)^2 \right] 
&\le \sum_{e \in \mc{E}} \beta_e \left[4(x_e+z_e)^2+\frac{1}{4}(x_e+y_e)^2 \right],\label{eq:prooflinear}
\\
\sum_{e \in \mc{E}} \gamma_e \left[ z_e - y_e  + x_e+y_e \right] 
&\le\sum_{e \in \mc{E}} \gamma_e \left[4(x_e+z_e)+\frac{1}{4}(x_e+y_e) \right],
\label{eq:proofcostant}
\end{align}
where we recall that no toll is associated to the presence of $\beta_e$ or $\gamma_e$ in \eqref{eq:optimalconstant_d=2}. Summing these two inequalities with the inequality from \eqref{eq:ub_2} and \eqref{eq:ub_7} will, in fact, yield the desired claim. While the proof of \eqref{eq:proofcostant} is immediate, the argument used to show \eqref{eq:prooflinear} is similar to that following \eqref{eq:ub_2}. In fact, since $x_e\ge0$, we have $z_e(x_e +y_e+1) - y_e (x_e+y_e) + (x_e+y_e)^2 \le(x_e+z_e)(x_e+y_e+1) - (x_e+y_e)^2+ (x_e+y_e)^2$. Thus, we are left to show that for every non-negative integer $u$ and $v$ it is $4v^2+\frac{1}{4}u^2-v-uv\ge 0$, where we make use of the same coordinates introduced earlier. This inequality is satisfied by all non-negative integer points since $4v^2+\frac{1}{4}u^2-v-uv\ge 0$ describes the region outside an ellipse located in the $(u,v)$ plane entirely on the left of the line $u=1$ and entirely south of the line $v=1$, where the inequality is trivially satisfied for $u=v=0$. Finally, we observe that the technique used to bound the price of anarchy extends to coarse correlated equilibria due \mbox{to linearity of the expectation.}
\end{proof}

\section{(In)efficiency of the marginal cost mechanism}
\label{sec:pigouvtolls}

In this section we study the efficiency of the marginal cost mechanism, whereby the toll imposed to each user corresponds to her marginal contribution to the system cost. In the atomic setup, the marginal cost mechanism takes the form $\pig{T}(\ell)=\pig{\tau}$, and the corresponding tolls read $\pig{\tau}(x)=(x-1)(\ell(x)-\ell(x-1))$, where we set $\ell(0)=0$. We recall that $\pig{T}$ ensures that the best performing equilibrium is optimal, i.e., its price of stability is one \cite{marden2013distributed}. The following Corollary shows how to compute $\poa(\pig{T})$ through the solution of a linear program when bases are discrete convex (see \cref{foot:cvx}). We also provide the analytical expression of $\poa(\pig{T})$ for polynomial congestion games with $d=1$, and note that a similar argument carries over when~$d\ge1$. 

\begin{corollary}
\label{cor:marginalcost}
The price of anarchy of the marginal cost mechanism $\pig{T}(\ell)=\pig{\tau}$, with $\pig{\tau}(x)=(x-1)(\ell(x)-\ell(x-1))$ over congestion games with $n$ agents, and resource costs generated by a non-negative linear combination of convex basis functions $\{b_1,\dots,b_m\}$ equals $1/\opt{\rho}$, where $\opt{\rho}$ solves the following linear program
\begin{maxi*}
	{{\scriptstyle \rho \in \mathbb{R}, \nu \in \mathbb{R}_{\geq 0}}}
	{\rho} {}{}
	\addConstraint{}{}{}
	\addConstraint{\!\!\!\!\!\!\!\!\!\!\!\!\!\!\!\!\!\!\!\!\!\!\!\!\!\!\!b_j(\v)\v - \rho b_j(\u)\u + \nu 
    [(u^2-uv)b_j(u)-u(u-1)b_j(u-1)-v(u+1)b_j(u+1)]}{\ge 0}
	\addConstraint{}{}{\hspace*{-80mm}\forall u,v\in\{0,\dots,n\}\quad u+\v \leq n,\quad  \forall j \in \{1, \dots, m\},}
	\addConstraint{\!\!\!\!\!\!\!\!\!\!\!\!\!\!\!\!\!\!\!\!\!\!\!\!\!\!\!b_j(\v)\v - \rho b_j(\u)\u + \nu 
    [ub_j(u)(2n-u-v)+(u-1)b_j(u-1)(v-n)+(u+1)b_j(u+1)(u-n)]}{\ge 0}
	\addConstraint{}{}{\hspace*{-80mm}\forall u,v\in\{0,\dots,n\}\quad u+\v > n,\quad  \forall j \in \{1, \dots, m\}.}
\end{maxi*}
where we set $b_j(-1)=b_j(0)=b_j(n+1)=0$.\footnote{The result also holds under the weaker requirement that only $b_j(x)x$ are convex.} 

For affine congestion games with arbitrarily large number of agents, we have $\poa(\pig{T})=3$. Both results are tight for pure Nash equilibria, and also hold for coarse correlated equilibria.
\end{corollary}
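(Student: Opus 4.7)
I would obtain the LP characterization in part (1) by specializing the tight LP from the proof of \cref{thm:main-thm} (equation \eqref{linprog:characterize_poa}, or its simplified form \eqref{eq:simplifiedLP-compute}) to the tolls produced by the marginal-cost mechanism. Because $\pig{T}$ is linear in the resource cost, namely $\pig{T}(\sum_j\alpha_j b_j)=\sum_j\alpha_j \pig{\tau}_j$ with $\pig{\tau}_j(x)=(x-1)[b_j(x)-b_j(x-1)]$, the same basis-by-basis argument used in the proof of \cref{thm:main-thm} applies and reduces the analysis to one basis at a time. For each basis the tolled cost is $f_j(x)=xb_j(x)-(x-1)b_j(x-1)$; a short calculation gives $f_j(u+1)-f_j(u) = (u+1)b_j(u+1)-2ub_j(u)+(u-1)b_j(u-1)\ge 0$ by (discrete) convexity of $xb_j(x)$, so $f_j$ is non-decreasing and the simplified LP \eqref{eq:simplifiedLP-compute} applies. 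Substituting the expressions for $f_j(u)u$ and $f_j(u+1)v$ in the two families of constraints and collecting terms yields exactly the LP in the corollary.

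For part (2), I would specialize to $b(x)=x$ (so $f(x)=2x-1$) and compute the LP's optimum in closed form. The upper bound $\poa(\pig{T})\le 3$ follows by exhibiting the feasible pair $(\rho,\nu)=(1/3,1/3)$. After substitution and rearrangement the first family of constraints (for $u+v\le n$) becomes $3v^2+u^2-2uv-u-v\ge 0$, whose discriminant in $v$ equals $-8u^2+16u+1$ and is non-positive for every integer $u\ge 3$; the residual cases $u\in\{0,1,2\}$ can be verified by direct inspection on integer $v\ge 0$. The second family (for $u+v>n$) reduces to $3v^2+u^2-2uv+u+v\ge 2n$, which I would verify by writing the left-hand side as $(u-v)^2+2v^2+u+v$ and performing a case split on $v$ (small $v$, especially $v=1$ paired with $u=n$, versus values of $v$ near the unconstrained real minimum of the quadratic). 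The matching lower bound $\poa(\pig{T})\ge 3$ is encoded directly in the LP: the two constraints of the first family indexed by $(u,v)=(1,0)$ and $(u,v)=(1,1)$ yield respectively $\nu\ge\rho$ and $\nu\le(1-\rho)/2$, forcing $3\rho\le 1$ for every $n\ge 2$. Since the LP \eqref{linprog:characterize_poa} tightly characterizes the PoA for both pure Nash and coarse correlated equilibria, the value $\poa(\pig{T})=3$ is attained and the corresponding extremal instance is recovered from the LP dual, extending automatically to coarse correlated equilibria.

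\textit{Main obstacle.} The only substantive technical step is verifying feasibility in the second constraint family: while the first reduces to a clean discriminant computation, the second involves the interplay between $u$, $v$, and $n$ under $u+v>n$. The bounds on $\rho$ from the two binding first-family constraints are easy to extract, so the hard part of the argument is purely the algebraic certification that $(\rho,\nu)=(1/3,1/3)$ remains feasible for all parameters; no new conceptual ingredient beyond careful case analysis should be required.
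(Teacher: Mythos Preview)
Your plan is correct and tracks the paper closely for the LP characterization and the upper bound: both the paper and you reduce the affine case to the inequality $3v^2+u^2-2uv-u-v\ge 0$ on non-negative integers (the paper via a direct smoothness computation on an arbitrary instance, you via LP feasibility of $(\rho,\nu)=(1/3,1/3)$). The ``main obstacle'' you flag is in fact not one: since you already established that $f(x)=2x-1$ is non-decreasing, for $u+v>n$ one has
\[
f(u)(n-v)-f(u+1)(n-u)=f(u)u-f(u+1)v+(n-u-v)\bigl(f(u)-f(u+1)\bigr)\ge f(u)u-f(u+1)v,
\]
so the second family is implied by the first-family-type inequality holding for \emph{all} non-negative integers $(u,v)$, which your discriminant argument already delivers; no separate case analysis in $n$ is needed. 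Where you genuinely diverge is the lower bound: the paper exhibits an explicit two-player routing instance achieving $\poa=3$, whereas you read off $\rho\le 1/3$ directly from the two LP constraints at $(u,v)=(1,0)$ and $(1,1)$. Your route is cleaner and fully rigorous given the tightness of \eqref{linprog:characterize_poa}; the paper's route has the advantage of displaying the worst-case game concretely rather than leaving it implicit in the dual.
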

\begin{proof}
We begin with the first claim, and observe that the marginal cost mechanism is linear, in the sense that $\pig{T}(\sum_{j=1}^m \alpha_j b_j) = \sum_{j=1}^m \alpha_j\pig{T}(b_j)$ for all non-negative $\alpha_j$ and for all basis functions. Additionally, the functions $f_j(x)=b_j(x)+(x-1)(b_j(x)-b_j(x-1))$ are non-decreasing in their domain. This is because
\[
\begin{split}
f_j(x+1)- f_j(x)
&=
b_j(x+1)+x(b_j(x+1)-b_j(x))-b_j(x)-(x-1)(b_j(x)-b_j(x-1))\\
&=(x+1)b_j(x+1)-xb_j(x)-(xb_j(x)-(x-1)b_j(x-1))\ge0,
\end{split}
\]
for all $x\in\{1,\dots,n-1\}$, where the inequality holds as each function $b_j(x)x$ is convex (since each $b_j(x)$ is so), and thus its discrete derivative is non-decreasing. It follows that the price of anarchy can be computed using the linear program in \eqref{eq:simplifiedLP-compute} which provides tight results for pure Nash equilibria that extend to coarse correlated equilibria. Substituting $f_j(x)=b_j(x)+(x-1)(b_j(x)-b_j(x-1))$ in \eqref{eq:simplifiedLP-compute} we obtain the desired result.

We now focus on affine congestion games, and prove that $\poa(\pig{T})=3$. Towards this goal, we observe that \cref{example:tolledpigouvian} is an example of an affine congestion game using the marginal cost mechanism. Thus, we conclude that $\poa(\pig\taxmechanism) \geq 3$ for affine \mbox{congestion games and pure Nash equilibria.}
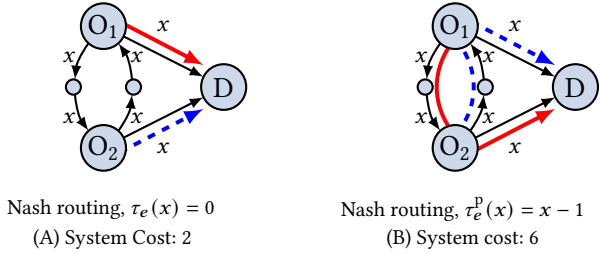
\begin{figure}[h!]
\centering
\begin{tikzpicture}
\newlength{\myedgewidth}
\setlength{\myedgewidth}{0.8pt}
  \Vertex[x=1.4, y=0.8,  color=pnasblueback, label={\large$\rm{O}_1$}, style={line width=\myedgewidth}]{C}
  \Vertex[x=1.4, y=-0.8,  color=pnasblueback, label={\large$\rm{O}_2$}, style={line width=\myedgewidth}]{D}
  \Vertex[x=1, y=0,  color=pnasblueback, size=0.2, style={line width=\myedgewidth}]{betweenCDleft}
  \Vertex[x=1.8, y=0,  color=pnasblueback, size=0.2, style={line width=\myedgewidth}]{betweenCDright}
  \Vertex[x=3, y=0,  color=pnasblueback, label={\large\rm{D}}, style={line width=\myedgewidth}]{E}
  \Edge[Direct, bend=-15, lw=\myedgewidth, color=black](C)(betweenCDleft)
  \Edge[Direct, bend=-15, lw=\myedgewidth, color=black](betweenCDright)(C)
  \Edge[Direct, bend=-15, lw=\myedgewidth, color=black](betweenCDleft)(D)
  \Edge[Direct, bend=-15, lw=\myedgewidth, color=black](D)(betweenCDright)
  \Edge[Direct, lw=\myedgewidth, color=black](C)(E)
  \Edge[Direct, lw=\myedgewidth, color=black](D)(E)
  \Text[x=2.2, y=0.8, fontsize=\small]{$x$}
  \Text[x=2.2, y=-0.8, fontsize=\small]{$x$}
  \Text[x=0.95, y=0.4, fontsize=\small]{$x$}
  \Text[x=1.85, y=0.4, fontsize=\small]{$x$}
  \Text[x=0.95, y=-0.4, fontsize=\small]{$x$}
  \Text[x=1.85, y=-0.4, fontsize=\small]{$x$}
  \Text[x=1.5, y=-1.6, fontsize=\footnotesize]{Nash routing, $\tau_e(x)=0$}
  \Text[x=1.52, y=-2, fontsize=\footnotesize]{(A)~~System Cost: 2}%
  \Vertex[x=1.34, y=1, Pseudo, size=0]{N2bid}
  \Vertex[x=3, y=0.15, Pseudo]{N3}
  \Edge[color=red, Direct](N2bid)(N3)
  \Vertex[x=1.34, y=-1, Pseudo, size=0]{bisN2bid}
  \Vertex[x=3, y=-0.15, Pseudo]{bisN3}
  \Edge[color=blue, Direct, style={dashed}](bisN2bid)(bisN3)  
\end{tikzpicture}
\hspace*{10mm}
\begin{tikzpicture}
\setlength{\myedgewidth}{0.8pt}
  \Vertex[x=1.4, y=0.8,  color=pnasblueback, label={\large$\rm{O}_1$}, style={line width=\myedgewidth}]{C}
  \Vertex[x=1.4, y=-0.8,  color=pnasblueback, label={\large$\rm{O}_2$}, style={line width=\myedgewidth}]{D}
  \Vertex[x=1, y=0,  color=pnasblueback, size=0.2, style={line width=\myedgewidth}]{betweenCDleft}
  \Vertex[x=1.8, y=0,  color=pnasblueback, size=0.2, style={line width=\myedgewidth}]{betweenCDright}
  \Vertex[x=3, y=0,  color=pnasblueback, label={\large\rm{D}}, style={line width=\myedgewidth}]{E}
  \Edge[Direct, bend=-15, lw=\myedgewidth, color=black](C)(betweenCDleft)
  \Edge[Direct, bend=-15, lw=\myedgewidth, color=black](betweenCDright)(C)
  \Edge[Direct, bend=-15, lw=\myedgewidth, color=black](betweenCDleft)(D)
  \Edge[Direct, bend=-15, lw=\myedgewidth, color=black](D)(betweenCDright)
  \Edge[Direct, lw=\myedgewidth, color=black](C)(E)
  \Edge[Direct, lw=\myedgewidth, color=black](D)(E)
  \Text[x=2.2, y=0.8, fontsize=\small]{$x$}
  \Text[x=2.2, y=-0.8, fontsize=\small]{$x$}
  \Text[x=0.95, y=0.4, fontsize=\small]{$x$}
  \Text[x=1.85, y=0.4, fontsize=\small]{$x$}
  \Text[x=0.95, y=-0.4, fontsize=\small]{$x$}
  \Text[x=1.85, y=-0.4, fontsize=\small]{$x$}
  \Text[x=1.5, y=-1.6, fontsize=\footnotesize]{Nash routing, $\pig{\tau}_e(x)=x-1$}
  \Text[x=1.5, y=-2, fontsize=\footnotesize]{(B)~~System cost: 6}
  \Vertex[x=1.34, y=1, Pseudo, size=0]{N2bid}
  \Vertex[x=3, y=0.15, Pseudo]{N3}
  \Vertex[x=1.22, y=0.85, Pseudo]{N4}
  \Vertex[x=1.22, y=-0.85, Pseudo]{N5}
  \Edge[color=blue, Direct, style={dashed}](N2bid)(N3)
  \Edge[color=blue, style={dashed}, bend=-40](N5)(N4)
  \Vertex[x=1.34, y=-1, Pseudo, size=0]{bisN2bid}
  \Vertex[x=3, y=-0.15, Pseudo]{bisN3}
  \Vertex[x=1.58, y=-0.85, Pseudo]{bisN4}
  \Vertex[x=1.58, y=0.85, Pseudo]{bisN5}
  \Edge[color=red, Direct](bisN2bid)(bisN3)
  \Edge[color=red, bend=-40](bisN5)(bisN4)
\end{tikzpicture}
\vspace*{-2mm}
\caption{
Instance used to demonstrates that the price of anarchy associated to the marginal cost toll mechanism is at least $3$ in affine congestion games. Two users are willing to travel from $\rm{O}_1/\rm{O}_2$ to $\rm{D}$, where each edge features a cost $\ell_e(x)=x$. In the un-tolled case (A) the system cost at the worst Nash-equilibrium is $2$. The situation worsens when using marginal cost tolls (B), as the worst Nash equilibrium gives a system cost of $6$.}
\vspace*{-2mm}
\label{example:tolledpigouvian}
\end{figure}

We now show that $\poa(\pig\taxmechanism) \leq 3$ whenever each resource $e$ is associated to a cost $\ell_e(x)=\alpha_e x + \beta_e$. In this case, we have $\pig\taxmechanism(\ell_e)=\pig\tau$, where $\pig\tau(x)=\alpha_e(x-1)$ is independent of $\beta_e$, thanks to its definition. For an equilibrium allocation $\NE{a} \in \actionset$ and $\opt{a} \in \actionset$ and optimal allocation, we have
\[
\begin{split}
     \SC(\NE{a}) &\leq \sum_{i=1}^n \C_i(\opt{a}_i, \NE{a}_{-i}) - \sum_{i=1}^n \C_i(\NE{a}) + \SC(\NE{a}) \\
    & = \sum_{e \in \resset} \alpha_e \! \left[ z_e (2 x_e + 2 y_e + 1) -y_e (2 x_e + 2 y_e - 1) + (x_e+y_e)^2 \right]  + \beta_e \! \left[ z_e - y_e + (x_e + y_e) \right] \\
    & \leq \sum_{e \in \resset} \alpha_e \! \left[ (x_e +y_e) - (x_e + y_e)^2   + 2(x_e + y_e)(x_e + z_e)  +  (x_e  +  z_e) \right] + \beta_e \! \left[ x_e  +  z_e \right]\\
    & \leq \sum_{e \in \resset} \alpha_e \! \left[ 3 (x_e+z_e)^2 \right] + \beta_e \left[ 3(x_e + z_e) \right] = 3 \cdot \SC(\opt{a}),
\end{split}
\]
where we utilize the notation $y_e = |\NE{a}|_e - x_e$, $z_e = |\opt a|_e - x_e$, and $x_e = |\{ i \in N \text{ s.t. } e \in \NE{a}_i \cap \opt a_i \}|$. The first inequality holds by definition of Nash equilibrium, and the second holds due to non-negativity of $x_e$ and $\alpha_e$. One verifies that the last inequality holds, using $\u=x_e+y_e\ge 0$, $\v=x_e+z_e\ge 0$, and observing that the region $ 3\v^2 + \u^2 -\u -2\u\v -\v \geq 0 $ is the exterior of an ellipse containing all $(\u,\v) \in \mathbb{Z}_{\ge0}^2$. We remark that the latter reasoning extends identical to coarse correlated equilibria exploiting linearity of the expectation. Rearranging the above inequality, we get $\poa(\pig\taxmechanism) \leq \SC(\NE{a}) / \SC(\opt{a}) \leq 3$ for pure Nash as well as coarse correlated equilibria.
\end{proof}

\section{Conclusions and open problems}
This work derives optimal local tolling mechanisms and corresponding prices of anarchy for atomic congestion games. We do so for both the setup where tolls are congestion-aware and congestion-independent. Finally, we derive price of anarchy values for the marginal cost mechanism. Our results generalize those of \cite{caragiannis2010taxes}, and show that the efficiency of optimal tolls utilizing solely local information is comparable to that of existing tolls using global information \cite{BiloV19}. Further, we show that utilizing the marginal cost mechanism on the atomic setup is worse than levying no toll.

\paragraph{Open questions}
Our work leaves a number of open questions, two of which are discussed next. 

\begin{itemize}
\item[-] While we observed that the price of anarchy for optimally tolled affine congestion games matches that of affine load balancing games on identical machines, we conjecture such result holds more generally, at least for polynomial congestion games.
\item[-] In this manuscript we focused on the worst-case efficiency metric both with respect to the game instance, and the resulting equilibrium. It is currently unclear if, and to what extent, optimizing the price of anarchy impacts other more optimistic performance metrics.
\end{itemize}

\appendix
\section{Appendix}

\subsection{Results used in \cref{sec:optimal-tolls}}
\label{subsec:appendix-optimal-toll}
\begin{lemma} \label{lem:resource_decomposition}
    Consider the class of congestion games $\cg$. For any linear tolling mechanism $\taxmechanism$, it is
\[
\poa(\taxmechanism) = \sup_{G \in \cg(\mathbb{Z}_{\ge0})} \frac{\PNEcost(G,\taxmechanism)}{\mincost(G)},
\]
where $\cg(\mathbb{Z}_{\ge0}) \subset \cg$ is the subclass of games with $\alpha_j \in \mathbb{Z}_{\ge0}$ for all $j \in \{1, \dots, m\}$, \mbox{for all resources in $\resset$.}
\end{lemma}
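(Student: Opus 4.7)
The plan is to prove the equality in two steps. The inequality $\sup_{G \in \cg(\mathbb{Z}_{\ge0})} \frac{\PNEcost(G,T)}{\mincost(G)} \le \poa(T)$ is immediate from $\cg(\mathbb{Z}_{\ge 0}) \subseteq \cg$, so the work is in proving the reverse direction. Two ingredients will be used: \emph{scale-invariance} of the price-of-anarchy ratio under a linear mechanism, and \emph{density} of rationals combined with a simple perturbation argument.

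For the first step, I would show that for every $c > 0$ and every $G \in \cg$ with resource-cost coefficients $\{\alpha_j^{(e)}\}_{j,e}$, the instance $cG$ obtained by replacing each $\alpha_j^{(e)}$ with $c\,\alpha_j^{(e)}$ has the same $\PNEcost/\mincost$ ratio as $G$. Indeed, linearity of $T$ yields $\tau_e^{cG}(x)=c\,\tau_e^G(x)$ on every resource and every congestion level, so both $C_i$ in \eqref{eq:Ci} and $\SC$ in \eqref{eq:systemcost} are scaled by the common factor $c$. Hence the set of pure Nash equilibria (viewed as action profiles) is invariant and the ratio is preserved. An immediate corollary is that any $G \in \cg(\mathbb{Q}_{\ge 0})$ can be turned into an element of $\cg(\mathbb{Z}_{\ge 0})$ by rescaling by a common denominator, so $\sup_{\cg(\mathbb{Q}_{\ge 0})} \frac{\PNEcost(G,T)}{\mincost(G)} = \sup_{\cg(\mathbb{Z}_{\ge 0})} \frac{\PNEcost(G,T)}{\mincost(G)}$.

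For the second step I would show $\sup_{\cg} = \sup_{\cg(\mathbb{Q}_{\ge 0})}$. Fix $G \in \cg$ with ratio $R$ and $\epsilon > 0$, and let $\NE{a}$, $\opt{a}$ realize $\PNEcost(G,T)$ and $\mincost(G)$ respectively. Because action sets are finite and each $C_i(a)$, $\SC(a)$ is linear in $\{\alpha_j^{(e)}\}$ (using linearity of $T$), the defining relations for a NE and an optimum consist of finitely many linear inequalities in the coefficients. I would first perturb $G$ by adding an infinitesimal multiple of $b_1$ to specific resources so that all these inequalities become strict at $(\NE{a},\opt{a})$, while the ratio changes by at most $\epsilon/2$; this is legitimate because $b_1 \in \mathcal{L}$ and any non-strict inequality can be converted at negligible cost. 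The set of coefficient vectors in $\mathbb{R}^{m|\resset|}_{\ge 0}$ for which $\NE{a}$ remains a NE and $\opt{a}$ remains an optimum is then an open neighbourhood of the perturbed instance; density of $\mathbb{Q}$ in $\mathbb{R}$ supplies a rational point $\tilde G \in \cg(\mathbb{Q}_{\ge 0})$ in this neighbourhood whose ratio is within $\epsilon/2$ of $R$ by continuity of costs in the coefficients.

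The principal obstacle is precisely the non-strictness of equilibrium inequalities: a worst-case NE of $G$ could be destroyed by arbitrarily small rational approximations if some of its inequalities are tight. The perturbation step above — raising certain off-equilibrium resource costs infinitesimally via the basis function $b_1$ — resolves this at negligible cost to the ratio, and is standard for potential games on finite action spaces. Chaining the two steps, for every $\epsilon > 0$ I obtain $G' \in \cg(\mathbb{Z}_{\ge 0})$ with ratio at least $R - \epsilon$; taking the supremum over $G \in \cg$ and letting $\epsilon \to 0$ yields $\poa(T) \le \sup_{G\in\cg(\mathbb{Z}_{\ge 0})} \frac{\PNEcost(G,T)}{\mincost(G)}$, completing the proof.
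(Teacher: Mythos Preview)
Your scale-invariance step (passing from $\cg(\mathbb{Q}_{\ge 0})$ to $\cg(\mathbb{Z}_{\ge 0})$ by clearing denominators) is correct and coincides exactly with the paper's second step. The divergence is in the reduction from $\cg$ to $\cg(\mathbb{Q}_{\ge 0})$: the paper does not argue by perturbing a given instance, but instead invokes the linear-programming characterization \eqref{linprog:characterize_poa} of $\poa(T)$, passes to its dual, and explicitly \emph{constructs} a worst-case instance from an optimal dual solution $\theta^\star$ (or from a sequence of rational approximations to it). That route never needs to ask whether a particular equilibrium of a particular game survives approximation.

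Your perturbation argument, by contrast, has a genuine gap. You assert that ``adding an infinitesimal multiple of $b_1$ to specific resources'' makes every Nash inequality at $\NE{a}$ strict. But adding $\delta b_1$ to resource $e$ changes player $i$'s perceived cost on $e$ by $\delta\bigl[b_1(|a|_e)+T(b_1)(|a|_e)\bigr]$, and nothing in the hypotheses forces this to be positive---$T(b_1)$ can take either sign. Worse, if a tight deviation satisfies $a_i'\subsetneq\NE{a}_i$, then every resource you might perturb already lies in $\NE{a}_i$, so raising its cost can only push the inequality the \emph{wrong} way; making $\NE{a}_i$ strictly preferred would require lowering some coefficient, which the nonnegativity constraint $\alpha_j\ge 0$ may forbid. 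The remark that this ``is standard for potential games on finite action spaces'' glosses over precisely the restriction that all perturbations must remain in the cone generated by $\{b_1,\dots,b_m\}$ with nonnegative coefficients. The paper's dual-based construction sidesteps this entirely, because it manufactures instances directly from dual variables rather than trying to preserve a fixed equilibrium under coefficient perturbation.
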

\begin{proof}
We divide the proof in two steps. First, we show that 
\be
\poa(\taxmechanism) = \sup_{G \in \cg(\mathbb{Q}_{\ge0})} \frac{\PNEcost(G,\taxmechanism)}{\mincost(G)},
\label{eq:equalityQ}
\ee 
where $\cg(\mathbb{Q}_{\ge0}) \subset \cg$ is the subclass of games with $\alpha_j \in \mathbb{Q}_{\ge0}$ for all $j \in \{1, \dots, m\}$, for all resources in $\resset$.  
Towards this goal, observe that \eqref{eq:equalityQ} holds trivially with $\ge$ in place of the equality sign, as $\mathbb{R}_{\ge0}\supset\mathbb{Q}_{\ge0}$. To show that the converse inequality also holds, observe that the price of anarchy of a given linear mechanisms $T$ (computed over all meaningful instances where $\PNEcost(G,T)>0$) can be computed utilizing the linear program reported in \eqref{linprog:characterize_poa}. By strong duality, we have $\poa(T)=1/C^\star$, where $C^\star$ is the value of the dual program of \eqref{linprog:characterize_poa}, i.e., 
\begin{mini!}
	{\scriptstyle \theta(x,y,z,j)}{\nquad \sum_{x,y,z,j} b_j(x+z)(x+z)\theta(x,y,z,j)} {}{C^\star =}
	\addConstraint{\nquad \sum_{x,y,z,j} \left[f_j(x+y)y-f_j(x+y+1)z\right]\theta(x,y,z,j)}{\le 0 \label{eq:nash-inequality}}
	\addConstraint{\nquad\sum_{x,y,z,j} b_j(x+y)(x+y)\theta(x,y,z,j)}{=1}
	\addConstraint{\nquad \theta(x,y,z,j)}{\ge0 \label{eq:positivity}}{\quad \forall (x,y,z,j) \in \mc{I}}
\end{mini!}
where we define $b_j(0)=f_j(0)=f_j(n+1)=0$ for convenience, $\mc{I}=\{(x,y,z,j)\in\mathbb{Z}_{\ge0}^4~\text{s.t.}~1\le x+y+z\le n,~1\le j\le m\}$, and the minimum is intended over the entire tuple $\{\theta(x,y,z,j)\}_{(x,y,z,j)\in\mc{I}}$. Let $\{\theta^\star(x,y,z,j)\}_{(x,y,z,j)\in\mc{I}}$ denote an optimal solution (which exists, due to the non-emptiness and boundedness of the constraint set, which can be proven using the same argument in \cite[Thm. 2]{paccagnan2018distributed}). If all $\theta^\star(x,y,z,j)$ are rational, then consider the game $G$ defined as follows. For every $i\in \{1,\dots,n\}$ and for every $(x,y,z,j)\in\mc{I}$, we create a resource identified with $e(x,y,z,j,i)$, and assign to it the resource cost $\alpha_jb_j$, where $\alpha_j=\theta^\star(x,y,z,j)/n$. The game $G$ features $n$ players, where player $p\in\{1,\dots,n\}$ can either select the resources in the allocation $a^{\rm opt}_p$ or in $a^{\rm ne}_p$, defined by
\[
\begin{split}
a^{\rm opt}_p &= \cup_{i=1}^n\cup_{j=1}^m
\{e(x,y,z,j,i)\,:~x+y \ge 1 + ((i-p)\text{\,mod\,}n)\},\\
a^{\rm ne}_p &= \cup_{i=1}^n\cup_{j=1}^m
\{e(x,y,z,j,i)\,:~x+z \ge 1 + ((i-p+z)\text{\,mod\,}n)\}.\\
\end{split}
\]
Note that the above construction is an extension of that appearing in \cite{paccagnan2018distributed} to the case of multiple basis functions. Since $G$ has 
\[
\begin{split}
\PNEcost(G,T)&=\sum_{x,y,z,j} b_j(x+y)(x+y)\theta^\star(x,y,z,j)=1,\\
\mincost(G)&\le\sum_{x,y,z,j} b_j(x+z)(x+z)\theta^\star(x,y,z,j)=C^\star,
\end{split}
\]
(see \cite[Thm. 2]{paccagnan2018distributed} for this), its price of anarchy is no smaller than $1/C^\star$. Observe that $G$ features only non-negative rational resource costs' coefficients (i.e., $G\in\cg(\mathbb{Q}_{\ge0})$), therefore \eqref{eq:equalityQ} follows readily. 

If at least one entry in the tuple $\{\theta^\star(x,y,z,j)\}_{(x,y,z,j)\in\mc{I}}$ is not rational, we will prove the existence of a sequence of games $G^k\in\cg(\mathbb{Q}_{\ge0})$ whose worst-case efficiency converges to $\poa(T)$ as $k\rightarrow\infty$. This would imply that \eqref{eq:equalityQ} holds with $\le$ in place of the equality sign, concluding the proof. To do so, let us consider the set 
\[
S=\{\{\theta(x,y,z,j)\}_{(x,y,z,j)\in\mc{I}}~\text{ s.t. \eqref{eq:nash-inequality},  and \eqref{eq:positivity} hold}\}.
\]
Observe that $S$ is non-empty, and that for any tuple belonging to $S$, we can find a sequence of non-negative rational tuples $\{ \{\theta^k(a,x,b,j)\}_{(x,y,z,j)\in\mc{I}} \}_{k=1}^\infty$ (i.e., $\theta^k(a,x,b,j)\in\mb{Q}_{\ge0}$ for all $a,x,b,j$ and $k$), that converges to it. 

Let $\{ \{\theta^k(a,x,b,j)\}_{(x,y,z,j)\in\mc{I}} \}_{k=1}^\infty$ be a sequence of tuples converging to $\{\theta^\star(a,x,b,j)\}_{(x,y,z,j)\in\mc{I}}$, which belongs to $S$. For each tuple $\{\theta^k(a,x,b,j)\}_{(x,y,z,j)\in\mc{I}}$ in the sequence, define the game $G^k$ following the same construction introduced above with $\theta^k(a,x,b,j)$ in place of $\theta^\star(a,x,b,j)$. Following the same reasoning as above, it is $\PNEcost(G^k,T)=\sum_{x,y,z,j} b_j(x+y)(x+y)\theta^k(x,y,z,j)$, and $\mincost(G^k)\le\sum_{x,y,z,j} b_j(x+z)(x+z)\theta^k(x,y,z,j)$. Therefore
\[
\poa^k=\frac{\PNEcost(G^k,T)}{\mincost(G^k)}\ge \frac{\sum_{x,y,z,j} b_j(x+y)(x+y)\theta^k(x,y,z,j)}{\sum_{x,y,z,j} b_j(x+z)(x+z)\theta^k(x,y,z,j)},
\]
from which we conclude that 
\[
\lim_{k\to\infty}\poa^k \ge\! \lim_{k\to\infty}\frac{\sum_{x,y,z,j} b_j(x+y)(x+y)\theta^k(x,y,z,j)}{\sum_{x,y,z,j} b_j(x+z)(x+z)\theta^k(x,y,z,j)}\!=\!\frac{\sum_{x,y,z,j} b_j(x+y)(x+y)\theta^\star(x,y,z,j)}{\sum_{x,y,z,j} b_j(x+z)(x+z)\theta^\star(x,y,z,j)}=\frac{1}{C^\star},
\]
as $\theta^k(a,x,b,j)\rightarrow\theta^\star(a,x,b,j)$ for $k\rightarrow\infty$. This completes the first step. 
 
The second and final step consist in showing that 
\[
\sup_{G \in \cg(\mathbb{Q}_{\ge0})} \frac{\PNEcost(G,\taxmechanism)}{\mincost(G)} = 
\sup_{G \in \cg(\mathbb{Z}_{\ge0})} \frac{\PNEcost(G,\taxmechanism)}{\mincost(G)}.
\]
Towards this goal, for any given game from the above-defined sequence $G^k \in \cg(\mathbb{Q}_{\ge0})$, let $d_{G^k} $ denote the lowest common denominator among the resource cost coefficients $\alpha_j$, across all the resources of the game. Define $\hat \alpha_j = \alpha_j \cdot d_{G^k} \in \mb{Z}_{\ge0}$ for all $j \in \{1, \dots, m\}$, for all resources in $\resset$. Since the tolling mechanisms $T$ is linear by assumption, the equilibrium conditions are independent to uniform scaling of the resource costs and tolls by the coefficient $d_{G^k}$. Therefore any game in the sequence $G^k$ with tolling mechanism $T$ and resource cost coefficients $\{\alpha_j\}_{j=1}^m$ has the same worst-case equilibrium efficiency as a game $\hat{G}^k$ which is identical to $G^k$ except that it has resource cost coefficients $\{\hat \alpha_j\}_{j=1}^m$. Observing that $\hat G^k$ belongs to $\cg(\mathbb{Z}_{\ge0})$ concludes the proof.
\end{proof}

\subsection{Results used in \cref{sec:explicit-expression}}
\begin{lemma}
\label{lem:f_nondecreasing}
Let $b: \mb{N} \to \mb{R}_{\geq 0}$ be a nondecreasing, convex function, and let $0 < \rho \leq 1$ be a given parameter. Further, define the function $f : \{1, \dots, n\} \to \mb{R}$ such that $f(1)=b(1)$ and
\begin{equation} \label{eq:fdefinition}
    f(\jRC+1) \doteq \min_{ \ellRC_\jRC \in\{1,\dots,n\}}
        \frac{\min\{\jRC,n-\ellRC_\jRC\} \cdot f(\jRC) - b(\jRC)\jRC \cdot \rho + b(\ellRC_\jRC)\ellRC_\jRC}{\min\{\ellRC_\jRC,n-\jRC\}},
\end{equation}
for all $\jRC\in\{1,\dots,n-1\}$. Then, for the lowest value $1 \leq \hat \jRC \leq n-1$ such that $f(\hat \jRC+1) < f(\hat \jRC)$, it must hold that $f(\jRC+1) < f(\jRC)$ for all $\jRC\in\{\hat \jRC,\dots,n-1\}$.
\end{lemma}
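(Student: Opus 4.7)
The plan is to proceed by induction on $\jRC$, with base case $\jRC=\hat\jRC$ given by the statement. For the inductive step I would assume $f(\jRC+1)<f(\jRC)$ and aim to deduce $f(\jRC+2)<f(\jRC+1)$. Introducing the shorthand
\[
N(\jRC,\ellRC) \doteq b(\ellRC)\ellRC - \jRC\, b(\jRC)\,\rho - (\ellRC-\jRC)\, f(\jRC),
\]
the case analysis $\min\{\ellRC,n-\jRC\}-\min\{\jRC,n-\ellRC\}=\ellRC-\jRC$ used in the proof of \cref{thm:simplifiedLP-explicit-solution} gives $f(\jRC+1)-f(\jRC) = N(\jRC,\ellRC_\jRC^{\ast})/\min\{\ellRC_\jRC^{\ast},n-\jRC\}$, where $\ellRC_\jRC^{\ast}$ is any minimizer in \eqref{eq:fdefinition}. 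Thus the hypothesis becomes $N(\jRC,\ellRC_\jRC^{\ast})<0$, and the goal becomes the existence of some $\tilde\ellRC$ with $N(\jRC+1,\tilde\ellRC)<0$.

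The next step is the algebraic identity, verified by direct substitution,
\[
N(\jRC+1,\ellRC) \;=\; N(\jRC,\ellRC) \;+\; (\ellRC-\jRC-1)\bigl[f(\jRC)-f(\jRC+1)\bigr] \;+\; f(\jRC) - \Delta,
\]
where $\Delta\doteq[(\jRC+1)b(\jRC+1)-\jRC\,b(\jRC)]\,\rho\ge 0$, since $b$ nonnegative and nondecreasing implies $x\mapsto b(x)x$ is nondecreasing. Taking $\tilde\ellRC=\ellRC_\jRC^{\ast}$ and substituting the expression $-N(\jRC,\ellRC_\jRC^{\ast})=\min\{\ellRC_\jRC^{\ast},n-\jRC\}\cdot[f(\jRC)-f(\jRC+1)]$ obtained from the recursion, the required strict inequality $N(\jRC+1,\ellRC_\jRC^{\ast})<0$ collapses, after a two-case split on whether $\ellRC_\jRC^{\ast}\le n-\jRC$ or $\ellRC_\jRC^{\ast}>n-\jRC$, to the single scalar inequality
\[
f(\jRC+1)\;\le\;\Delta\;=\;\rho\bigl[(\jRC+1)b(\jRC+1)-\jRC\,b(\jRC)\bigr].
\]

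The main obstacle is therefore establishing this last bound whenever $f(\jRC+1)<f(\jRC)$. To address it, my plan is to strengthen the inductive hypothesis to include $f(\jRC+1)\le\Delta$ alongside the strict monotonicity, and to prove both simultaneously. The bound $f(\jRC+1)\le\Delta$ can be attacked by testing the recursion at specific values such as $\ellRC=\jRC$ and $\ellRC=\jRC+1$, giving $f(\jRC+1)\le f(\jRC)+b(\jRC)(1-\rho)$ and $(\jRC+1)f(\jRC+1)\le \jRC f(\jRC)-\jRC b(\jRC)\rho+(\jRC+1)b(\jRC+1)$, respectively, and then combining these estimates with convexity of $x\mapsto b(x)x$ and the assumption $\rho\le 1$. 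The delicate point is the gap $(\jRC+1)b(\jRC+1)(1-\rho)\ge 0$ that appears when $\rho<1$; I expect to absorb this by feeding back the strengthened inductive hypothesis $f(\jRC)\le\Delta_{\jRC-1}$ (using the analogous bound at the previous level, starting from $\jRC=\hat\jRC$ where $f(\hat\jRC)\ge f(\hat\jRC-1)$ provides the initial control). Once this scalar inequality is in hand, the induction closes via the identity above.
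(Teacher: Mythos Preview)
Your reduction of the inductive step to the scalar inequality $f(u+1)\le\Delta_u=\rho[(u+1)b(u+1)-u\,b(u)]$ is correct and is precisely the engine behind the paper's argument as well. The algebraic identity for $N(u+1,\ell)$ is a clean repackaging of what the paper does by adding and subtracting the recursion for the previous level.

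The genuine gap is how you plan to establish the scalar bound. Testing the recursion at $\ell=u$ and $\ell=u+1$ does not yield $f(u+1)\le\Delta_u$: the second test gives $(u+1)f(u+1)\le u\,f(u)+(u+1)b(u+1)-u\,b(u)\rho$, and even if you feed in $f(u)\le\Delta_{u-1}$ you are still off by the term $(u+1)b(u+1)(1-\rho)$ that you yourself flagged. More fundamentally, at the base $u=\hat u$ there is no prior bound to feed back, since $\hat u$ is the \emph{first} drop and hence $f(\hat u)\ge f(\hat u-1)\ge\dots\ge f(1)=b(1)$ is large, not small.

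The paper closes this gap differently. In its Part~1 it compares the recursion for $f(\hat u+1)$ with the recursion for $f(\hat u)$, both evaluated at the same minimizer $v^\ast_{\hat u}$, and uses $f(\hat u)\ge f(\hat u-1)$ to extract the strict bound $f(\hat u+1)<\rho[\hat u\,b(\hat u)-(\hat u-1)b(\hat u-1)]=\Delta_{\hat u-1}$. This is the only place where ``$\hat u$ is the first decrease'' is used, and it cannot be replaced by testing at fixed values of $\ell$. Once this bound is in hand, the paper does \emph{not} re-derive a moving bound at each step; instead, Part~2 carries the monotone decrease $f(u)\le f(\hat u+1)$ in the induction, so that $f(u)\le f(\hat u+1)<\Delta_{\hat u-1}\le\Delta_{u-1}$ follows from convexity of $x\mapsto b(x)x$. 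In your framework this means the strengthened hypothesis should be ``$f(u+1)<f(u)$ and $f(u+1)\le f(\hat u+1)$'', with the scalar bound supplied once and for all by the Part~1 comparison, rather than re-proved from the recursion at every level.
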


\begin{proof}
The proof is presented in two parts as follows: in Part~1, we identify inequalities given that $f(\hat \jRC+1) < f(\hat \jRC)$, for $1 \leq \hat \jRC \leq n-1$ as defined in the claim; and, in Part~2, we use a recursive argument to prove that $f(\jRC+1) < f(\jRC)$ holds for all $\hat \jRC+1 \leq \jRC \leq n-1$, using the inequalities derived in Part~1.

\paragraph{Part~1.} 
We define $\ellRC^*_\jRC$ as one of the arguments that minimize the right-hand side of \eqref{eq:fdefinition} for each $\jRC\in\{1,\dots,n-1\}$. By assumption, it must hold that $f(\hat \jRC+1) < f(\hat \jRC)$, which implies that
\begin{align*}
    f(\hat \jRC) > \min_{ \ellRC \in\{1,\dots,n\}} &
            \frac{\min\{\hat \jRC,n-\ellRC\}}{\min\{\ellRC,n-\hat \jRC\}} f(\hat \jRC)
            - \frac{b(\hat \jRC)\hat \jRC}{\min\{\ellRC,n-\hat \jRC\}} \rho
            + \frac{b(\ellRC)\ellRC}{\min\{\ellRC,n-\hat \jRC\}} \\[2mm]
        = \min_{ \ellRC \in\{1,\dots,n\}} &
            \frac{\min\{\hat \jRC-1,n-\ellRC\}}{\min\{\ellRC,n-\hat \jRC+1\}} f(\hat \jRC-1)
            - \frac{b(\hat \jRC-1)(\hat \jRC-1)}{\min\{\ellRC,n-\hat \jRC+1\}} \rho
            + \frac{b(\ellRC)\ellRC}{\min\{\ellRC,n-\hat \jRC+1\}} \\
        &   + \frac{\min\{\hat \jRC,n-\ellRC\}}{\min\{\ellRC,n-\hat \jRC\}} f(\hat \jRC)
            - \frac{\min\{\hat \jRC-1,n-\ellRC\}}{\min\{\ellRC,n-\hat \jRC+1\}} f(\hat \jRC-1)
            - \frac{b(\hat \jRC)\hat \jRC}{\min\{\ellRC,n-\hat \jRC\}} \rho \\
        &   + \frac{b(\hat \jRC-1)(\hat \jRC-1)}{\min\{\ellRC,n-\hat \jRC+1\}} \rho
            + \frac{b(\ellRC)\ellRC}{\min\{\ellRC,n-\hat \jRC\}}
            - \frac{b(\ellRC)\ellRC}{\min\{\ellRC,n-\hat \jRC+1\}},
\end{align*}
where the strict inequality holds by the definition of $f(\hat \jRC+1)$. Recall that
\[ f(\hat \jRC) \doteq \min_{ \ellRC \in\{1,\dots,n\}}            \frac{\min\{\hat \jRC-1,n-\ellRC\}}{\min\{\ellRC,n-\hat \jRC+1\}} f(\hat \jRC-1)
            - \frac{b(\hat \jRC-1)(\hat \jRC-1)}{\min\{\ellRC,n-\hat \jRC+1\}} \rho
            + \frac{b(\ellRC)\ellRC}{\min\{\ellRC,n-\hat \jRC+1\}}.
\]
Thus, if $\ellRC^*_{\hat \jRC} \leq n-\hat \jRC$, the above strict inequality with $f(\hat \jRC)$ can only be satisfied if
\[ f(\hat \jRC+1) < f(\hat \jRC) \leq \hat \jRC f(\hat \jRC) - (\hat \jRC-1) f(\hat \jRC-1) 
    < [ b(\hat \jRC)\hat \jRC - b(\hat \jRC-1)(\hat \jRC-1) ] \cdot \rho. \]
Similarly, if $\ellRC^*_{\hat \jRC} \geq n-\hat \jRC+1$, then it must hold that
\begin{alignat*}{2}
     & &(n-\ellRC^*_{\hat \jRC}) \bigg[ \frac{f(\hat \jRC)}{n-\hat \jRC} - \frac{f(\hat \jRC-1)}{n-\hat \jRC+1} \bigg]
        + \bigg[ \frac{1}{n-\hat \jRC} - \frac{1}{n-\hat \jRC+1} \bigg] b(\ellRC^*_{\hat \jRC}) \ellRC^*_{\hat \jRC}
        &< \bigg[ \frac{b(\hat \jRC)\hat \jRC}{n-\hat \jRC} 
            - \frac{b(\hat \jRC-1)(\hat \jRC-1)}{n-\hat \jRC+1} \bigg] \cdot \rho \\[2mm]
    \implies & &  \bigg[ \frac{1}{n-\hat \jRC} - \frac{1}{n-\hat \jRC+1} \bigg] 
        [(n-\ellRC^*_{\hat \jRC}) f(\hat \jRC) + b(\ellRC^*_{\hat \jRC}) \ellRC^*_{\hat \jRC}]
        &<  \bigg[ \frac{b(\hat \jRC)\hat \jRC}{n-\hat \jRC} 
            - \frac{b(\hat \jRC-1)(\hat \jRC-1)}{n-\hat \jRC+1} \bigg] \cdot \rho \\[2mm]
    \iff & &  f(\hat \jRC+1) &< [b(\hat \jRC)\hat \jRC - b(\hat \jRC-1)(\hat \jRC-1)] \rho,
\end{alignat*}
where the first line implies the second line because $f(\hat \jRC) \geq f(\hat \jRC-1)$, by the definition of $\hat \jRC$ in the claim, and the second line is equivalent to the third by the definitions of $f(\hat \jRC+1)$ and $\ellRC^*_{\hat \jRC}$. This concludes Part~1 of the proof.
\paragraph{Part~2.} 
In this part of the proof, we show by recursion that if $f(\hat \jRC+1)<f(\hat \jRC)$, then $f(\jRC+1)<f(\jRC)$ for all $\jRC\in\{\hat \jRC+1,...,n-1\}$. We do so by showing that, if $f(\jRC)<f(\jRC-1)<\dots<f(\hat \jRC+1)$ for any $\jRC\in\{\hat \jRC+1,\dots,n-1\}$, then it must hold that $f(\jRC+1)<f(\jRC)$. Thus, in the following reasoning, we assume that $\jRC\in\{\hat \jRC+1,\dots,n-1\}$, and that $f(\jRC)<f(\jRC-1)<\dots<f(\hat \jRC+1)$. 

\medskip \noindent We begin with the case of $\ellRC^*_{\jRC-1} < n-\jRC+1$, which gives us that $\ellRC^*_{\jRC-1} \leq n-\jRC$. Observe that
\begingroup
\allowdisplaybreaks
\begin{align*}
    f(\jRC+1) &\doteq \min_{ \ellRC_\jRC \in\{1,\dots,n\}}
        \frac{\min\{\jRC,n-\ellRC_\jRC\}}{\min\{\ellRC_\jRC,n-\jRC\}} f(\jRC+1) 
        - \frac{b(\jRC)\jRC}{\min\{\ellRC_\jRC,n-\jRC\}} \rho 
        + \frac{b(\ellRC_\jRC)\ellRC_\jRC}{\min\{\ellRC_\jRC,n-\jRC\}} \\[2mm]
    &= \min_{ \ellRC_\jRC \in\{1,\dots,n\}}
        \frac{\min\{\jRC-1,n-\ellRC_\jRC\}}{\min\{\ellRC_\jRC,n-\jRC+1\}} f(\jRC-1)
        - \frac{b(\jRC-1)(\jRC-1)}{\min\{\ellRC_\jRC,n-\jRC+1\}} \rho
        + \frac{b(\ellRC_\jRC)\ellRC_\jRC}{\min\{\ellRC_\jRC,n-\jRC+1\}} \\
    & \hspace{45pt} + \frac{\min\{\jRC,n-\ellRC_\jRC\}}{\min\{\ellRC_\jRC,n-\jRC\}} f(\jRC+1) 
        - \frac{\min\{\jRC-1,n-\ellRC_\jRC\}}{\min\{\ellRC_\jRC,n-\jRC+1\}} f(\jRC-1)
        - \frac{b(\jRC)\jRC}{\min\{\ellRC_\jRC,n-\jRC\}} \rho \\
    & \hspace{45pt} + \frac{b(\jRC-1)(\jRC-1)}{\min\{\ellRC_\jRC,n-\jRC+1\}} \rho
        + \frac{b(\ellRC_\jRC)\ellRC_\jRC}{\min\{\ellRC_\jRC,n-\jRC\}}
        - \frac{b(\ellRC_\jRC)\ellRC_\jRC}{\min\{\ellRC_\jRC,n-\jRC+1\}}\\[2mm]
    &\leq f(\jRC) + \frac{\jRC}{\ellRC^*_{\jRC-1}} f(\jRC) - \frac{\jRC-1}{\ellRC^*_{\jRC-1}} f(\jRC-1) 
        - \frac{b(\jRC)\jRC-b(\jRC-1)(\jRC-1)}{\ellRC^*_{\jRC-1}} \rho \\[2mm]
    &< f(\jRC) + \frac{1}{\ellRC^*_{\jRC-1}} f(\hat \jRC+1) - \frac{1}{\ellRC^*_{\jRC-1}}[b(\jRC)\jRC-b(\jRC-1)(\jRC-1)] \rho < f(\jRC),
\end{align*}
\endgroup
where the first inequality holds by evaluating the minimization at $\ellRC_\jRC = \ellRC^*_{\jRC-1}$, the second inequality holds because $f(\jRC)<f(\jRC-1)$ and $f(\jRC) \leq f(\hat \jRC+1)$, by assumption, and the final inequality holds by the result showed in Part~1 and because $b(\cdot)$ is nondecreasing and convex.

\medskip \noindent Next, we consider the scenario in which $\ellRC^*_{\jRC-1} > n-\jRC+1$. Observe that
\begin{align*}
    f(\jRC+1) &\leq f(\jRC) 
        + (n-\ellRC^*_{\jRC-1}) \bigg[ \frac{f(\jRC)}{n-\jRC} - \frac{f(\jRC-1)}{n-\jRC+1} \bigg] 
        + \bigg[ \frac{1}{n-\jRC} - \frac{1}{n-\jRC+1} \bigg] b(\ellRC^*_{\jRC-1})\ellRC^*_{\jRC-1} \\
    &\qquad - \frac{b(\jRC)\jRC}{n-\jRC} \rho + \frac{b(\jRC-1)(\jRC-1)}{n-\jRC+1} \rho \\[2mm]
    &< f(\jRC)
        + \bigg[ \frac{1}{n-\jRC} - \frac{1}{n-\jRC+1} \bigg] 
            [(n-\ellRC^*_{\jRC-1}) f(\jRC-1) + b(\ellRC^*_{\jRC-1})\ellRC^*_{\jRC-1}] \\
    &\qquad - \frac{b(\jRC)\jRC}{n-\jRC} \rho + \frac{b(\jRC-1)(\jRC-1)}{n-\jRC+1} \rho \\[2mm]
    &= f(\jRC) 
        + \bigg[ \frac{1}{n-\jRC}-\frac{1}{n-\jRC+1} \bigg] [ (n-\jRC+1)f(\jRC) + b(\jRC-1)(\jRC-1)\rho ]\\
    &\qquad - \frac{b(\jRC)\jRC}{n-\jRC} \rho + \frac{b(\jRC-1)(\jRC-1)}{n-\jRC+1} \rho \\
    &\leq f(\jRC) + \frac{1}{n-\jRC}f(\hat \jRC+1) - \frac{1}{n-\jRC} [b(\jRC)\jRC-b(\jRC-1)(\jRC-1)] \rho \\[2mm]
    &< f(\jRC),
\end{align*}
where the first inequality holds by evaluating the minimization at $\ellRC_\jRC = \ellRC^*_{\jRC-1}$, the second inequality holds because $f(\jRC)<f(\jRC-1)$, by assumption, the equality holds by the definitions of $f(\jRC)$ and $\ellRC^*_{\jRC-1}$, the third inequality holds because $f(\jRC)\leq f(\hat \jRC+1)$, by assumption, and the final inequality holds by the identity we showed in Part~1 and because $b$ is nondecreasing and convex.

\medskip\noindent Finally, we consider the scenario in which $\ellRC^*_{\jRC-1} = n-\jRC+1$. Observe that
\begin{align*}
    f(\jRC+1) &\leq f(\jRC) + \frac{\jRC-1}{n-\jRC} f(\jRC) - \frac{\jRC-1}{n-\jRC+1} f(\jRC-1)
        - \frac{b(\jRC)\jRC}{n-\jRC} \rho + \frac{b(\jRC-1)(\jRC-1)}{n-\jRC+1} \rho\\
    &\qquad + \bigg[ \frac{1}{n-\jRC} - \frac{1}{n-\jRC+1} \bigg] b(n-\jRC+1)(n-\jRC+1) \\[2mm]
    &< f(\jRC)
        + \bigg[ \frac{1}{n-\jRC} - \frac{1}{n-\jRC+1} \bigg] 
            [(n-\ellRC^*_{\jRC-1}) f(\jRC-1) + b(\ellRC^*_{\jRC-1})\ellRC^*_{\jRC-1}]
            \\
    &\qquad  - \frac{b(\jRC)\jRC}{n-\jRC} \rho + \frac{b(\jRC-1)(\jRC-1)}{n-\jRC+1} \rho \\[2mm]
    &< f(\jRC),
\end{align*}
where the first inequality holds by evaluating the minimization at $\ellRC_\jRC = \ellRC^*_{\jRC-1}$, the second inequality holds because $f(\jRC)<f(\jRC-1)$, by assumption, and the final inequality holds by the same reasoning as for $\ellRC^*_{\jRC-1} > n-\jRC+1$.
\end{proof}

\subsection{Results used in \cref{sec:largen}}
\begin{lemma}
\label{lem:largen}
For given $d\ge1$, consider the linear program defined in \eqref{eq:overconstraintedsimplifiedlp}, and let $(f^{\rm opt}, \rho^{\rm opt})$ be a solution. Further let $\myf$ be defined in \eqref{eq:definefext} utilizing $(f^{\rm opt}, \rho^{\rm opt})$. The price of anarchy of $\myf$ over congestion games of degree $d$ with possibly infinitely many agents is upper bounded by $1/\rho^{\infty}$, where
\[
\rho^{\infty}=\min\left\{{\opt\rho},~~{\beta - d\left(1+\frac{2}{\bar n}\right)^{d+1} \left(\frac{\beta}{d+1}\right)^{1+\frac{1}{d}}} \right\}.
\]
This result is tight for pure Nash equilibria and holds for coarse correlated equilibria too.
\end{lemma}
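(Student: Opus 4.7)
The plan is to convert the claimed upper bound on $\poa(\myf)$ into verifying feasibility of the triple $(\nu,\rho,f) = (1,\rho^\infty,\myf)$ in the linear program~\eqref{linprog:characterize_poa} simultaneously for every finite number of agents $n'$. Such feasibility would immediately yield $\poa(\myf) \le 1/\rho^\infty$ for each $n'$, and therefore also for arbitrarily large $n'$; both the tightness for pure Nash equilibria and the extension to coarse correlated equilibria then follow directly from the characterization in~\cite{chandan19optimaljournal}, since both properties are built into that LP characterization.

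As a preliminary I would check that $\myf$ is non-decreasing: on $\{1,\dots,\bar n/2\}$ this is enforced by the constraint $f(u)\ge f(u-1)$ included in~\eqref{eq:overconstraintedsimplifiedlp}; on $\{\bar n/2,\bar n/2+1,\dots\}$ the polynomial $\beta x^d$ is non-decreasing because $\beta\ge 0$ (as argued in the footnote following~\eqref{eq:definefext}); and the two pieces agree at $x=\bar n/2$ by construction. This monotonicity lets me invoke the simplified LP~\eqref{eq:simplifiedLP-compute} in place of the full one. Next, I would observe that the type-$2$ constraints (for $u+v>n'$) are implied by the type-$1$ constraints (for $u+v\le n'$): subtracting the two expressions yields $[\myf(u)-\myf(u+1)](n'-u-v)$, which is non-negative because $\myf$ is non-decreasing and $n'-u-v<0$ in the type-$2$ regime. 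The proof therefore reduces to establishing
\[
v^{d+1} - \rho^\infty u^{d+1} + \myf(u)u - \myf(u+1)v \ge 0 \qquad \text{for all } u,v \in \mb{Z}_{\ge 0}.
\]

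The verification splits naturally according to $u$. When $u \ge \bar n/2$, substituting $\myf(u)=\beta u^d$ and $\myf(u+1)=\beta(u+1)^d$ turns the inequality into $v^{d+1} + (\beta-\rho^\infty) u^{d+1} \ge \beta(u+1)^d v$. Minimizing the left-hand side over real $v\ge 0$ via elementary calculus at $v^\star=(\beta(u+1)^d/(d+1))^{1/d}$ equivalently demands $\beta-\rho^\infty \ge d(\beta/(d+1))^{1+1/d}(1+1/u)^{d+1}$; the right-hand side is maximized at the smallest admissible $u=\bar n/2$, giving exactly the second term in the definition of $\rho^\infty$. When $u \le \bar n/2-1$, both $\myf(u)$ and $\myf(u+1)$ coincide with $f^{\rm opt}$, and for integer $v \in \{0,\dots,\bar n-u\}$ the desired inequality is a type-$1$ constraint of~\eqref{eq:overconstraintedsimplifiedlp}, satisfied by $(f^{\rm opt},\rho^{\rm opt})$ and hence also by $(f^{\rm opt},\rho^\infty)$ since $\rho^\infty\le \rho^{\rm opt}$.

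The hardest step will be covering integer $v > \bar n-u$ in the small-$u$ regime, where no direct LP constraint is available. The key is the box constraint $f(u+1)\le (u+1)^d$ included in~\eqref{eq:overconstraintedsimplifiedlp}: it guarantees that the unique real minimizer of $v\mapsto v^{d+1} - f^{\rm opt}(u+1)v$, located at $(f^{\rm opt}(u+1)/(d+1))^{1/d}$, is bounded by $(u+1)/(d+1)^{1/d}\le u+1\le \bar n-u$, hence lies inside the range already covered by the LP. Since $v\mapsto v^{d+1} - f^{\rm opt}(u+1)v$ is strictly increasing beyond its minimizer, for every integer $v>\bar n-u$ the left-hand side strictly exceeds its value at $v=\bar n-u$, which was just handled. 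Once both regimes are closed, $(1,\rho^\infty,\myf)$ is feasible in~\eqref{linprog:characterize_poa} for every finite $n'$, completing the proof.
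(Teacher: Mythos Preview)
Your proposal is correct and follows essentially the same approach as the paper: reduce to the simplified LP via monotonicity of $\myf$, show the type-$2$ constraints are dominated by type-$1$ constraints (this is the paper's \cref{lemma1}(b)), handle $u\ge\bar n/2$ by the identical calculus minimization, and handle small $u$ by combining the LP constraints of~\eqref{eq:overconstraintedsimplifiedlp} with the box constraint $f(u+1)\le(u+1)^d$. The only cosmetic difference is that for small $u$ the paper anchors at the diagonal $v=u$ (its \cref{lemma1}(a)) whereas you anchor at $v=\bar n-u$ and locate the real minimizer of $v\mapsto v^{d+1}-f^{\rm opt}(u+1)v$; both routes exploit the same convexity and the same box constraint, so the arguments are interchangeable.
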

\begin{proof}
Since $\myf$ is non-decreasing by construction, we characterize its performance over games with a maximum of $n$ agents through the linear program in \eqref{eq:simplifiedLP-compute} with $b(x)=x^d$, that is
\begin{maxi}
	{{\scriptstyle \rho \in \mathbb{R}, \nu \in \mathbb{R}_{\geq 0}}}
	{\nquad\rho} {\label{eq:characterize_finfty}}{}
	\addConstraint{\nquad\!\!\!\!\! b(\v)\v - \rho b(\u)\u + \nu[\myf(\u)\u - \myf(\u+1)\v]}{\ge 0}{\quad\forall u,v\in\{0,\dots,n\}\quad u+\v \leq n,}
	\addConstraint{\nquad\!\!\!\!\! b(\v)\v - \rho b(\u)\u + \nu[\myf(\u)(n-\v) - \myf(\u+1)(n-\u)]}{\ge 0}{\quad\forall u,v\in\{0,\dots,n\}\quad u+\v > n.}
\end{maxi}
Upper bounding the price of anarchy of $\myf$ amounts to finding a feasible solution to this linear program; the challenging task is that we intend to do so for $n$ arbitrary large. We claim that $(\rho,\nu)=(\rho^\infty,1)$ is feasible for any (possibly infinite) $n$, so that the claimed upper bound on the price of anarchy follows. 
To prove this, we divide the discussion in two parts as the expressions are different for $\u+\v\le n$ and $\u+\v> n$. Before doing so, we study the degenerate case of $\u=0$, for which the constraints reduce to $\myf(1)\le 1$, which holds as $\myf$ is feasible for the linear program in  \eqref{eq:overconstraintedsimplifiedlp} which already includes this constraint.

\begin{itemize}
\item[--]	Case of $\u+\v\le n$, $\u\ge 1$. The constraints read as 
\be
\v^{d+1}-\rho \u^{d+1}+\nu[\u\myf(\u)-\v\myf(\u+1)]\ge 0
\label{eq:constraintpart1}
\ee
which we want to hold for any integers $\u\ge 1$, $\v\ge 0$ (the bound on the indices $\u+\v\le n$ can be dropped as we are interested in the case of arbitrary $n$).
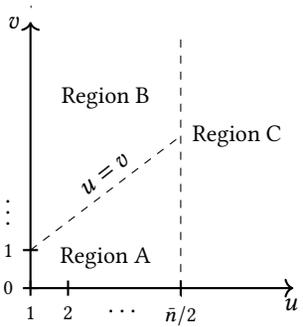
\begin{figure}[h!]
\begin{center}
\begin{tikzpicture}[scale=1]
\draw[thick,->] (0,0) -- (3.5,0);
\draw[thick] (3.7,0) -- (3.7,0) node[anchor=north east]{$\u$};
\draw[thick,->] (0,0) -- (0,3.5); 
\draw[thick] (0,3.7) -- (0,3.7) node[anchor=north east]{$\v$};
\draw[dashed] (0,0.5) -- (2,2) ;
\draw[dashed] (2,0) -- (2,3.3) ;
\node[rotate = 38] at (1,1.5) {$\u=\v$};
\node at (1,0.4) {\small Region A};
\node at (1,2.5) {\small Region B};
\node at (2.75,2) {\small Region C};
\draw[thick] (0,0.1) -- (0,-0.1) node[anchor=north]{\footnotesize$1$};
\draw[thick] (0.5,0.1) -- (0.5,-0.1)node[anchor=north]{\footnotesize$2$};
\draw[thick] (2,0.1) -- (2,-0.1);
\draw[thick] (2,-0.1) -- (2,-0.1)node[anchor=north]{\footnotesize$\bar{n}/2$};
\node[thick] (a) at (1.25,-0.3) {$\dots$};
\draw[thick] (0.1,0) -- (-0.1,0)node[anchor=east]{\footnotesize$0$};
\draw[thick] (0.1,0.5) -- (-0.1,0.5)node[anchor=east]{\footnotesize$1$};
\node[thick] (a) at (-0.3,1.1) {$\vdots$};
\end{tikzpicture}
\end{center}
\caption{Regions A, B, and C utilized in proof for the case $\u+\v\le n$.}
\label{fig:figure1proof}
\end{figure}
\begin{itemize}
\item[$\bullet$]
In the region where $0\le \v\le \u$, $1\le \u< \bar{n}/2$ (region A in \cref{fig:figure1proof}) these constraints certainly hold with $\rho\le\rho^{\rm opt}$, $\nu=1$. This follows because $\myf(\u)=f^{\rm opt}(\u)$ when $\u\le \bar{n}/2$, and $f^{\rm opt}$ is feasible for the program in \eqref{eq:overconstraintedsimplifiedlp} which includes these constraints.
\item[$\bullet$] 
In the region where $\v> \u$, $1\le \u< \bar{n}/2$ (region B in \cref{fig:figure1proof}) these constraints also hold with $\rho\le\rho^{\rm opt}$, $\nu=1$ thanks to \cref{lemma1} part a).
\item[$\bullet$] We are left with the region where $\u\ge\bar{n}/2$, $\v\ge 0$  (region C in \cref{fig:figure1proof}). In this case, $\myf(\u)=\beta \cdot \u^d$ by definition. With the choice of $\nu=1$, the constraints in \eqref{eq:constraintpart1} read 
\be
\v^{d+1}-\rho \u^{d+1}+\beta \u^{d+1}-\beta \v(\u+1)^d\ge 0.
\label{eq:inequalityfext1}
\ee
Observe that, for a fixed choice of integer $u\ge\bar n / 2$, the left hand side of \eqref{eq:inequalityfext1} is a convex function of $v$ for $v\ge0$. Its corresponding minimum value over the non-negative reals is 
\[
\left(\frac{\beta}{d+1}\right)^{1+\frac{1}{d}}(u+1)^{d+1}-
\beta\left(\frac{\beta}{d+1}\right)^{\frac{1}{d}}(u+1)^{d+1}+
(\beta - \rho) u^{d+1}.
\]
Hence, \eqref{eq:inequalityfext1} is satisfied for a fixed choice of $u\ge\bar n / 2$ and all integers $v\ge0$ if the latter expression is non-negative. Simple algebra shows that this is the case if 
\be
\rho \le {\beta - d\left(1+\frac{1}{u}\right)^{d+1} \left(\frac{\beta}{d+1}\right)^{1+\frac{1}{d}}}.
\label{eq:rho-largen-withu}
\ee 
Since we would like \eqref{eq:inequalityfext1} to hold for all $u\ge\bar n / 2$, and since the right-hand side in \eqref{eq:rho-largen-withu} is increasing in $n$, it suffices to ask for \eqref{eq:rho-largen-withu} to hold at the lowest admissible $u$, i.e. $u=\bar{n}/2$. Therefore, in order for \eqref{eq:inequalityfext1} to be satisfied for any $\u\ge \bar{n}/2$, it suffices to select 
\be
\rho \le {\beta - d\left(1+\frac{2}{\bar n}\right)^{d+1} \left(\frac{\beta}{d+1}\right)^{1+\frac{1}{d}}}.
\label{eq:secondrho}
\ee
\end{itemize}
\item[--]	Case of $\u+\v> n$, $\u\ge 1$. \cref{lemma1} part b) shows that the constraints corresponding to $\u+\v>n$ are satisfied for arbitrary $n$ with the choice of $\nu=1$, and $\rho$ as in \eqref{eq:secondrho} thanks to the fact that $\myf:\mb{N}\rightarrow\mb{R}$ is non-decreasing.
\end{itemize}
In conclusion, we verified that $(\rho,\nu)=(\rho^\infty,1)$ is feasible for the program in \eqref{eq:characterize_finfty}. It follows that the price of anarchy of $\myf$ over games with arbitrarily large $n$ is upper bounded as in the claim.	
\end{proof}

\begin{lemma}
\leavevmode
\begin{itemize}
	\item[a)] Let $f:\{1,\dots,n\}\rightarrow\mb{R}$, $\rho\ge 0$, and $f(x)\le x^d$ for all $1\le x\le n$ and $d\ge1$. The constraints
	$
	\v^{d+1}-\rho \u^{d+1}+\u f(\u)-\v f(\u+1)\ge 0
	$ obtained for any $\v\in\mb{N}$, $\v\ge \u$, $\u\in\{1,\dots,n-1\}$ are satisfied if the same inequality holds for all $\v=\u\in \{1,\dots,n-1\}$.
	
		\item[b)] Let $f:\mb{N}\rightarrow\mb{R}$ be non-decreasing, $d\ge1$, $\rho\ge0$. If the constraints $\v^{d+1}-\rho \u^{d+1}+\u f(\u)-\v f(\u+1)\ge 0$ hold for all non-negative integers $u,v$, then $\v^{d+1}-\rho \u^{d+1}+(n-\v) f(\u)-(n-\u) f(\u+1)\ge 0$ hold for all non-negative integers $u,v$ with $u+v>n$, for any choice of $n\ge1$ integer.
		\end{itemize}
		\label{lemma1}
\end{lemma}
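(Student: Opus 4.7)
My plan is to treat the two parts independently, as they rely on essentially unrelated ideas. In both cases the proof is a short monotonicity argument rather than anything combinatorial.

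For part (a), I would fix $u\in\{1,\dots,n-1\}$ and view the expression as a scalar function $g(v):=v^{d+1}-\rho u^{d+1}+u f(u)-v f(u+1)$ of a real variable $v$. The hypothesis is $g(u)\ge 0$, and the target is $g(v)\ge 0$ for every integer $v\ge u$. The $v=u$ case is immediate, so I would focus on $v\ge u+1$ and write
\[
g(v)-g(u) \;=\; v^{d+1}-u^{d+1}-(v-u)\,f(u+1).
\]
Using the standing bound $f(u+1)\le(u+1)^d$, it suffices to show $h(v):=v^{d+1}-u^{d+1}-(v-u)(u+1)^d\ge 0$ for integer $v\ge u+1$. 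I would then observe that $h'(v)=(d+1)v^d-(u+1)^d\ge d(u+1)^d\ge 0$ on $[u+1,\infty)$, so $h$ is non-decreasing there, and it thus suffices to verify $h(u+1)\ge 0$, which reduces to the obvious inequality $u\bigl[(u+1)^d-u^d\bigr]\ge 0$. The only point requiring care is that one cannot bound $g(v)-g(u)$ pointwise in $v$ for large $v$ without first invoking $f(u+1)\le(u+1)^d$, so the order of steps — first replace $f(u+1)$ by its upper bound, then reduce to the worst-$v$ case — matters.

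For part (b), the plan is a one-line algebraic comparison. Subtracting the hypothesis inequality from the target inequality at the same pair $(u,v)$ leaves
\[
(u+v-n)\bigl[f(u+1)-f(u)\bigr],
\]
which is non-negative whenever $u+v>n$ (so the integer factor $u+v-n\ge 1$) and $f$ is non-decreasing. Hence the target inequality exceeds the hypothesis inequality, which is already $\ge 0$ by assumption, and no further work is needed. The main (modest) obstacle lies in part (a); part (b) is essentially an identity.
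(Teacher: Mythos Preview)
Your proposal is correct and follows essentially the same route as the paper. For part~(a) the paper also shows $g(v)-g(u)\ge 0$, rewriting this as $f(u+1)\le \frac{(u+p)^{d+1}-u^{d+1}}{p}$ and invoking convexity of $x^{d+1}$ to reduce to $p=1$, whereas you first apply $f(u+1)\le (u+1)^d$ and then use $h'(v)\ge 0$ on $[u+1,\infty)$; both paths land on the identical final inequality $(u+1)^d\le (u+1)^{d+1}-u^{d+1}$. For part~(b) your computation is exactly the paper's.
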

\begin{proof} We prove the two claims separately.\\
{\it First claim.} For $\v=\u$ the constraints of interest reduces to $\rho \u^{d+1} \le  \u^{d+1}+\u(f(\u)-f(\u+1)),$ while for $\v=\u+p$ (with $p\ge 1$) the constraint reads as $\rho \u^{d+1} \le   (\u+p)^{d+1}+\u f(\u)-(\u+p)f(\u+1).$ Proving the claim amounts to showing  
\[
\u^{d+1}+\u(f(\u)-f(\u+1)) \le (\u+p)^{d+1}+\u f(\u)-(\u+p)f(\u+1)
 \iff f(u+1)\le \frac{(u+p)^{d+1}-u^{d+1}}{p}
\]
for $p\ge 1$. The right hand side is minimized at $p=1$ (it describes the slope of the secant to the function $u^{d+1}$ at abscissas $u$ and $u+p$) due to the convexity of $u^{d+1}$. Therefore, it suffices to ensure that $f(u+1)\le (u+1)^{d+1}-u^{d+1}$ for any choice of $\u\in\{1,\dots,n-1\}$. By assumption, $f(u+1)\le (u+1)^d$, so that we can equivalently prove $(u+1)^d\le(u+1)^{d+1}-u^{d+1}$. The latter holds, as required, for all $\u\in\{1,\dots,n-1\}$ since 
\[
(u+1)^d\le (u+1)^{d+1}-u^{d+1} \iff (u+1)^d\le (u+1)^{d}(u+1)-u^{d+1} \iff u^d\le (u+1)^d.
\]

\noindent{\it Second claim.} By assumption, for all non-negative integers $u,v$ it is $\rho u^{d+1}\le v^{d+1}+uf(u)-vf(u+1)$, while we intend to show that for all non-negative integers $u,v$ with $u+v>n$ it is $\rho u^{d+1}\le v^{d+1}+(n-v)f(u)-(n-u)f(u+1)$, regardless for the choice of $n\ge1$ integer. Proving this is equivalent to showing 
\[
uf(u)-vf(u+1)\le (n-v)f(u)-(n-u)f(u+1)
\iff (u+v-n)(f(u)-f(u+1))\le 0,
\]
which holds, as required, due to the fact that $u+v-n> 0$ and $f(u)-f(u+1)\le0$ due to $f$ being non-decreasing.
\end{proof}

\subsection{Results used in \cref{sec:congestion-indep}}
\begin{lemma}
\label{lem:v>u_does_not_matter}
Let $n\in\mb{N}$, and assume that the basis functions $\{b_1,\dots,b_m\}$, $b_j:\mb{N}\rightarrow \mb{R}$ are convex (in the discrete sense), positive, and non-decreasing, for all $j=1,\dots,m$. Then, the constraints appearing in \eqref{eq:LP_constanttoll_nusigma} with $u=0$, $v=\{1,\dots,n\}$ are satisfied if the constraint with $(u,v)=(0,1)$ holds. 
Similarly, the constraints with $u\in\{1,\dots,n\}$, $v\in\{0,\dots,n\}$ and $u<v$ are satisfied if the constraints with $u\in\{1,\dots,n\}$, $v\in\{0,\dots,n\}$ and $u\ge v$ hold.
\end{lemma}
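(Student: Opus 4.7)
The plan is to verify both parts by exploiting the structure of the constraints in~\eqref{eq:LP_constanttoll_nusigma} together with the non-decreasing and positive properties of $b_j$. Throughout, I denote by $C_j(u,v)$ the left-hand side of the $(u,v,j)$-constraint, with the formula depending on whether $u+v\le n$ (Case~A) or $u+v>n$ (Case~B).

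For the first part, I would substitute $u=0$ into the Case~A formula (applicable since $0+v=v\le n$), which reduces the constraint at $(0,v)$ to $v(b_j(v)-\nu b_j(1)-\sigma_j)\ge 0$. For $v\ge 1$ this is equivalent to $b_j(v)\ge \nu b_j(1)+\sigma_j$. The $(0,1)$ constraint is exactly $b_j(1)\ge \nu b_j(1)+\sigma_j$, so combining it with $b_j(v)\ge b_j(1)$ from non-decreasingness concludes the first claim. As a by-product one obtains $(1-\nu)b_j(1)\ge \sigma_j\ge 0$, which, since $b_j(1)>0$, forces $\nu\le 1$; this observation will be needed in the second part.

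For the second part, I would show that $C_j(u,\cdot)$ is non-decreasing on $\{u,u+1,\dots,n\}$. Since $C_j(u,u)\ge 0$ belongs to the allowed set (it is of the form $u\ge v$), telescoping would then yield $C_j(u,v)\ge 0$ for every $v>u$. Monotonicity reduces to showing $C_j(u,v+1)-C_j(u,v)\ge 0$ for every $v\in\{u,\dots,n-1\}$, which splits into three sub-cases according to how the step $(v,v+1)$ interacts with the boundary $u+v=n$: both $v$ and $v+1$ in Case~A (when $v+1\le n-u$), both in Case~B (when $v\ge n-u+1$), and the single transition $v=n-u\mapsto n-u+1$. In all three, a direct computation shows the difference takes the form $(v+1)b_j(v+1)-vb_j(v)-\nu b_j(\star)-\sigma_j$ with $\star\in\{u,u+1\}$. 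Using the identity $(v+1)b_j(v+1)-vb_j(v)=v(b_j(v+1)-b_j(v))+b_j(v+1)\ge b_j(v+1)$, together with $b_j(v+1)\ge b_j(u+1)\ge b_j(u)\ge b_j(1)$ for $v\ge u\ge 1$ and $(1-\nu)b_j(1)\ge \sigma_j$, each such difference is lower-bounded by $(1-\nu)b_j(1)-\sigma_j\ge 0$.

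The main obstacle is handling the boundary between Case~A and Case~B correctly: the constraint formula changes at $u+v=n$, so the transition step $v=n-u\mapsto n-u+1$ must be computed separately and shown to yield the same Case~B-type expression; once this is done, the pure Case~A and Case~B ranges become routine iterations of the same bound. I also note that positivity and non-decreasingness of $b_j$ suffice for this argument, so the convexity hypothesis in the lemma is not actually invoked, consistent with the footnote in Theorem~\ref{thm:congestion-indep} concerning the weakening to convexity of $b_j(x)x$.
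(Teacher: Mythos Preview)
Your argument is correct and takes a genuinely different route from the paper. For the first part both proofs are essentially identical. For the second part, the paper splits the triangle $\{v>u\}$ into three geometric regions (according to whether $u+v\le n$ or $u+v>n$ and whether $u\le n/2$), and in each region compares the constraint at $v=u$ directly with that at $v=u+p$; to do so it exploits that both constraints are affine in $(\nu,\rho)$ and checks only the endpoints $\nu=0$ and $\nu=1$, invoking discrete convexity of $b_j(x)x$ in one of those endpoint checks. You instead prove that $C_j(u,\cdot)$ is non-decreasing on $\{u,\dots,n\}$ by lower-bounding each one-step increment $C_j(u,v+1)-C_j(u,v)$; the three cases you distinguish (both sides in Case~A, the single transition step, both sides in Case~B) produce exactly the expressions you state, and the chain $(v+1)b_j(v+1)-vb_j(v)\ge b_j(v+1)\ge b_j(\star)\ge b_j(1)$ together with $(1-\nu)b_j(1)\ge\sigma_j$ closes each of them. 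Your approach is more elementary---it avoids the extreme-point trick entirely---and, as you correctly note, uses only positivity and non-decreasingness of $b_j$, never convexity; this is consistent with (and in fact gives a direct justification of) the footnote in Theorem~\ref{thm:congestion-indep} that the result already holds under the weaker assumption.
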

\begin{proof}
We begin with the constraints obtained for $u=0$, which read as $b_j(v)v-\nu b_j(1)v-\sigma_jv\ge0$. The constraint with $v=0$ holds trivially, while the tightest constraint for $v>0$ is obtained with $v=1$ due to the fact that $b_j(v)v$ is increasing for $v>0$. This shows that it is sufficient to consider the constraint with $(u,v)=(0,1)$.

We now consider the constraints obtained for $u\ge1$ and divide the proof in three parts, according to the regions in \cref{fig:figure2proof}.
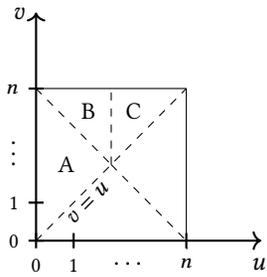
\begin{figure}[h!]
\begin{center}
\begin{tikzpicture}[scale=1]
\draw[thick,->] (0,0) -- (3,0);
\draw[thick] (3.2,-0.1) -- (3.2,-0.1) node[anchor=north east]{$\u$};
\draw[thick,->] (0,0) -- (0,3); 
\draw[thick] (0,3.2) -- (0,3.2) node[anchor=north east]{$\v$};
\draw[-] (2,0) -- (2,2) ;
\draw[-] (0,2) -- (2,2) ;
\draw[dashed] (0,2) -- (1,1) ;
\draw[dashed] (1,1) -- (1,2) ;
\draw[dashed] (1,1) -- (2,0) ;
\draw[dashed] (0,0) -- (2,2) ;
\node[rotate = 45] at (0.7,0.5) {\small$\v=\u$};
\node at (0.4,1) {\small A};
\node at (0.7,1.7) {\small B};
\node at (1.3,1.7) {\small C};
\draw[thick] (0,0.1) -- (0,-0.1) node[anchor=north]{\footnotesize$0$};
\draw[thick] (0.5,0.1) -- (0.5,-0.1)node[anchor=north]{\footnotesize$1$};
\draw[thick] (2,0.1) -- (2,-0.1);
\draw[thick] (2,-0.1) -- (2,-0.1)node[anchor=north]{\footnotesize$n$};
\node[thick] (a) at (1.25,-0.3) {$\dots$};
\draw[thick] (0.1,0) -- (-0.1,0)node[anchor=east]{\footnotesize$0$};
\draw[thick] (0.1,0.5) -- (-0.1,0.5)node[anchor=east]{\footnotesize$1$};
\node[thick] (a) at (-0.3,1.25) {$\vdots$};
\draw[thick] (0.1,2) -- (-0.1,2)node[anchor=east]{\footnotesize$n$};
\end{tikzpicture}
\vspace*{-2mm}
\end{center}
\caption{Regions A, B, and C utilized in proof of \cref{lem:v>u_does_not_matter}.\vspace*{-2mm}}
\label{fig:figure2proof}
\end{figure}
\begin{itemize}
\item In the region where $v>u$ and $u+v\le n$ (Region A in \cref{fig:figure2proof}), we show that if the constraint obtained with $v=u$ holds, then the constraints with $v>u$ also hold. Note that feasible values of $\nu$ are upper bounded by $\nu\le 1$. This is because the constraint with $(u,v)=(0,1)$ reads as $(1-\nu)b_j(1)\ge \sigma_j$, and since $\sigma_j\ge0$, $b_j(1)>0$, every feasible $\nu$ must satisfy $1-\nu\ge0$. The constraint with $v=u+p$, $p\ge1$ read as $b_j(u+p)(u+p) - \rho b_j(\u)\u + \nu [b_j(\u)\u - b_j(\u+1)(u+p)] - \sigma_j p \geq 0$, the tightest of which is obtained for the largest feasible value of $\sigma_j$, that is $\sigma_j=(1-\nu)b_j(1)$. The constraint with $v=u$ reads as $ub_j(u)-\rho u b_j(u)+\nu u(b_j(u)-b_j(u+1))\ge0$. Therefore, we intend to show that for every $\rho$ and $0\le\nu\le1$ satisfying 
\[ub_j(u)-\rho u b_j(u)+\nu u(b_j(u)-b_j(u+1))\ge0,\quad\text{it also holds for all $p\ge 1$ that}\]
\[b_j(u+p)(u+p) - \rho b_j(\u)\u + \nu [b_j(\u)\u - b_j(\u+1)(u+p)] - (1-\nu)b_j(1) p \ge 0.\] 
Since both constraints describe straight lines in the plane $(\nu,\rho)$, it suffices to verify that this is true at the extreme points $\nu=0$ and $\nu=1$. 

When $\nu=0$ the constraint with $v=u$ and $v=u+p$ read as $\rho b_j(\u)\u\le b_j(\u)\u$, and  $ \rho b_j(\u)\u \le b_j(u+p)(u+p)  -b_j(1) p $. We therefore intend to show that 
\be
\label{eq:convexity-const-toll}
b_j(\u)\u\le b_j(u+p)(u+p)  -b_j(1) p \iff \frac{b_j(u+p)(u+p)-ub_j(u)}{p}\ge b_j(1),
\ee
which holds since $b_j(u)u$ is convex and $b_j(u)$ is non-decreasing so that $\frac{b_j(u+p)(u+p)-ub_j(u)}{p}\ge b_j(u+1)(u+1)-ub_j(u)\ge b_j(u+1)\ge b_j(1)$. When $\nu=1$, following a similar reasoning, we are left to show that $ub_j(u)\le b_j(u+p)(u+p)-pb_j(u+1)$, which holds thanks to the non-decreasingness of $b_j(u)$, indeed $b_j(u+p)(u+p)-pb_j(u+1)\ge b_j(u+p)u\ge  b_j(u)u$.
\item In the region where $v>u$,  $u+v\ge n$ and $u\le n/2$ (Region B in \cref{fig:figure2proof}), we intend to show that the following constraints hold $b_j(\v)\v - \rho b_j(\u)\u + \nu [b_j(\u)(n-\v) - b_j(\u +1)(n-\u)] + \sigma_j(\u-\v) \geq 0 $. We do so by observing that the proof of the previous point did not require at all that $u+v\le n$. Therefore, exploiting the same proof, we have $b_j(\v)\v - \rho b_j(\u)\u + \nu [b_j(\u)\u - b_j(\u+1)\v] + \sigma_j(\u - \v) \geq 0$ also for $u+v\ge n$, i.e., in the region B of interest. We exploit this to conclude, and, in particular, we show that the satisfaction of latter constraint implies the desired result. Towards this goal we need to show that, for $u+v> n$ it is  
\[
 b_j(\u)(n-\v) - b_j(\u +1)(n-\u) 
 \ge  b_j(\u)\u - b_j(\u+1)\v 
\Leftrightarrow
 (b_j(\u)- b_j(\u +1))(n-\u-\v)\ge0,
 \]
which holds due to the non-decreasingness of $b_j$ and to $u+v>n$.
 
\item In the region where $v>u$,  $u+v\ge n$ and $u> n/2$ (Region C in \cref{fig:figure2proof}), we use the same approach of that in the first point. In particular, we intend to show that the when the constraints with $v=u$ hold, also the constraints with $v>u$ do so. Following a similar reasoning as in the above, this amount to showing for every $\rho$ and $0\le\nu\le1$ satisfying 
\[b_j(\u)\u - \rho b_j(\u)\u + \nu (n-\u)(b_j(\u) - b_j(\u +1))  \geq 0,\qquad\text{it also holds for all $p\ge 1$ that}\]
\[b_j(\u+p)(\u+p) - \rho b_j(\u)\u + \nu [b_j(\u)(n-\u-p) - b_j(\u +1)(n-\u)] -(1-\nu)b_j(1)p \geq 0.\]
Since both constraints describe straight lines in the plane $(\nu,\rho)$, it suffices to verify that this is true at the extreme points $\nu=0$ and $\nu=1$. When $\nu=0$ we are left with $b_j(\u)\u\le b_j(u+p)(u+p)  -b_j(1) p$, which we already proved in \eqref{eq:convexity-const-toll}. When $\nu=1$, we need to show that $b_j(u+p)(u+p)-pb_j(u)\ge ub_j(u)$, which holds by non-decreasingness of~$b_j$.
\end{itemize}
\end{proof}


\bibliographystyle{ACM-Reference-Format}
\bibliography{references.bib}

\end{document}